\theoremstyle{plain}
\newtheorem{theorem}{Theorem}[section]
\newtheorem{lemma}[theorem]{Lemma}
\newtheorem{proposition}[theorem]{Proposition}
\newtheorem{hyp}[theorem]{Assumption}
\newtheorem*{hyp+}{Stronger assumption}
\theoremstyle{definition}
\newtheorem{remark}[theorem]{Remark}
\newtheorem*{remark*}{Remark}
\newtheorem{example}[theorem]{Example}
\newcommand\dis
\newcommand\C{{\mathbf C}}
\newcommand\R{{\mathbf R}}
\newcommand\N{{\mathbf N}}
\newcommand\Sch{{\mathcal S}}
\newcommand\F{\mathcal F}
\newcommand\M{\mathcal M}
\renewcommand\O{\mathcal O}
\newcommand{\LL}{\mathcal{L}}
\newcommand\virgp{\raise 2pt\hbox{,}}
\renewcommand\le{\leslant}
\renewcommand\ge{\geslant}
\def\virgp{\raise 2pt\hbox{,}}
\def\({\left(}
\def\){\right)}
\def\<{\left\langle}
\def\>{\right\rangle}
\def\le{\leqslant}
\def\ge{\geqslant}
\newcommand\Eq[2]{\mathop{\sim}\limits_{#1\rightarrow#2}}
\newcommand\Tend[2]{\mathop{\longrightarrow}\limits_{#1\rightarrow#2}}
\def\d{\partial}
\def\g{\gamma}
\def\eps{\varepsilon}
\def\l{\lambda}
\def\si{{\sigma}}
\newcommand{\tQ}{\widetilde{Q}}
\DeclareMathOperator{\vect}{span}
\numberwithin{equation}{section}
\DeclareMathOperator{\re}{Re}
\DeclareMathOperator{\im}{Im}
\begin{document}

\title[Minimal blow-up for inhomogeneous NLS]{Minimal blow-up
  solutions to the mass-critical inhomogeneous NLS equation} 
\author[V. Banica]{Valeria Banica}
\address[V. Banica]{D\'epartement de Math\'ematiques\\ Universit\'e
  d'Evry\\ Bd. F.~Mitterrand\\ 91025 Evry\\ France} 
\email{Valeria.Banica@univ-evry.fr}
\author[R. Carles]{R\'emi Carles}
\address[R. Carles]{Univ. Montpellier~2\\Math\'ematiques
\\CC~051\\F-34095 Montpellier}
\address{CNRS, UMR 5149\\  F-34095 Montpellier\\ France}
\email{Remi.Carles@math.cnrs.fr}
\author[T. Duyckaerts]{Thomas Duyckaerts}
\address[T. Duyckaerts]{D\'epartement de Math{\'e}matiques\\
  Universit{\'e} de Cergy-Pontoise\\CNRS UMR 8088\\ 
2 avenue Adolphe Chauvin\\ BP 222, Pontoise\\ 95302 Cergy-Pontoise
cedex\\ France}
\email{tduyckae@math.u-cergy.fr}
\begin{abstract}
  We consider the mass-critical focusing nonlinear Schr\"odinger
  equation in the presence of an external potential, when the
  nonlinearity is inhomogeneous. We show that if the inhomogeneous
  factor in front of the nonlinearity is sufficiently flat at a
  critical point, then there exists a
  solution  which blows up in finite time with the maximal
  (unstable) rate at this point. In the case where
    the critical point is a maximum, this solution has minimal mass
    among the blow-up solutions. As a corollary, we also obtain
    unstable blow-up solutions of the mass-critical Schr\"odinger
    equation on some surfaces. The proof is based 
  on properties of the linearized operator around the ground state,
  and on a full use of the invariances of the equation with an
  homogeneous nonlinearity and no potential, \emph{via} time-dependent
  modulations.  
\end{abstract}
\maketitle

\section{Introduction}
\label{sec:intro}

\subsection{Setting of the problem and main result}
We consider the equation
\begin{gather}
\label{NLS}
i\partial_t u +\Delta u -V(x)u+g(x)|u|^{4/d}u=0,\quad x\in \R^d,\\
\notag
u_{\mid t=0}=u_0\in H^{1}\(\R^d\)
\end{gather}
where $d=1$ or $d=2$, $g$ and $V$ are real smooth functions on $\R^d$,
bounded as well as 
their derivatives, and $g$ is positive at least in an open subset of
$\R^d$. We investigate blowing up solutions to \eqref{NLS}. One of the 
applications that we have in mind is the study of finite time blow-up
for solutions to the nonlinear Schr\"odinger equation on surfaces. The
link between these two problems is detailed in
\S\ref{sec:presentationsurf} below.
\smallbreak

First, let us recall some classical arguments (see
e.g. \cite{Ca03}). The nonlinearity is 
energy-subcritical, so for any initial condition $u_0\in H^1$, there
exists a maximal interval of existence $]T_-(u_0),T_+(u_0)[$, and a
solution $u$ of \eqref{NLS} such that 
$$ u\in C(]T_-,T_+[,H^1).$$
Furthermore if $T_+<+\infty$, then $\dis\lim_{t\rightarrow T_+} \|\nabla
u(t)\|_{L^2}=+\infty.$
The mass $M=\|u(t)\|^2_{L^2}$ and the energy 
$$ E=\int \(\frac{1}{2}|\nabla u(t,x)|^2+\frac{1}{2}
V(x)|u(t,x)|^2-\frac{1}{\frac{4}{d}+2} g(x)|u(t,x)|^{\frac{4}{d}+2}\)dx$$ 
are independent of $t\in ]T_-,T_+[$. 
\smallbreak

We consider the ground state $Q$, which
is (up to translations) the unique
positive solution of the equation   
$$ \Delta Q+Q^{1+4/d}=Q, \quad x\in \R^d.$$
Recall that $Q$ is $C^{\infty}$, radial, and exponentially decreasing
at infinity. Furthermore, $Q$ is the critical point for the
Gagliardo--Nirenberg inequality (\cite{We83})
\begin{equation}
\label{GN}
\|\psi\|_{L^{2+4/d}}^{2+4/d}\le C\,\|\nabla
\psi\|_{L^2}^2 \,\|\psi\|_{L^2}^{4/d},\qquad\forall \psi\in
H^1\left(\R^d\right). 
\end{equation}
In the homogeneous case $V=0$, $g=1$, the equation 
\begin{equation}
  \label{eq:nls}
  i\d_t u+\Delta u + \lvert u\rvert^{4/d}{u}=0
\end{equation}
is stable by the 
pseudo-conformal transformation: 
if $u$ is a solution of \eqref{eq:nls}, so is $\widetilde{v}(t,x)$ defined by
\begin{equation}
\label{pseudo_conformal}
\widetilde v(t,x)=(\mathcal{T} u)(t,x):=\frac{e^{i\frac{|x|^2}{4t}}}{t^{d/2}}
\overline{u}\left(\frac 1t,\frac xt\right). 
\end{equation} 
Applying this transformation to the stationary
solution $e^{it}Q$, one gets a solution of Equation \eqref{eq:nls}
$$S(t,x)=e^{-i/t}\,\frac{e^{i\frac{|x|^2}{4t}}}{t^{d/2}}\, 
Q\left(\frac{x}{t}\right),$$
that blows up at time $t=0$, and such that $\|\nabla
u(t)\|_{L^2}\approx \frac 1t$  
as $t\to 0$. A classical argument shows, as a consequence of
\eqref{GN} that this is  
the minimal mass solution blowing-up in finite time (see \cite{We83}). 
\smallbreak

When equation \eqref{eq:nls} is perturbed in such a way that the
pseudo-conformal transformation is no longer valid, there are only few known
examples of blow-up solutions with the same growth rate. 
Consider the same equation \eqref{eq:nls}
posed on an open subset of $\R^d$ (with Dirichlet or Neumann boundary
conditions) or on a flat torus. Then one can construct  
blow-up solutions as perturbation of $S(t,x)$ with an exponentially
small error when $t\to 0$: see \cite{OgTs91} for $d=1$ and
\cite{BuGeTz03} for $d=2$.  
The proof relies on a fixed point
argument around a truncation of $S(t,x)$. The linear term is
considered as a source term, and is controlled in spaces of functions
decaying exponentially in time. This approach was first used in
\cite{Me90} to construct solutions with several blow-up points. 
\smallbreak

The recent work \cite{KrLeRa09} is devoted to a $4$-dimensional 
mass critical Hartree equation, with an inhomogeneous kernel. In the
corresponding homogeneous case, when the Hartree term is
$(|x|^{-2}\ast |u|^2)u$,  
\eqref{pseudo_conformal} leaves the equation invariant, yielding a blow-up
solution analogous to $S(t,x)$.  
Under the assumption that the perturbation
  vanishes at some large order at the 
blow-up point, a pseudo-conformal, minimal mass blow-up solution of
the perturbed equation is constructed. In this
case, the solution is only a 
polynomial perturbation of the explicit ground state pseudo-conformal
blow-up solution, and the proof of \cite{OgTs91} and \cite{BuGeTz03}
is no longer valid. The construction of \cite{KrLeRa09} relies on an
adaptation of an argument of Bourgain and Wang \cite{BoWa97}. 
\smallbreak

In the general setting of
\eqref{NLS}, the strategy of
\cite{OgTs91} and \cite{BuGeTz03} does not work either unless both $g$ and $V$
are constant around the blow-up point. The argument of Bourgain and
Wang is easy to adapt and gives, as in \cite{KrLeRa09}, a  minimal
mass solution under strong flatness conditions on $g$ and $V$ at 
the blow-up point (see Remark~\ref{rem:easy} and
Section~\ref{sec:plain}). These flatness conditions and the
concentration of the solution at the blow-up point imply that the
terms induced by $g$ and $V$ are small at the blow-up time. Our goal
is to weaken as much as possible 
these conditions: we construct blow-up solutions for any bounded potential
$V$ with bounded derivatives, assuming only a 
vanishing condition to the order $2$ on $g-g(x_0)$ at the blow-up
point $x_0$. We assume for simplicity that $x_0=0$ and that $g(0)=1$,
the general case $x_0\in \R^d$, $g(x_0)>0$ follows by space translation
and scaling. 
For $s\ge 0$, we denote
$$ \Sigma^s=\left\{\psi\in H^s\left(\R^d\right)\;\Big|\; |x|^s \psi\in
  L^2\left(\R^d\right)\right\}=H^s\left(\R^d\right)\cap
\F\left(H^s\left(\R^d\right)\right),$$  
and we shall drop the index for $\Sigma^1$. Our main result  is the following:

\begin{theorem}\label{thblup}
Let $d=1$ or $d=2$ and $V\in C^2(\R^d;\R)$, $g\in
C^4(\R^d;\R)$. Assume that $\d^\beta V\in L^\infty$ for $|\beta|\le
2$, $\d^\alpha g\in L^\infty$ for $|\beta|\le 4$, and   
\begin{equation}
  \label{eq:hypg}
 g(0)=1\quad ;\quad 
\frac{\partial g}{\partial x_j}(0)=\frac{\partial^2 g}{\partial
   x_j\partial x_k}(0)=0,\; 1\le j,k\le d. 
\end{equation}
Then there exist $T>0$, $u \in C(]0,T[,\Sigma)$ solution of
\eqref{NLS} on $]0,T[$ such that 
\begin{equation}
\label{UQ}
 \left\lVert u(t)-\widetilde{S}(t)\right\rVert_\Sigma\underset{t\rightarrow 
 0^+}{\longrightarrow}0,\text{ with }\widetilde{S}(t,x)=
 e^{-itV(0)}\,\frac{e^{i\frac{|x|^2}{4t}-i\theta\(\frac{1}{t}\)}}{\l(t)^{d/2}}\, 
Q\(\frac{x-x(t)}{\l(t)}\) ,
\end{equation}
where $\theta,\l$ are continuous real-valued functions and $x(t)$ is a
continuous $\R^d$-valued function such 
that 
\begin{align*}
 \theta
(\tau)&=\tau +o(\tau) \quad \text{as } \tau\to +\infty,\\
\l(t)& \sim t\text{ and }|x(t)|=o(t)\quad \text{as }t\to 0^+. 
\end{align*}
\end{theorem}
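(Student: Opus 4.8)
The plan is to construct $u$ as a perturbation of the explicit pseudo-conformal blow-up profile, working in the variables where that profile becomes the stationary ground state. First I would pass to the pseudo-conformal frame: writing $u(t,x)$ via an ansatz of the form $u(t,x) = \lambda(t)^{-d/2} e^{i\gamma(t)} e^{i b(t)|y|^2/4} w(s,y)$ with $y = (x-x(t))/\lambda(t)$ and $\tfrac{ds}{dt}=\lambda(t)^{-2}$, so that the exact blow-up profile $S$ (with $g\equiv 1$, $V\equiv 0$) corresponds to $w\equiv Q$, $\lambda(t)=t$, $b(t)=-1$, $\gamma(t)=-1/t$. In these self-similar coordinates the equation for $w$ becomes a perturbed version of the stationary equation $\Delta Q - Q + Q^{1+4/d}=0$, the perturbation coming from (i) the inhomogeneity $g(x(t)+\lambda(t)y)-1$, which by the vanishing hypothesis \eqref{eq:hypg} is $O(\lambda(t)^3)$ on the (essentially compact, exponentially localized) support of $Q$, hence exponentially small in $s$ once $\lambda(t)\sim t\sim 1/s$; (ii) the potential term $V(x(t)+\lambda(t)y)$, which after the pseudo-conformal change of variables contributes a term of size $\lambda(t)^2 V \sim t^2$, the genuinely dangerous piece; and (iii) error terms generated by the time-dependent modulation parameters $(\lambda,b,\gamma,x)$.

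Next I would set up the modulation equations. I expect to impose orthogonality conditions on the remainder $\varepsilon := w - Q$ with respect to the kernel of the linearized operator $L$ around $Q$ (spanned by $Q$, $\nabla Q$, $|y|^2 Q$, $\Lambda Q$ and their symplectic partners), choosing the evolution of $(\lambda, b, \gamma, x)$ precisely so as to cancel the resonant directions; this is exactly the "full use of the invariances of the equation with an homogeneous nonlinearity and no potential, via time-dependent modulations" advertised in the abstract. The linearized flow $e^{sL}$ restricted to the orthogonal complement is (at worst polynomially) controlled, so the remainder equation $\partial_s \varepsilon = L\varepsilon + (\text{modulation errors}) + (\text{nonlinear}) + (\text{source from } g, V)$ can be solved by a fixed-point / bootstrap argument on a time interval $[s_0, +\infty)$ (i.e. $t\in (0,T]$), prescribing $\varepsilon \to 0$ as $s\to\infty$ rather than as an initial value problem — this is the Bourgain--Wang-type backward construction, integrating from the blow-up time. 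Concretely I would build a sequence $u_n$ solving \eqref{NLS} on $[1/n, T]$ with data at $t=1/n$ equal to the truncated profile $\widetilde S(1/n)$, derive uniform-in-$n$ bounds $\|u_n(t)-\widetilde S(t)\|_\Sigma \le C t^{\delta}$ for some $\delta>0$ via the modulation/energy estimates, and extract a limit.

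The main obstacle will be step (ii): controlling the contribution of the potential $V$, which is only assumed bounded (with bounded derivatives) and is emphatically \emph{not} flat at $0$. After the pseudo-conformal transformation this term is $O(t^2)$ pointwise but it does not decay on the whole of $\R^d$ (only $V(x(t)+\lambda(t) y) - V(0)$ is small, and only on the support of the localized profile), so one cannot treat it as an exponentially small source as in \cite{OgTs91,BuGeTz03,KrLeRa09}; instead one must absorb the leading part $e^{-itV(0)}$ into the phase (as already visible in the statement of $\widetilde S$) and estimate the remainder $t^2(V(x(t)+\lambda(t)y)-V(0))$ using the $\Sigma$-norm control on $\varepsilon$ together with the Taylor bound $|V(x(t)+\lambda(t)y)-V(0)| \lesssim |x(t)| + \lambda(t)|y|$. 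This forces the construction to take place in $\Sigma$ (rather than just $H^1$), requires tracking the weighted norm $\| |y|\varepsilon\|_{L^2}$ throughout the bootstrap, and makes the modulation law for $x(t)$ genuinely coupled to $\nabla V(0)$; closing the estimates for $|x(t)|=o(t)$ and $\lambda(t)\sim t$, $\theta(\tau)=\tau+o(\tau)$ simultaneously with the $\Sigma$-bound on $\varepsilon$ is the crux. The remaining technical points — coercivity of $L$ on the orthogonal complement modulo the kernel, local well-posedness in $\Sigma$, and the compactness argument for passing to the limit $n\to\infty$ — are standard once the uniform estimates are in hand.
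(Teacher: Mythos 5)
Your overall frame (pseudo-conformal/self-similar change of variables, modulation on the invariances, backward construction from the blow-up time) is the same as the paper's, but two of your key quantitative claims are wrong in a way that hides the actual difficulty. First, with $\lambda(t)\sim t$ and $ds/dt=\lambda^{-2}$ one has $s\sim 1/t$, so the inhomogeneity term $g(x(t)+\lambda(t)y)-1=O(\lambda^3)$ on the support of $Q$ is only \emph{polynomially} small, $O(s^{-3})$, not exponentially small; likewise the potential contributes $O(s^{-2})$. Since the linearized group grows like $s^{3}$ on the secular subspace $S$ (Weinstein), these sources give contributions of size $O(1)$ and $O(s)$ respectively on the secular modes, so neither term can be treated as a harmless source: this is exactly why the truncation arguments of Ogawa--Tsutsumi and Burq--G\'erard--Tzvetkov, and the plain Bourgain--Wang fixed point that the paper runs in Section~\ref{sec:plain} under strong flatness assumptions, fail under the mere hypothesis \eqref{eq:hypg}, and why the modulation must be tuned precisely so as to cancel the secular projections of the source. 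Note also that on the orthogonal complement $M$ the group is bounded, not merely ``polynomially controlled''; the polynomial growth lives on $S$, which is where the battle is fought.

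Second, your modulation cannot ``cancel the resonant directions'' wholesale: the generalized kernel of $iL$ has dimension $2d+4$, while the invariances give only $2d+3$ parameters --- and your ansatz $\lambda^{-d/2}e^{i\gamma}e^{ib|y|^2/4}w\left((x-x(t))/\lambda\right)$ has only $d+3$ of them, since it omits the Galilean parameter (a phase linear in $y$), without which you cannot control both families of secular modes attached to $\partial_{x_\ell}Q$ and $ix_\ell Q$. Even with the full $2d+3$-parameter family, one secular direction (the $\widetilde Q$-direction, $n_6$ in the paper) survives and must be controlled by structure rather than by modulation: in the paper, the source $R_0=(1-g_p)F(Q)+V_pQ$ is purely imaginary while $m_6$ is real, so $\langle R_0,m_6\rangle=0$, and the cubic Taylor term of $g$ integrates to zero against the radial functions $m_1,m_5$; this supplies exactly the extra decay needed to integrate the corresponding secular ODEs. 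Your proposal never addresses this mismatch, and without it the bootstrap for the secular components does not close. (The remaining difference --- you would conclude by compactness on approximate solutions with data $\widetilde S(1/n)$, whereas the paper, whose modulation-dependent map is only continuous, uses a Schauder fixed point in the modulated variables together with energy estimates for the time-dependent operator $iL-P_MZ_p$ --- is a legitimate alternative in principle, but it does not substitute for the missing secular analysis.)
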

\begin{remark}\label{rem:moreexplicit}
 We easily infer the asymptotics
 \begin{equation*}
    \left\lVert u(t)-\widetilde
      S_2(t)\right\rVert_{\F(H^1)}\underset{t\rightarrow  
 0^+}{\longrightarrow}0,\text{ with }\widetilde{S}_2(t,x)=
 e^{-itV(0)}\,\frac{e^{i\frac{|x|^2}{4t}-i\theta\(\frac{1}{t}\)}}{t^{d/2}}\, 
Q\(\frac{x}{t}\). 
 \end{equation*}
Note that in this formula, we do not control the $\dot H^1$-norm, for
which a better control of $\l$ and $x(t)$ would be needed.
\end{remark}
As explained below, we
construct the blow-up solution as a perturbation of the solution
$\widetilde S_2(t,x)$.  The flatness condition on $g$ implies that the
new perturbative 
terms induced by the inhomogeneity $g$ are small as $t$ tends to $0$. 
\begin{remark}
The pseudo-conformal blow-up regime of Theorem~\ref{thblup}, where the
blow-up rate $\|\nabla u(t)\|_{L^2}$ is of order $1/t$ around $t=0$,
is unstable and non-generic, as opposed to the blow-up regime at a
rate $\left(\frac{{\log(|\log |t||)}}{{t}}\right)^{1/2}$ 
highlighted (in space dimension $1$) by G.~Perelman \cite{Pe01} (see
also \cite{MeRa05}). This
log-log regime was shown to be generic in all dimensions, under a
spectral assumption if $d\ge 2$, in a series of papers of F.~Merle and
P.~Rapha\"el (see e.g 
\cite{MeRa04,Ra05}). This assumption was checked in the case $d\le 4$,
and the main properties of the log-log regime persist for $d=5$ (see
\cite{FiMeRa06}). Theorem~\ref{thblup} may also be seen as a structural
stability property for the pseudo-conformal blow-up regime: this
regime persists under some perturbations of the equation. 
\end{remark}

\begin{remark}
Note that \eqref{UQ} implies  $\|u(t)\|_2^2=\|Q\|_2^2$. If we
assume furthermore that $|g|\le 1$, the solution constructed in
Theorem~\ref{thblup} has minimal mass for blow-up. This is consistent
with the conjecture that the non-generic blow-up occurs at the
boundary of the manifold of all blowing-up solutions. Note also that
$g$ may not remain everywhere positive: we consider a localized phenomenon.  
\end{remark}
\begin{remark}
\label{rem:easy}
Establishing Theorem \ref{thblup} is much easier if we assume that
$V-V(0)$ and $g-g(0)$ vanish to high order at $x=0$. This is the
analogue of Theorem~1 of \cite{KrLeRa09} in the 
context of Hartree equation. In Section~\ref{sec:plain}, we give, in
this less general setting, a short proof of \eqref{UQ} which is an
adaptation of 
\cite{BoWa97} and simplifies the argument of \cite{KrLeRa09}. In this
case we can assume that 
$\theta(\tau)=\tau$, $\l(t)=t$ and $x(t)=0$. The first equality should
also hold (in 
view of the recent work \cite{RS-p}) in the general context of Theorem
\ref{thblup}. The main difficulty of the 
proof of Theorem \ref{thblup} under the general assumption is to
combine the strategy of \cite{BoWa97} with modulation theory to relax
the high order flatness assumption to the weaker assumption
\eqref{eq:hypg}. This difficulty already  appears in \cite{KrSc09} in
a more delicate context (see below). 
\end{remark}
We next discuss two particular cases. If $g\equiv 1$, our theorem
shows that for any real-valued smooth potential $V$ which is bounded
on $\R^d$ as well as all its derivatives, for any point $x_0\in \R^d$,
there exists a solution of  
\begin{equation*}
  i\partial_t u+\Delta u-V(x) u+|u|^{4/d}u=0
\end{equation*}
blowing-up at $x_0$ at a pseudo-conformal rate.
Little is known about blow-up solutions for this equation, except
in some particular cases (where $V$ is unbounded) where algebraic miracles
provide a good understanding: if 
$V$ is linear in $x$, Avron--Herbst formula shows that $V$ does not
change the blow-up rate (\cite{CaNa04}). If $V(x)=\pm \omega^2|x|^2$,
$V$ changes the blow-up time, but not the blow-up rate (\cite{CaSIMA}). 
Our result shows that the $S(t)$ blow-up rate remains for any
bounded potential (\emph{e.g.} obtained after truncating the above
potential). 

\smallbreak

Equation~\eqref{NLS} in the case $V\equiv 0$ was studied by F.~Merle
in \cite{Me96}.  
Assume for the sake of simplicity that
$$ g(0)=1 \quad \text{and}\quad \forall x\neq 0,\ |g(x)|<1.$$
In this case, $g$ attains its maximum at $0$.
In \cite{Me96}, it is shown, assuming $g\in C^1$, $V=0$, and an
additional bound on $g$ and its derivative, that for any mass
$M>\|Q\|_{L^2}^2$ and close to $\|Q\|_{L^2}^2$  there exists a blow-up
solution $u$ of  \eqref{NLS} 
such that $\|u_0\|_{L^2}^2=M$. It is also shown that a critical mass
blow-up solution must concentrate at the critical point
$0$. Furthermore, if there exists $\alpha\in ]0,1[$ such that $g$
satisfies  
\begin{equation}
\label{cond_q}
\nabla g(x)\cdot x\le -|x|^{1+\alpha}
\end{equation}
for small $x$, then there is no critical mass solution. Note that this
assumption implies that $g$ is not $C^2$. The existence of minimal
mass blow-up solutions for $g$ which do not satisfy \eqref{cond_q} is
left open in \cite{Me96}. Theorem \ref{thblup} answers positively to
this question for smooth $g$, except in the critical case $\nabla
g(0)=0$ and $\nabla^2g(0)\neq 0$, which includes
the case $\alpha=1$ in \eqref{cond_q}. 
After our article was written, P.~Rapha\"el and J.~Szeftel \cite{RS-p}
have shown the
existence of a minimal mass blow-up solution   in the case where the
matrix $\nabla^2 g(0)$ is non-degenerated.  The strategy of
the proof borrows arguments due to the pioneering works of Y.~Martel
\cite{Ma05}, 
Y.~Martel and F.~Merle \cite{MaMe06}. 
The authors also show a
difficult and strong uniqueness result: this solution is (up to phase
invariance and time translation) the only minimal mass solution. This
is in the spirit of the work by F.~Merle \cite{Me93} for \eqref{eq:nls} (see
also \cite{HmKe05}, and \cite{Ba04} for partial results in the case of
a plane domain) 

Under the assumption $\nabla^2g(0)=0$, the authors of \cite{RS-p}
conjecture that the set of 
minimal mass solutions is parametrized by two additional parameters,
the energy and the asymptotic momentum. Our goal here is to give a
simple construction of critical-mass pseudo-conformal blow-up
solutions in curved geometries (see \S \ref{sec:presentationsurf}) and
we do not 
address the issue of classification of these solutions. 
\smallbreak

We do not address either the question of the existence of 
non-generic blow-up solutions of \eqref{NLS} with supercritical
mass. Examples of such solutions were constructed in
\cite{BoWa97}  for equation \eqref{eq:nls} in space dimensions $1$ and
$2$ and in \cite{KrLeRa09} for the case of Hartree nonlinearity in 
  space dimension $4$. In both cases a supercritical mass blow-up
  solution is obtained, up to a small remainder, as the sum of a minimal mass
  blow-up solution and a solution that vanishes to some order at the
  origin at the blow-up time.   
It should be possible to adapt our method to construct the same type
of solutions. Note that our case is of course simpler than the one of
Hartree-type nonlinearity, where the non-local character of the
nonlinearity appears as an important issue in this construction.  

Let us mention the conjecture, stated in \cite{Pe01},  that there is a 
codimension one submanifold of initial data of equation \eqref{eq:nls}
in $H^1$ leading to pseudo-conformal blow-up. In \cite{KrSc09}
J.~Krieger and W.~Schlag constructed, for this equation in space
dimension $1$, a set of initial data leading to this type of
blow-up. This set is, in spirit, of codimension $1$ in a space
$\Sigma^N$ ($N$ large), without being, rigorously speaking, a submanifold
of this space. The proof of \cite{KrSc09} requires a full use of the
modulations, and also very delicate dispersive estimates for the
linearized operator. 
This type of result is out of reach by our method. As a drawback, the
method of \cite{KrSc09} 
can only deal with functions with a very high regularity, whereas our
fixed point, relying on energy estimates, is essentially at an
$H^{d/2+}$ level. Our argument 
should in particular work in dimensions $d\ge 3$, although the lack
of regularity of the nonlinearity might become an issue in high
dimensions. Let us mention although the works
\cite{KrSc06,KrSc07,Be08,Be09} devoted to the constructions of stable
manifolds around solitons or stationary solutions for other
equations.

\subsection{Strategy of the proof}

The key ingredient of the proof is a result of M.~Weinstein \cite{We85} on
the properties of the linearized NLS operator around the ground state,
which implies that the instability of the linearized equation is only
polynomial, not exponential.

We first consider, as in \cite{BoWa97}, the
pseudo-conformal transformation \eqref{pseudo_conformal}.
Thus $u$ is a solution to \eqref{NLS} on $]0,T[$ if and only if
$\widetilde v$ is 
solution to the following equation on
$\left]\frac{1}{T},+\infty\right[$: 
\begin{equation}
\label{NLSbis}
i\partial_t \widetilde v+\Delta
\widetilde v-\frac{1}{t^2}V\left(\frac{x}{t}\right)\widetilde v+
g\left(\frac{x}{t}\right)|\widetilde v|^{4/d}\widetilde v=0.  
\end{equation}
Intuitively, for large time, the potential term is negligible
(it
belongs to $L^1_tL^\infty_x$, hence it is short range in the sense of
\cite{DeGe97}), and the 
inhomogeneity can be approximated by its value at the
origin. Therefore, a good asymptotic model for \eqref{NLSbis} should
be given by the solution (with the same behavior as $t\to +\infty$) to
the ``standard'' mass-critical focusing nonlinear Schr\"odinger 
equation  \eqref{eq:nls}. 
We want to construct a blow-up solution
to \eqref{NLS} by constructing a solution $\widetilde v$ to
\eqref{NLSbis} which behaves like the solitary wave $e^{it} Q(x)$
(which solves \eqref{eq:nls}) as
$t\to +\infty$. 
In the case $g=1$, there is a huge literature
concerning the existence and stability of solitary waves associated
to \eqref{NLSbis} when the potential $1/t^2V(x/t)$ is replaced by a
\emph{time independent} potential: therefore, these results seem of no
help to study the blow-up phenomenon. 

In a first approximation, we look for a solution of the form
\begin{equation}
  \label{eq:formv}
  \widetilde v=e^{it}(Q+h).
\end{equation}
 Therefore $\widetilde v$ is a solution of \eqref{NLSbis} if and only if 
\begin{equation}
\label{eqh}
i\partial_t h+\Delta
h-h-\frac{1}{t^2}V\left(\frac{x}{t}\right)(Q+h)
+g\left(\frac{x}{t}\right)|Q+h|^{4/d}(Q+h)-Q^{1+4/d}=0.
\end{equation} 
Consider the linearized operator near $Q$
\begin{equation}
\label{defL}
Lf:=-\Delta f+f-\left(\frac{2}{d}+1\right) Q^{4/d}
f-\frac{2}{d}Q^{4/d}\overline{f}. 
\end{equation} 
In \cite{We85}, it is shown that the semi-group $e^{itL}$ is bounded
in the orthogonal space of a $2d+4$ dimensional space $S$, the space
of secular modes, where it grows polynomially. This allows us to
construct the solution $h$ of \eqref{eqh} as a fixed point in a space
of functions that decay polynomially as $t\to +\infty$. Namely, we can
write \eqref{eqh} as   
\begin{equation}
\label{eqh2}
i\partial_t h-Lh = R(h),
\end{equation} 
where $R(h)$ is, roughly speaking, the sum of a source term involving $Q$,
$V$ and $g$, of a similar linear term where $Q$ is replaced by
$h$, and of a term which is  nonlinear in $h$. The latter is
essentially harmless, since we expect $h$ to be small. The first two
terms can be proved small provided that we require a sufficient vanishing
for $V$ and $g-1$ at the origin to balance the polynomial growth of
the semi-group $e^{itL}$ on $S$. This approach is sketched in 
\S\ref{sec:plain} below. 
Note that even though intuitively, it is
natural to expect $1/t^2V(x/t)$ and $g(x/t)-1$ to be negligible for
large time, proving this requires the nontrivial bounds on $e^{itL}$ shown
in \cite{We85}, since
the $S(t)$ behavior is unstable. In the case where $V$ and $g-1$ are
not too flat at the origin, more information is needed.

\smallbreak

In order to loosen the assumptions on the local behavior of $V$ and
$g$ near the origin, we use all the invariances associated to
\eqref{eq:nls} to neutralize as many
secular modes as possible. There is a 
$2d+3$ dimensional family of modulations, given by the scaling,
space-translation, gauge, Galilean, and conformal invariances. By
modulating the function $\widetilde v$ thanks to these 
transformations, we can eliminate all secular modes but one, limiting the
growth of the operator $e^{itL}$. This allows us to decrease the order
to which $V$ 
and $g-1$ vanish at the origin, so as to infer
Theorem~\ref{thblup}. As mentioned before, this 
approach is quite similar in spirit to \cite{KrSc09} for
$L^2$-critical Schr\"odinger equation, and to \cite{Be08,Be09}, where
an $L^2$-supercritical Schr\"odinger equation is considered.

One of the difficulties of our proof is to include the choice of the
modulation parameters in the definition of the operator defining the
fixed point. In this context, the contraction property seems hard to
check: we manage to prove continuity only
(see Proposition~\ref{prop:continuityPhi}). We bypass this difficulty by
using the Schauder fixed point 
theorem. A key step is to obtain energy estimates on an evolution 
equation with a time-dependent operator, which is the sum of the
linearized operator $L$ and a time-dependent perturbative term which
is given by the modulation.

\subsection{Application to NLS on surfaces}
\label{sec:presentationsurf}

Let us first recall that the other known 
blow-up regime, the log-log regime, is not only more stable
on $\R^d$: it is structurally
stable, in the sense that it persists in other geometries. The case of
a domain was settled by F.~Planchon and P.~Rapha\"el \cite{PlRa07},
and the one of a general Riemannian manifold by N.~Burq, P.~G\'erard
and P.~Rapha\"el \cite{BuGeRa08P}. 
\smallbreak

As a consequence of Theorem \ref{thblup}, we are able to construct
blow-up solutions --- with $1/t$ blow-up speed, and with profile related
to $Q$ --- on surfaces flat enough at the blow-up point. To this purpose
we consider rotationally symmetric manifolds. Such a manifold $M$ is a
Riemannian manifold of dimension $2$, given by the metric 
$$ds^2=dr^2+\phi^2(r)\,d\omega^2,$$
where $d\omega^2$ is the metric on the sphere $\mathbf S^{1}$, and
$\phi$ is a smooth function $C^\infty([0,\infty[)$, positive  on
$]0,\infty[$, such that $\phi^{({\rm even})}(0)=0$ and $\phi'(0)=1$. These
conditions on $\phi$ yield a smooth manifold (see e.g.
\cite{Pe98book}). For example, $\R^2$ and
the hyperbolic space $\mathbf H^2$ are such manifolds, with
$\phi(r)=r$ and $\phi(r)=\sinh 
r$, respectively. The volume element is $\phi(r)$, and the distance to
the origin from a point of coordinates $(r,\omega)$ is $r$. Finally,
the Laplace--Beltrami operator on $M$ is  
$$\Delta_M=\partial^2_r+\frac{\phi'(r)}{\phi(r)}\partial_r+
\frac{1}{\phi^2(r)}\Delta_{\mathbf   S^{1}}.$$ 
Now, if we consider $\tilde{u}$ a radial solution of NLS on $M$
(recall that $d=2$) 
\begin{equation}\label{slrev}
i\partial_t \tilde{u}+\Delta_{M}\tilde{u}+|\tilde{u}|^2\tilde{u}=0,
\end{equation}
then the radial function $u$ defined by
$$\tilde{u}(t,r)=u(t,r)\,\left(\frac{r}{\phi(r)}\right)^{1/2}$$
satisfies Equation~\eqref{NLS} with
\begin{equation*}
  V(r)=\frac{1}{2}\frac{\phi''(r)}{\phi(r)}-\frac{1}{4}\,
\left(\left(\frac{\phi'(r)}{\phi(r)}\right)^2-\frac{1}{r^2}\right),
\text{ and }g(r)=\frac{r}{\phi(r)}.
\end{equation*}
Therefore we are in the framework of Theorem~\ref{thblup}, up to
conditions of flatness of the metrics at the blow-up point and of
boundedness of $V$ and $g$ at infinity. These boundedness conditions
corresponds to conditions on the growth of the unit ball volume of the
manifold at infinity.  

This proves the existence of a blow-up solution
of speed $1/t$ and critical mass for such surfaces. 
Notice that the
hyperbolic space $\phi(r)=\sinh r$ correspond to the borderline case
$\partial_r^2 g(0)\neq 0$, which we do not reach with our method. This should be covered, however, by an extension of the work of \cite{RS-p} to equations with a linear potential.
The 
motivation for this case would be to complete the available
information: the virial identity yields a sufficient blow-up condition
which is weaker than in the Euclidean case (\cite{Ba07}), and for
defocusing nonlinearities (or focusing nonlinearities with small
data), the geometry
of the hyperbolic space strongly alters scattering theory, since long
range effects which are inevitable in the Euclidean case, vanish there
(see \cite{BaCaSt08,IoSt09,AnPi09}; see also \cite{BaDu07,BaCaDu09}).

We conclude this subsection by giving explicit examples of surfaces
satisfying the above assumptions.  
\begin{example}[Compact perturbations of the hyperbolic and Euclidian planes]
Let $c_0,d_0\in \R$ and consider $\phi\in C^{\infty}([0,+\infty[)$
such that $\phi(r)=r+c_0 r^5+ \O(r^7)$ as $r\to 0$, and 
$\phi(r)=\sinh(r)+d_0$ or $\phi(r)=r+d_0$ for large $r$.  
Then there exists a solution $\tilde{u}$ of \eqref{slrev} that blows
up at time $t=0$ at the origin $r=0$, and such that $\left\|\nabla
  \tilde{u}(t)\right\|_{L^2}\approx 1/t$ as $t\to 0$. 
An example of such a surface in the case $\phi(r)=r+d_0$ for large $r$
is given by the surface $M$ of $\R^3$ equipped with the induced
Euclidean metric and defined by the equation 
$x=f(y^2+z^2)$, where $f:\R^+\to \R^+$ is a smooth nondecreasing function
such that $f(0)=f'(0)=0$ and $f(s)=x_0>0$ for large $s$.
\end{example}
\begin{remark}
Many simple manifolds do not enter in our framework, as they do not
satisfy the boundedness conditions on $V$ and $g$ at
infinity. Examples are given by the surfaces of $\R^3$ defined by the
equation $x=(y^2+z^2)^{k}$, $k\ge 2$, with the
  induced Euclidean metric, which are spherically symmetric 
manifolds such that $g=r/\phi(r)$ satisfies assumption
\eqref{eq:hypg}, but grows polynomially at infinity. We do not know if
this is only a technical point and it would be interesting, in view of
these examples, to relax the boundedness conditions on $V$ and $g$ at
infinity to a polynomial growth. The case of non-flat compact
surfaces, even with strong symmetry assumptions, is also completely
open. 
\end{remark}

\subsection{Structure of the paper}
\label{sec:structure}

In \S\ref{sec:plain}, we sketch the proof of Theorem~\ref{thblup}
under strong flatness assumptions on $V$ and $g$ near the
origin. The result then follows in a rather straightforward fashion
from a standard fixed point argument, relying on estimates on the
linearized operator $L$ due to M.~Weinstein. In \S\ref{sec:modul}, we
introduce the full family of modulations, in order to reduce the proof
of Theorem~\ref{thblup}. In \S\ref{sec:linear}, we recall some more
precise properties on the linearized operator $L$, which are crucial
for tuning the modulation, as presented in
\S\ref{sec:tuning}. Once the modulation is settled, we study the
non-secular part of the remainder in \S\ref{sec:nonsecular}. The proof
of Theorem~\ref{thblup} is then completed in \S\ref{sec:fixed}, thanks
to compactness arguments. 
Minor technical results are detailed in two appendices, for
the sake of completeness.

\section{Proof of a weaker result}\label{sec:plain}
In this section, we sketch the proof of
Theorem~\ref{thblup} with
\begin{equation*}
  \theta(\tau)=\tau,\quad \l(t)=t,\quad x(t)=0,
\end{equation*}
(hence $\widetilde S=\widetilde S_2$ in Remark~\ref{rem:moreexplicit})
under the 
\begin{hyp}\label{hyp:stronger}
  Let $d=1$ or $2$, and $V,g\in C^\infty(\R^d;\R)$. Assume that for
all $\alpha$, $\d^\alpha g,\d^\alpha V\in L^\infty$, and that there
exist $m_V\ge 7$ and $m_{g}\ge 9$ such that:
\begin{align*}
 \forall |\beta|\le m_V, \quad
 \lvert\partial^{\beta}V(x)\rvert&\le 
C_{\beta}|x|^{m_V-|\beta|}\text{ if }|x|\le 1,\\
\forall |\beta|\le m_{g},\quad
 \lvert\partial^{\beta}(g(x)-1)\rvert&\le 
C_{\beta}|x|^{m_{g}-|\beta|}\text{ if }|x|\le 1.
\end{align*}
\end{hyp}
Recall that the linearized operator $L$ is defined by
\begin{equation*}
Lf:=-\Delta f+f-\left(\frac{2}{d}+1\right) Q^{4/d}
f-\frac{2}{d}Q^{4/d}\overline{f}. 
\end{equation*} 
We will need the following property of $L$, which is a consequence of
\cite{We85} (see also \cite[Proposition
1.38]{Bo99BO}). 
\begin{proposition} 
\label{propWe}
One can decompose $H^1\(\R^d\)$ as $H^1=S\oplus M$, with $S$ (of
finite dimension) and $M$ stable by 
$e^{itL}$ and such that, if $P_M$ and $P_S$ denote the projections on
$M$ and $S$, respectively, the following holds. If $s\ge 1$
and $\psi\in H^s$, then for all $t\ge 1$, 
\begin{align*}
   \left\|e^{itL}P_S(\psi)\right\|_{H^s}& \le C\(1+t^3\)\int
|\psi(x)|e^{-c|x|}dx,\\
\left\|e^{itL}P_M(\psi)\right\|_{H^s}&\le C\|\psi\|_{H^s}.
\end{align*}
Also, if $s'\ge 1$ and $\psi\in \Sigma^{s'}$, then for all $t\ge 1$,
\begin{align*}
  \left\lVert \lvert x\rvert^{s'}e^{itL}P_S(\psi)\right\rVert_{L^2}&
  \le C\(1+t^3\)\int 
  |\psi(x)|e^{-c|x|}dx,\\ 
\left\lVert \lvert x\rvert^{s'}e^{itL}P_M(\psi)\right\rVert_{L^2}&\le
C\||x|^{s'}\psi\|_{L^2}+C\(1+t^{s'}\)\|\psi\|_{H^{s'}}.
\end{align*}
\end{proposition}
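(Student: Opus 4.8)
The plan is to deduce everything from M.~Weinstein's spectral analysis of the linearized flow $e^{itL}$ in \cite{We85}, upgraded from $L^2$ to $H^s$ and $\Sigma^{s'}$ by elliptic regularity and commutator bounds. First I would recall the structure of the generalized kernel of $L$ (viewed as an operator on real-vector-valued functions, i.e. identifying $f$ with $(\re f,\im f)$): it is a $2d+3$-dimensional space $N$ spanned by $iQ$, $\partial_{x_j}Q$, $x_jQ$, $i|x|^2Q$, $\Lambda Q$ (with $\Lambda=\frac d2+x\cdot\nabla$), together with one extra generalized eigenvector $\rho$ solving $L\rho = -i|x|^2 Q$ that accounts for the Jordan block of height $2$ attached to the conformal/scaling pair. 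On $N$ the flow $e^{itL}$ is explicit and grows at most polynomially in $t$; the highest-order Jordan block being of height $2$ for the $(\Lambda Q,\rho)$-pair and the $(i|x|^2Q,\dots)$-pair yields the $t^2$, and the term $x_jQ\mapsto x_jQ + 2t\,i\partial_{x_j}Q$ etc.\ — in fact I will take $S$ to be a slightly larger, $e^{itL}$-invariant complement chosen so that one has the clean bound $t^3$ stated (the precise exponent is not important, only that it is a fixed power). The decomposition $H^1 = S\oplus M$ comes from \cite{We85}: $S$ is finite-dimensional and spanned by the secular modes, $M$ is the symplectic-orthogonal complement, both are $e^{itL}$-invariant, and on $M$ one has the dispersive/coercivity estimate $\|e^{itL}P_M\psi\|_{L^2}\le C\|\psi\|_{L^2}$ uniformly in $t\ge 0$ (this is precisely the content of the boundedness part of Weinstein's theorem, using that $L$ has no embedded eigenvalues and no resonance and that $P_M\psi$ is orthogonal to the secular modes).

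Next I would promote the $L^2$ estimates to $H^s$. For the $M$-part: writing $w(t)=e^{itL}P_M\psi$, I differentiate $i\partial_t w = Lw$, commute $\langle\nabla\rangle^s$ through $L$ — the commutator $[\langle\nabla\rangle^s, Q^{4/d}\,\cdot\,]$ is bounded on $L^2$ because $Q^{4/d}$ is Schwartz — and run a Gronwall argument on $\|w(t)\|_{H^s}$. Since the unperturbed part of $L$ is $-\Delta+1$ which is an isometry-generator on $H^s$, and the potential part is lower order, one closes $\|w(t)\|_{H^s}\lesssim \|\psi\|_{H^s}$ for all $t\ge 1$ after absorbing the $t\in[0,1]$ piece. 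For the $S$-part: since $S$ is finite-dimensional, $P_S$ is given by pairing $\psi$ against finitely many exponentially decaying functions (the dual secular modes), which is exactly what produces the factor $\int|\psi(x)|e^{-c|x|}\,dx$; then $e^{itL}$ acts on a fixed finite-dimensional space of Schwartz functions with polynomially-growing matrix entries, so all norms on that space are equivalent and one reads off both the $H^s$ bound $C(1+t^3)\int|\psi|e^{-c|x|}$ and, applying $|x|^{s'}$ to Schwartz functions, the weighted bound with the same $t^3$.

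Finally, the weighted estimate on the $M$-part. The idea is to commute the weight $\langle x\rangle^{s'}$ (rather than $|x|^{s'}$, to avoid the singularity at the origin, which costs only lower-order $H^{s'}$ terms) through the equation: $i\partial_t(\langle x\rangle^{s'}w) = L(\langle x\rangle^{s'}w) + [\langle x\rangle^{s'},-\Delta]w + (\text{terms from }Q^{4/d}\text{ that decay})$. The commutator $[\langle x\rangle^{s'},-\Delta]w = -2\nabla(\langle x\rangle^{s'})\cdot\nabla w - (\Delta\langle x\rangle^{s'})w$ is of the form $\langle x\rangle^{s'-1}\partial w + \langle x\rangle^{s'-2}w$, i.e.\ controlled by $\|\langle x\rangle^{s'-1}w\|_{H^1}$; iterating (or interpolating) this down and using Duhamel with the already-established $H^{s'}$ bound on $w$ and the $L^2$-boundedness of $e^{itL}P_M$, one gets $\|\langle x\rangle^{s'}w(t)\|_{L^2}\lesssim \|\langle x\rangle^{s'}\psi\|_{L^2} + \int_1^t \|\langle x\rangle^{s'-1}w(\tau)\|_{H^1}\,d\tau \lesssim \||x|^{s'}\psi\|_{L^2} + (1+t^{s'})\|\psi\|_{H^{s'}}$, the $t^{s'}$ coming from the fact that the weight can grow at most linearly in $t$ under the Schrödinger flow and we iterate $s'$ times (this is the standard ``one derivative traded for one power of $t$'' heuristic for $\langle x\rangle e^{it\Delta}$). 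The main obstacle is exactly this last step: keeping careful track of how the non-integer weight $|x|^{s'}$, the commutators with $-\Delta$, and the projector $P_M$ interact, so that the growth stays $(1+t^{s'})$ and no secular ($t^3$-type) growth leaks in from a bad commutation with $P_S$; the cleanest route is to do the weighted estimate first for $w=e^{itL}P_M\psi$ directly (never reintroducing $P_S$) and to handle non-integer $s'$ by real interpolation between consecutive integers.
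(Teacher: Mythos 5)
Your overall skeleton is the right one, and it is consistent with what the paper actually does: the paper gives no proof of Proposition~\ref{propWe}, quoting it from \cite{We85} (and \cite[Proposition 1.38]{Bo99BO}), and the ingredients you name — the finite-dimensional secular space obtained by pairing against exponentially decaying dual modes (whence the factor $\int|\psi|e^{-c|x|}$), polynomial growth of $e^{itL}$ on that space, boundedness on the complement $M$ — are exactly the ones the paper later makes explicit in \S\ref{sec:linear} and exploits in \S\ref{sec:nonsecular}. Two inaccuracies in the secular part are worth fixing: the boundedness on $M$ in \cite{We85} comes from conservation of $\re\langle L\phi,\phi\rangle$ under $e^{itL}$ together with its coercivity on $M$ (an $H^1$ statement), not from absence of embedded eigenvalues/resonances (and an $L^2$ bound on $M$ is not what Weinstein provides); and the Jordan structure is not made of height-$2$ blocks: by \eqref{LLL_n} the chain $n_6\mapsto n_5\mapsto n_4\mapsto n_1$ has length $4$, which is precisely where the exponent $t^3$ comes from, and this exponent is used quantitatively in \S\ref{sec:plain}, so it should not be treated as "a fixed power, not important".

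The genuine gap is the uniform-in-time $H^s$ bound on the $M$-part. Commuting $\langle\nabla\rangle^s$ through the equation and running Gronwall cannot give $\|e^{itL}P_M\psi\|_{H^s}\le C\|\psi\|_{H^s}$: the commutator (and, more basically, the terms $Q^{4/d}w$, $Q^{4/d}\overline w$, which destroy $L^2$-conservation for the $\C$-antilinear part) enter under a time integral, so this argument yields at best $e^{Ct}$, or linear-in-$t$ growth if one feeds in an inductive bound on $\|w\|_{H^{s-1}}$ — never a bound uniform in $t$. Indeed the same reasoning, applied without the projection $P_M$, would "prove" uniform boundedness on all of $H^s$, contradicting the $t^3$ growth on $S$; uniform boundedness on $M$ is a structural fact, not a perturbative one. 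The standard repair (and the one implicit in the paper, see the norm equivalence used in Lemma~\ref{lem:apriori}) is: $L$ commutes with its own propagator and preserves $M$, so for $f\in M$ one writes $\|e^{itL}f\|_{H^{2\kappa+1}}\approx \|(iL)^\kappa e^{itL}f\|_{H^1}+K\|e^{itL}f\|_{H^1}=\|e^{itL}(iL)^\kappa f\|_{H^1}+K\|e^{itL}f\|_{H^1}\lesssim \|(iL)^\kappa f\|_{H^1}+\|f\|_{H^1}\approx\|f\|_{H^{2\kappa+1}}$, where the middle inequality is Weinstein's $H^1$ bound on $M$; general $s\ge 1$ then follows by interpolation. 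Once this uniform $H^{s}$ bound is in place, your weighted estimate for the $M$-part (energy estimate on $\langle x\rangle^{s'}w$ directly, without reintroducing $P_S$, iterating/interpolating the weights) is sound and does give the $(1+t^{s'})$ growth; as written, however, it rests on the unproved uniform $H^{s'}$ bound, so the gap propagates there as well.
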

In particular, we have for all $s\ge 1$ and $\psi\in H^s\(\R^d\)$,
\begin{equation}
\label{WeS}
\left\lVert e^{itL}\psi\right\rVert_{H^s}\le C\(1+|t|^3\)\|\psi\|_{H^s},
\end{equation}
and for all $\psi\in \Sigma$,
\begin{equation}
\label{Wew}
\left\lVert \lvert x\rvert e^{itL}\psi\right\rVert_{L^2}\le
  C \left\lVert \lvert x\rvert
  \psi\right\rVert_{L^2}+C\(1+|t|^3\)\|\psi\|_{H^{1}}.  
\end{equation}
In order to prove Theorem \ref{thblup}, we need to find a solution of
$$i\partial_t h-Lh =R(h)\quad ;\quad \|h(t)\|_{\Sigma}\Tend t
{+\infty} 0. $$ 
We now give the
expression of $R(h)$: 
$R(h)=R_{NL}(h)+R_L(h)+R_0$, with  
\begin{align*}
R_{NL}(h)&=-g\left(\frac{x}{t}\right)\left[
  |Q+h|^{4/d}(Q+h)-Q^{1+4/d}-\left(\frac{2}{d}+1\right)Q^{4/d}h
  -\frac{2}{d}Q^{4/d}\overline{h}\right],\\ 
R_L(h)&=\left[1-g\left(\frac{x}{t}\right)\right]
\left[\left(\frac{2}{d}+1\right)Q^{4/d}h 
  +\frac{2}{d}Q^{4/d}\overline{h}\right]+\frac{1}{t^2} V
\left(\frac{x}{t}\right)h,\\ 
R_0(t,x)&=\left[1-g\left(\frac{x}{t}\right)\right]Q(x)^{1+4/d}+
\frac{1}{t^2} V\left(\frac{x}{t}\right) Q(x). 
\end{align*}
We construct a fixed point for the functional
\begin{equation}
\label{defM}
\M(h)(t,x)=\int_t^{+\infty}e^{i(\tau-t)L}iR(h)(\tau,x)d\tau,
\end{equation} 
that we decompose as $ \M(h)=\M_{NL}(h)+\M_{L}(h)+\M_0$, in
accordance with the decomposition of $R$. 
Let $s>d/2$ with $s\ge 1$, $T>1$, and $4<b<a$ real numbers to
be chosen later.  
We can prove that $\M$ is a contraction on the ball of radius one
$B_{a,b,T}$ of the space 
$$E_{a,b,T}=\big\{\psi\in C([T,+\infty[;H^s\cap
\Sigma)\,|\;\|\psi\|_{E}<\infty\big\},$$ 
where
$$\|\psi\|_E:= \sup_{t\ge
  T}\(t^a\,\|\psi(t)\|_{H^s}+t^b\,\left\lVert\lvert x\rvert
  \,\psi(t)\right\rVert_{L^2}\).$$  
In the sequel we will denote by $C$ a positive constant, that may
  change from line to line and depend on $a$, $b$, and $s$ but
  \textbf{not on} $T$. 
\smallbreak 

Since the assumptions made in this paragraph are not as general as in
Theorem~\ref{thblup}, we shall only sketch the main steps of the
arguments which lead to the conclusion of Theorem~\ref{thblup}. 

\subsubsection*{Bound on the nonlinear terms}
There exists  $C>0$ such that 
\begin{equation*}
\forall \,h,f\in B_{a,b,T},\quad
\left\|\M_{NL}(h)-\M_{NL}(f)\right\|_{E}\le
\frac{C}{T^{a-4}}\|h-f\|_{E}. 
\end{equation*}
This estimate follows from \eqref{WeS}, \eqref{Wew} and the definition of
$E_{a,b,T}$, which is an algebra embedded in $L^\infty(\R^d)$. Note
also that $4/d\ge 1$, so $R_{NL}$ contains nonlinear terms which are
at least quadratic in $h$. 
\subsubsection*{Bound on the first linear term}
There exists  $C>0$ such that
\begin{equation*}
\forall \,h,f\in B_{a,b,t_0},\quad
\left\|\M_{L}^1(h)-\M_{L}^1(f)\right\|_{E}\leq
\frac{C}{T^{m_{g}-4}}\,\|h-f\|_{E}, 
\end{equation*}
where
\begin{equation*}
  \M_L^1(h)(t,x)=\int_t^{+\infty}e^{i(\tau-t)L}
\,i\left[ g\left(\frac{x}{\tau}\right)-1\right]\left[
  \left(\frac{2}{d}+1\right)Q^{4/d}h
  +\frac{2}{d}Q^{4/d}\overline{h}\right]d\tau.
\end{equation*}
The key remark is that $Q$ decays exponentially. 
If $|x|\le \tau$, by assumption  on $g$,
\begin{equation*}
  \left|\left[g\(\frac{x}{\tau}\)-1\right] 
Q(x)^{4/d}\right|\le C\left|
  g\(\frac{x}{\tau}\)-1\right|e^{-c|x|}\le
\frac{C}{\tau^{m_{g}}}|x|^{m_{g}}e^{-c|x|}\le
\frac{C}{\tau^{m_{g}}}e^{-\frac c2 |x|}. 
\end{equation*}
If $|x|\ge \tau$, in view of the boundedness of $g$ and the
exponential decay of $Q$, 
\begin{equation*}
\left|\left[ g\left(\frac{x}{\tau}\right)-1\right]Q(x)^{4/d}\right|
\le Ce^{-c|x|}\le C e^{-\frac c2 \tau}e^{-\frac c2 |x|}.  
\end{equation*}
Hence the bound
\begin{equation*}
\forall x\in \R^d,\;\forall \tau\ge 1,\quad \left|\left[
    g\(\frac{x}{\tau}\)-1\right]Q(x)^{4/d}\right|\le
\frac{C}{\tau^{m_{g}}}e^{-\frac c2|x|}. 
\end{equation*}
Proceeding along the same lines, we infer
\begin{equation}
\label{boundkHs}
\left\|\left[
    g\left(\frac{\cdot}{\tau}\right)-1\right]Q^{4/d}\right\|_{H^s}\le
\frac{C}{\tau^{m_{g}}},
\end{equation}
and we get by \eqref{WeS} and \eqref{Wew}, the bound on the first linear term.
\subsubsection*{Bound on the second linear term}
There exists  $C>0$ such that for all $h,f\in B_{a,b,T}$, 
\begin{equation*}
 \left\lVert \M_{L}^2(h)-\M_{L}^2(f)\right\rVert_{E}\le
C
\(\frac{1}{T^{m_V-2}}+\frac{1}{T}+\frac{1}{T^{a-b}}\)\|h-f\|_{E},
\end{equation*}
where
\begin{equation*}
  \M_L^2(h)(t,x)=\int_t^{+\infty}e^{i(\tau-t)L}
\(\frac{i}{\tau^2}V\left(\frac{x}{\tau}\right)h(\tau,x)\)d\tau. 
\end{equation*}
 We have, for
$\tau\ge 1$, 
\begin{equation*}
\left\|V\left(\frac{\cdot}{\tau}\right)\right\|_{W^{s,\infty}}\le
C,\quad \text{hence }
\left\lVert V\left(\frac{\cdot}{\tau}\right)h\right\rVert_{H^s}\le
C\,\lVert h\rVert_{H^s}. 
\end{equation*}
Like above, we also have 
\begin{equation}\label{boundVS}
\left\lVert
  V\(\frac{x}{\tau}\)e^{-c|x|}\right\rVert_{H^{s}}\le 
\frac{C}{\tau^{m_V}}. 
\end{equation}
By decomposing $\M_{L}^2(h)$ on its $M$ and $S$ components,  we can
use the estimates of Proposition~\ref{propWe} to get the desired bound on the second linear linear term.

\subsubsection*{Bound on the source term}
There exists  $C>0$ such that 
\begin{equation*}
\left\lVert \M_0\right\rVert_{E}\le
C\(\frac{1}{T^{m_{g}-a-4}}+\frac{1}{T^{m_V-a-2}}\). 
\end{equation*}
This follows easily from  \eqref{boundkHs} and \eqref{boundVS}.

\subsubsection*{Conclusion}
Gathering all the previous estimates together, we have:
\begin{equation}
\label{contract}
\begin{aligned}
&\forall f,h \in  B_{a,b,T},\quad
\left\lVert\M(h)-\M(f)\right\rVert_{E} 
\le C\,{\tt w}(T)\lVert h-f\rVert_{E},\text{ where }\\
&{\tt w}(T) = \frac{1}{T^{a-4}}+
\frac{1}{T^{m_{g}-4}}+\frac{1}{T^{m_V-2}}+ \frac{1}{T}+ 
\frac{1}{T^{a-b}}+ 
\frac{1}{T^{m_{g}-a-4}}+ \frac{1}{T^{m_V-2-a}}.   
\end{aligned}
\end{equation} 
Therefore, for   $m_V>6$ and $m_{g}>8$ (this corresponds to the
assumption made in this paragraph, since $m_V$ and $m_{g}$ are
integers, by regularity of $V$ and $g$), we can choose $a,b$
with $4<b<a$ such that all the powers of $T$ in \eqref{contract} are
positive. Hence we can pick $T$ large enough such that 
\begin{equation}
\label{contract2}
\forall f,h \in B_{a,b,T},\quad \left\lVert 
\M(h)-\M(f)\right\rVert_{E}\le \frac 12\|h-f\|_{E}.
\end{equation}
Taking $f=0$ in \eqref{contract2}, we see that $\M$ maps
$B_{a,b,T}$ into $B_{a,b,T}$. Furthermore, \eqref{contract2} shows
that $\M$ is a contraction on $B_{a,b,T}$, which concludes the
proof of Theorem~\ref{thblup} under Assumption~\ref{hyp:stronger}.

\section{Introducing a modulation}
\label{sec:modul}
We now wish to replace the assumption made in the previous section by
the assumptions of Theorem~\ref{thblup}, which we rewrite: 
\begin{hyp}\label{hyp:gen}
    Let $d=1$ or $2$, and $V\in C^2(\R^d;\R)$, $g\in
    C^4(\R^d;\R)$. Assume that for 
$\d^\beta V\in L^\infty$ for $|\beta|\le 2$, $\d^\beta g\in L^\infty$
for $|\beta|\le 4$ and:
\begin{align*}
 \forall |\beta|\le 1, \quad
 \lvert\partial^{\beta}V(x)\rvert&\le 
C_{\beta}|x|^{1-|\beta|}\text{ if }|x|\le 1,\\
\forall |\beta|\le 3,\quad
 \lvert\partial^{\beta}(g(x)-1)\rvert&\le 
C_{\beta}|x|^{3-|\beta|}\text{ if }|x|\le 1.
\end{align*}
\end{hyp}
At first sight, the above assumption on $V$ is stronger than 
in Theorem~\ref{thblup}. This difference is irrelevant though, in view
of the following remark. For a potential $V$ as in
Theorem~\ref{thblup}, replacing $u(t,x)$ by $u(t,x)e^{itV(0)}$ amounts
to changing $V$ to $V-V(0)$, a potential which satisfies the above
assumption. This
explains the presence of the factor $e^{-itV(0)}$ in the statement of
Theorem~\ref{thblup}. 
\subsection{Modulation and linearization}
As explained in the introduction, we want to obtain a solution to 
\begin{equation}
\label{eq:waveop}
  i\partial_t \widetilde v+\Delta
\widetilde v-\frac{1}{t^2}V\left(\frac{x}{t}\right)\widetilde v+
g\left(\frac{x}{t}\right)|\widetilde v|^{4/d}\widetilde v=0\quad
;\quad \lVert\widetilde v (t)-e^{i\theta(t)}Q\rVert_{\Sigma}\Tend t {+\infty}0,
\end{equation}
where $\theta(t)=t+o(t)$ as $t\to +\infty$. 
Introduce the following modulations:
\begin{equation}
  \label{eq:modul1}
    \widetilde v(t,x) =
  e^{i\(q_1(t) +q_3(t)\cdot x +q_5(t)|x|^2\)}\frac{1}{q_4(t)^{d/2}}v\(\g(t),
\frac{x}{q_4(t)}-q_2(t)\), 
  \end{equation}
with $q_1,q_4,q_5,\g\in \R$ and $q_2, q_3\in \R^d$. The functions $v$
and $\widetilde v$ have similar properties as $t\to +\infty$ if,
morally, 
\begin{equation}
q_1(t),q_2(t),q_3(t),q_5(t)  \Tend t{+\infty} 0\quad ;\quad
  q_4(t)\Tend 
  t {+\infty}1\quad ;\quad \g(t)\Eq t{+\infty} t. 
\end{equation}
We give a rigorous meaning to this line below. 
Note that the second point implies the last one if we assume
\begin{equation*}
  \dot \g(t)=\frac{1}{q_4(t)^2}. 
\end{equation*}
From now on, we define $\g$ as
\begin{equation}\label{eq:gamma}
  \g(t)=\tau_0+\int_{\tau_0}^t\frac{1}{q_4(\sigma)^2}d\sigma,
\end{equation}
where $\tau_0$ is a large time to be determined later. We introduce the new
time and space variables 
\begin{align*}
  (\tau,y)&=\(\g(t),\frac{x}{q_4(t)}-q_2(t)\),\quad \text{or,
    equivalently}\\
  (t,x)&=\Big(\g^{-1}(\tau),q_4\(\g^{-1}(\tau)\)\(y+q_2\(\g^{-1}(\tau)\)
  \)\Big) .
\end{align*}
With our choice for $\g$,  \eqref{eq:waveop} is equivalent to 
\begin{equation}\label{eq:NLSv}
  i\partial_\tau v + \Delta v = V_p v
  - g_p|v|^{4/d}v +iZ_p(v),
\end{equation}
where we have denoted, for $p(\tau)=(p_1,p_2,p_3,p_4,p_5)\in \R\times
\R^d\times\R^d\times \R\times\R$:
\begin{align*}
  iZ_p(v) &= \( p_1+p_3\cdot y +p_5|y|^2\) v +ip_2\cdot \nabla v
  +ip_4\(\frac{d}{2}+y\cdot \nabla\)v,\\
g_p(\tau,y) &= g\(\frac{x}{t}\)=g\(\frac{q_4(y+q_2)}{\g^{-1}(\tau)}\) ,\\
V_p(\tau,y) &=\frac{q_4^2}{t^2}V\(\frac{x}{t}\)=
\frac{q_4^2}{\g^{-1}(\tau)^2}V\(\frac{q_4(y+q_2)}{\g^{-1}(\tau)}\).
\end{align*}
The parameters $q_2$ and $q_4$ are assessed in $\g^{-1}(\tau)$ and
substituting $x= q_4(y+q_2)$,  
\begin{align*}
  p_1& = q_4^2\dot q_1 +q_4^3 \dot q_3\cdot q_2 +q_4^4
  \dot q_5 |q_2|^2 
  +q_4^2 \left|q_3+2q_4 q_5 q_2\right|^2,\\
p_2& =q_4^2 \dot q_2 +q_4 \dot q_4 q_2 -2q_4
a-4q_4^2q_5 q_2,\\ 
p_3& = q_4^3 \dot q_3 +2q_4^4\dot q_5 q_2 +4q_4^3 q_3q_5
+8q_4^4 q_5^2 q_2,\\ 
p_4 &= q_4\dot q_4 -4q_5q_4^2,\\
p_5 &= q_4^4\dot q_5 +4q_4^4 q_5^2. 
\end{align*}
The following rewriting essentially block diagonalizes the above system:
\begin{align*}
p_4 &= q_4\(\dot q_4 -4q_5q_4\).\\
p_5&= q_4^4\(\dot q_5 +4 q_5^2\). \\
p_2&= q_4^2 \dot q_2 -2q_4 q_3 +p_4 q_2.\\
p_3&= q_4^3\(\dot q_3 +4q_3q_5\) +2q_2 p_5.\\
p_1&= p_3\cdot q_2 -p_5 |q_2|^2 +q_4^2\(\dot q_1 +|q_3|^2\). 
\end{align*}
Note  that
we have not examined the asymptotic condition as $t\to+\infty$. We
analyze this aspect more precisely below (see \S\ref{sec:reduced}). 
We write $ v=e^{i\tau}(Q+w)$:
Equation~\eqref{eq:NLSv} is equivalent to 
\begin{equation}
\label{eq_w}
i\partial_\tau w-Lw -iZ_p(w)=iR_p(w)+iZ_p(Q), 
\end{equation}
where $L$ is the linearized operator \eqref{defL}, 
and $R_p(w)=R_{NL}(w)+R_{L}(w)+R_0$
\begin{equation}
  \label{eq:defR}
\left\{
 \begin{aligned}
  iR_{NL}(w) &= -g_p\times \( F\(Q+w\) -F(Q) -\ell(w)\),\\
iR_{L}(w)&= \(1- g_p \)\times  \ell(w)+ V_p w,\\
iR_0&= \(1- g_p \)\times  F(Q)+ V_p Q ,
\end{aligned} 
\right.
\end{equation}
with
\begin{equation*}
 F(z)=|z|^{4/d}z\quad ;\quad \ell(w) =
\(\frac{2}{d}+1\)Q^{4/d}w +\frac{2}{d}Q^{4/d}\overline w.
\end{equation*} 
As we will often write the equation \eqref{eq_w} as $\partial_t w +iL
w=(\ldots)$, 
we also forced a multiplication by $i$ in the definition of $R_p$.
Note that $R_{NL}$, $R_L$ and $R_0$ also depend on the parameter $p$,
although we will usually not indicate it with an index.

The sequel of this section is as follows. In 
\S\ref{sec:ptomod}, we show that one can recover the modulations 
$q_1$,\ldots,$q_5$ and the original variables  $t$  and $x$ from the
parameters $p_1$,\ldots,$p_5$ and the modulated variables $\tau$ and
$y$. In \S\ref{sec:reduced}, we reduce Theorem
\ref{thblup} to the proof of an existence theorem in the modulated
variables $\tau$ and $y$. 

\subsection{From $p$ to the modulation}
\label{sec:ptomod}
From now on, we will work only in the modulated variables $\tau$ and
$y$, and consider, by abuse of notation, the modulations $q_k$ as
functions of $\tau$. 
Denoting by $'$ the derivative with respect to $\tau$, that is $\dot
f=\frac{1}{q_4^2}f'$, the above system reads: 
\begin{equation}
  \label{eq:p/modul'}
  \left\{
\begin{aligned}
p_4 &= \frac{q_4'}{q_4} -4q_5 q_4^{2}.\\
p_5&= q_4^2q_5' +4 q_4^{4}q_5^2. \\
p_2&= q_2' -2q_4 q_3 +p_4 q_2.\\
p_3&= q_4q_3' +4q_4^3q_3q_5 +2q_2 p_5.\\
p_1&= p_3\cdot q_2 -p_5 |q_2|^2 +q_1' +q_4^2|q_3|^2. 
\end{aligned}
\right.
\end{equation}
Recall that we seek $q_4=1+q_{4r}$, with
\begin{equation*}
 q_1,q_2,q_3,q_{4r}, q_5\Tend t{+\infty}0.  
\end{equation*}
Consider these functions as unknowns, to be sought, for $c>0$, in 
\begin{equation}\label{eq:defW}
  W(c,\tau_0)= \{ f\in C([\tau_0,\infty[),\quad
  \|f\|_{c,\tau_0}:=\sup_{\tau\ge \tau_0}\tau^c\lvert 
  f(\tau)\rvert <\infty\}. 
\end{equation}
Our main assumption here is
$  p_j\in W(c(p_j),\tau_0)$, for $1\le j\le 5$.
\begin{lemma}
\label{L:p}
Let $c(p_3)=c(p_5)>2$, $c(p_1)>1$, $c(p_2)=c(p_4)>1$. Then if $\tau_0$
is sufficiently large the following holds. 
Let $p_j\in W(c(p_j),\tau_0)$, $1\le j\le 5$
such that
$$\forall j\in \{1,\ldots,5\},\quad \|p_j\|_{c(p_j),\tau_0}\le 1.$$
 Then there exists a unique family of parameters $q_1,q_2,q_3,q_{4r},
 q_5$, such that the system \eqref{eq:p/modul'}
 holds with
 \begin{itemize}
 \item $q_2,q_{4r}\in W(c(q_2),\tau_0)$ with $\dis
   c(q_2)=\left(\min\left(c(p_5)-2,c(p_4)-1\right)\right)^-$. 
\item $q_3,q_5\in  W(c(q_3),\tau_0)$ with $c(q_3)=c(p_3)/2$.
\item $q_1\in W(c(q_1),\tau_0)$ with
  $c(q_1)=\min\left(c(p_1)-1,c(p_3)-1\right)$, 
 \end{itemize}
and
$$
\|q_{1}\|_{c(q_1),\tau_0}+\|q_{2}\|_{c(q_2),\tau_0}+
\|q_{3}\|_{c(q_3),\tau_0}+\|q_{4r}\|_{c(q_2),\tau_0}+\|q_{5}\|_{c(q_3),\tau_0}\le
1.$$ 
Finally, the variables $(\tau,y)$ and $(t,x)$ are uniformly equivalent:
\begin{equation*}
  \frac{1}{2}\le \frac{d\tau}{dt}\le 2\quad ;\quad
  \frac{1}{2}\<x\>\le \<y\>\le 2 \<x\>. 
\end{equation*}
\end{lemma}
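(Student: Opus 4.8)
The plan is to solve the system \eqref{eq:p/modul'} by a fixed-point argument, exploiting the block-triangular structure that was emphasized in the rewriting: the pair $(q_{4r},q_5)$ is governed by the first two equations alone, then $(q_2,q_3)$ by the next two (with $(q_{4r},q_5)$ now known), and finally $q_1$ by the last equation (with everything else known). First I would recast each line as an integral equation with the boundary condition imposed \emph{at $\tau=+\infty$}, since the $q_k$ are required to decay; e.g.\ from $p_5=q_4^2q_5'+4q_4^4q_5^2$ one writes $q_5(\tau)=-\int_\tau^\infty q_4(\sigma)^{-2}\bigl(p_5(\sigma)-4q_4(\sigma)^4q_5(\sigma)^2\bigr)\,d\sigma$, and similarly $q_4'/q_4=p_4+4q_5q_4^2$ is integrated from $\tau$ to $\infty$ with $q_4(\infty)=1$, i.e.\ $\log q_4(\tau)=-\int_\tau^\infty(p_4+4q_5q_4^2)\,d\sigma$. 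One checks that on the ball of radius one in $W(c(q_2),\tau_0)\times W(c(q_3),\tau_0)$ the resulting map is well-defined: the linear (in the unknown) contributions lose no decay but pick up a factor $\int_\tau^\infty \sigma^{-1-\epsilon}\,d\sigma\lesssim \tau^{-\epsilon}$, hence for $\tau_0$ large enough they are contractive, while the source terms $p_4,p_5$ produce exactly the claimed decay exponents $c(q_2)=(\min(c(p_5)-2,c(p_4)-1))^-$ and $c(q_3)=c(p_3)/2$ — note the halved exponent for $q_5$ comes from the $\sigma^{-2}$ weight $q_4^{-2}$ combined with the requirement $c(p_5)>2$, which is what makes the integral converge. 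The quadratic terms $q_5^2$, $q_4^4q_5^2$ are harmless: they are smaller both in size and they improve decay, so they only help the contraction.

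Once $(q_{4r},q_5)$ are fixed, I would solve the $(q_2,q_3)$ block the same way. Here the relevant equations are $p_2=q_2'-2q_4q_3+p_4q_2$ and $p_3=q_4q_3'+4q_4^3q_3q_5+2q_2p_5$; treating $q_3$ as a forcing for $q_2$ and vice versa, one integrates from $\infty$ and again obtains a contraction for $\tau_0$ large, with $q_2,q_{4r}$ landing in $W(c(q_2),\tau_0)$ (the appearance of $q_{4r}$ in this space rather than a better one is dictated by the coupling $p_4q_2$ and by the $2q_2p_5$ term, which transfers the $q_2$-decay into the $q_4$-equation) and $q_3$ in $W(c(q_3),\tau_0)$, the same space as $q_5$ because $q_3$ and $q_5$ are symmetric under the $2\times2$ block. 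Finally $q_1$ is obtained by a single integration of $q_1'=p_1-p_3\cdot q_2+p_5|q_2|^2-q_4^2|q_3|^2$ from $\infty$; the worst source terms are $p_1$ (decay $c(p_1)-1$ after integrating) and $p_3\cdot q_2$ (decay $c(p_3)-1$ since $q_2$ decays at least like $c(p_3)/2-\epsilon>1$ and $p_3$ like $c(p_3)$, so the product decays faster than $c(p_3)$ and after integration better than $c(p_3)-1$), giving $c(q_1)=\min(c(p_1)-1,c(p_3)-1)$; the terms $p_5|q_2|^2$ and $q_4^2|q_3|^2$ are again better. Uniqueness follows from the contraction at each stage. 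The bound $\|q_1\|+\cdots\le 1$ is then arranged by taking $\tau_0$ larger still, absorbing the constants.

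For the last assertion, once the $q_k$ are known one has $\g'(t)$ (equivalently $d\tau/dt = 1/q_4(t)^2$ by \eqref{eq:gamma}), and since $q_4=1+q_{4r}$ with $\|q_{4r}\|_{c(q_2),\tau_0}\le 1$ and $c(q_2)>0$, one gets $|q_{4r}(\tau)|\le \tau_0^{-c(q_2)}$, which is $\le 1/2$ for $\tau_0$ large; hence $1/2\le q_4^{\pm2}\le 2$, giving $1/2\le d\tau/dt\le 2$. The relation $x=q_4(y+q_2)$ together with $|q_2|\le \tau_0^{-c(q_2)}$ small then yields the equivalence $\frac12\langle x\rangle\le\langle y\rangle\le 2\langle x\rangle$ by an elementary estimate splitting the cases $|y|$ large and $|y|$ bounded.

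\medskip

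The main obstacle I anticipate is bookkeeping the decay exponents consistently through the three coupled blocks — in particular making sure that the cross terms ($2q_4q_3$ feeding $q_2$, $2q_2p_5$ feeding $q_3$, $p_4q_2$, $p_3\cdot q_2$ feeding $q_1$) never degrade the decay below the stated values, and that the strict inequalities $c(p_3)=c(p_5)>2$, $c(p_2)=c(p_4)>1$, $c(p_1)>1$ are exactly what is needed for every integral $\int_\tau^\infty(\cdots)\,d\sigma$ to converge and to produce a strictly positive decay rate for the $q_k$; the "$(\cdot)^-$" in $c(q_2)$ is precisely the room one loses at the borderline of integrability. Everything else is a routine contraction-mapping estimate once $\tau_0$ is taken large enough.
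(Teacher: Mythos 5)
Your overall architecture (solve the block-triangular system by integrating from $\tau=+\infty$ and running a contraction in the weighted spaces $W(c,\tau_0)$, then read off the equivalence of variables from $q_4\to 1$, $q_2\to 0$) is the same as the paper's, and your treatment of the last two blocks and of the change of variables is fine. But there is a genuine gap exactly at the point you yourself flag as the delicate one: the claimed exponent $c(q_{4r})=c(q_2)=\bigl(\min(c(p_5)-2,c(p_4)-1)\bigr)^-$ is not produced by the scheme as you set it up. In your formulation the coupling term in the $q_4$-equation, $\log q_4(\tau)=-\int_\tau^\infty\bigl(p_4+4q_5q_4^2\bigr)d\sigma$, has an $O(1)$ coefficient in front of $q_5$; its size is governed solely by the decay you assign to $q_5$. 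With your choice $c(q_5)=c(p_3)/2$ this term is only $O\bigl(\tau^{1-c(p_5)/2}\bigr)$, and since $c(p_5)>2$ one has $1-c(p_5)/2>2-c(p_5)$, i.e.\ this is strictly worse than the claimed $\tau^{-(c(p_5)-2)}$ rate; indeed with, say, $c(p_4)=c(p_5)=3$ the map $q_5\mapsto\int_\tau^\infty 4q_5q_4^2\,d\sigma$ does not even send the unit ball of $W(3/2,\tau_0)$ into the ball of $W(1^-,\tau_0)$, so the fixed point does not close in the spaces you prescribe. Your two supporting heuristics are also incorrect for precisely these terms: $q_4^{-2}\to 1$ is not a ``$\sigma^{-2}$ weight'' (the halving of the exponent for $q_5$ is just a convenient, suboptimal choice making $q_5^2$ decay like $p_5$), and the linear coupling $4q_5q_4^2$ does not ``pick up a factor $\int_\tau^\infty\sigma^{-1-\epsilon}d\sigma$'' — it has no decaying coefficient at all.

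The paper closes this gap by not treating $4q_5$ as a forcing: it keeps the linear part of the $(q_{4r},q_5)$ system as the nilpotent Jordan block $\begin{pmatrix}0&4\\0&0\end{pmatrix}$ and writes Duhamel with its exponential, so that the $p_5$- and $q_5^2$-forcings enter the $q_{4r}$-formula through the kernel $4(\tau-\sigma)$; then $\int_\tau^\infty(\sigma-\tau)\sigma^{-c(p_5)}d\sigma\lesssim\tau^{2-c(p_5)}$ gives exactly the loss of two powers in $c(q_{4r})$, and the same device (with $4$ replaced by $2$) handles the $(q_2,q_3)$ block. Equivalently, you could repair your version either by Fubini (substitute the integral representation of $q_5$ into $\int_\tau^\infty 4q_5$ and exchange the order of integration, which reproduces the $(\sigma-\tau)$ kernel), or by a bootstrap: first close the $q_5$-equation alone at the better rate $c(q_5)$ close to $c(p_5)-1$ (its own equation allows this, since $p_5$ and $q_5^2$ then both decay at least like $\sigma^{-c(p_5)}$), and only then integrate the $q_4$-equation; membership in the weaker space $W(c(p_3)/2,\tau_0)$ claimed in the lemma follows a fortiori after enlarging $\tau_0$. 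Two smaller inaccuracies to fix in the $q_1$-step: the binding term giving $c(p_3)-1$ is $q_4^2|q_3|^2$ (which decays exactly like $\tau^{-2c(q_3)}=\tau^{-c(p_3)}$), not $p_3\cdot q_2$; and your assertion that $q_2$ decays ``at least like $c(p_3)/2-\epsilon>1$'' is false near the borderline ($c(q_2)$ can be arbitrarily small when $c(p_5)$ is close to $2$) — what saves the estimate for $p_3\cdot q_2$ is merely $c(q_2)>0$.
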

\begin{remark}
\label{R:p}
Under the assumptions of the lemma, we can define implicitly the variable $t$
from the variable $\tau$ in view of the formula \eqref{eq:gamma}.
\end{remark}

\begin{proof}
The first two equations in \eqref{eq:p/modul'} determine $q_{4r}$ and
$q_5$. Then the next two yield $q_2$ and $q_3$, while we infer $q_1$
from the last equation. Thus we first consider
\begin{equation}
  \label{eq:p/mod1}
  \left\{
    \begin{aligned}
      q_{4r}'-4q_5 &=p_4(1+q_{4r})+4q_5
      q_{4r}(3+3q_{4r}+q_{4r}^2),\\
q_5'&= \frac{p_5}{(1+q_{4r})^2}-4(1+q_{4r})^2q_5^2.
    \end{aligned}
\right.
\end{equation}
Introduce the corresponding homogeneous system:
\begin{equation*}
  \frac{d}{d\tau}\(
  \begin{array}[c]{c}
    q_{4r}\\
q_5
  \end{array}
\) = \(
\begin{array}[c]{cc}
  0& 4\\
0&0
\end{array}
\) 
\(
\begin{array}[c]{c}
    q_{4r}\\
q_5
  \end{array}
\).
\end{equation*}
The square of the above matrix is zero, and we infer:
\begin{equation*}
  \exp  \(
\begin{array}[c]{cc}
  0& 4\\
0&0
\end{array}
\) = \(
\begin{array}[c]{cc}
  1& 4\\
0&1
\end{array}
\).
\end{equation*}
Duhamel's formula for \eqref{eq:p/mod1} thus reads:
\begin{align*}
  q_{4r}(\tau)&= -\int_\tau^\infty \Big[ p_4(\sigma)
  \left(1+q_{4r}(\sigma)\right)+4q_5(\sigma)q_{4r}(\sigma)
  \(3+3q_{4r}(\sigma)+q_{4r}^2(\sigma)\)\Big]d\sigma\\   
&\quad
-\int_{\tau}^{+\infty}4(\tau-\sigma)
  \(\frac{p_5(\sigma)}{(1+q_{4r}(\sigma))^2}-4(1+q_{4r}(\sigma))^2
  q_5^2(\sigma) \)d\sigma,\\ 
  q_5(\tau)&=  -\int_\tau^\infty
  \left[\frac{p_5(\sigma)}{(1+q_{4r}(\sigma))^2}-4(1+q_{4r}(\sigma))^2
    q_5^2(\sigma) \right]d\sigma. 
\end{align*}
Denoting $N(k)=\|k\|_{c(k),\tau_0}$, the
first right hand side is controlled by 
\begin{multline*}
   \int_\tau^\infty
    \(\sigma^{-c(p_4)} +\sigma^{-c(q_{4r})-c(q_5)}N(q_{4r})N(q_5)+
    \tau\sigma^{-c(p_5)}+\tau \sigma^{-2c(q_5)}N(q_5)^2\)d\sigma\\ 
    \lesssim
    \tau^{1-c(p_4)}+\tau^{1-c(q_{4r})-c(q_5)}+\tau^{2-\min(c(p_5),2c(q_5))}.   
\end{multline*}
The second right hand side is controlled by
$\tau^{1-\min(c(p_5),2c(q_5))}$. We can solve the above system by a
fixed point argument in the
class that we consider, provided that $\tau_0$ is sufficiently large,
as soon as  
\begin{align*}
&c(q_{4r})+1<c(p_4) &;&\qquad 1<c(q_5),\\
&  c(q_{4r}) +2<\min(c(p_5),2c(q_5)) &;&\qquad c(q_5)+1<\min
  (c(p_5),2c(q_5)). 
\end{align*}
This boils down to
\begin{equation*}
  c(p_4)>1\quad;\quad c(p_5)>2,
\end{equation*}
in which case we may take
\begin{equation*}
  c(q_{4r})= \(\min\(c(p_5)-2,c(p_4)-1\)\)^-\quad ;\quad c(q_5)=
  \frac{1}{2}c(p_5). 
\end{equation*}
Note also that $\tau_0$ can be chosen independent of $p$ such that
$N(p)\le 1$. 
\smallbreak

The system yielding $(q_2,q_3)$ is similar (the constant $4$ becomes a
$2$):
\begin{equation*}
  \left\{
\begin{aligned}
 q_2'-2q_3&=p_2+2q_{4r}q_3-p_4q_2\\
q_3'&=-4(1+q_{4r})^2q_3q_5+\frac{p_3-2q_2 p_5}{1+q_{4r}}.
\end{aligned}
\right.
\end{equation*}
Under the extra assumption $c(p_2)=c(p_4)$ and $c(p_3)=c(p_5)$, we may take
\begin{equation*}
  c(q_2)= c(q_{4r})\quad ;\quad c(q_3)=c(q_5). 
\end{equation*}
It is clear that we may choose $c(q_1)=\min\(c(p_1)-1,c(p_3)-1\)$. 
\smallbreak

The inequalities:
\begin{align*}
 \left\lvert \frac{d}{dt}\(\tau-t\)\right\rvert &= \left\lvert
   \frac{1}{q_4^2}-1\right\rvert= \left\lvert
\(1+q_{4r}\)^{-2}-1 \right\rvert \lesssim \frac{1}{t^{c(q_2)}},\\
\left\lvert y_j-x_j\right\rvert&=\left\lvert \(\frac{1}{q_4}-1\)x_j
  -q_2\right\rvert \lesssim \frac{1}{t^{c(q_2)}}\(|x_j|+1\)
\end{align*}
imply the last part of the lemma.
\end{proof}
The following lemma is a direct consequence of the proof of the
previous result:
\begin{lemma}\label{lem:qLip}
 Let $p$ and $\tilde p$ satisfy the assumptions of
 Lemma~\ref{L:p}. Assume in addition that for all $k$,
 $c(p_k)=c(\tilde p_k)=c(p)>2$. Denote by $q$ and $\tilde q$
 the corresponding  
 modulations provided by Lemma~\ref{L:p}. We have
 \begin{equation*}
   \lvert q_4(\tau) -\tilde q_4(\tau)\rvert + \lvert q_2(\tau)
   -\tilde q_2(\tau)\rvert \lesssim 
   \frac{1}{\tau^{c(p)-2}} \max_{1\le k\le 5} \|p_k-\tilde
   p_k\|_{c(p),\tau_0}. 
 \end{equation*}
\end{lemma}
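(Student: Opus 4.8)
The plan is to reuse directly the fixed-point formulas established in the proof of Lemma~\ref{L:p}, differentiate them in the parameter $p$, and close a Gronwall-type / contraction estimate on the differences. I would first record that, under the hypothesis $c(p_k)=c(\tilde p_k)=c(p)>2$ for all $k$, the modulations $q$ and $\tilde q$ both live in the same spaces $W(c(q_\bullet),\tau_0)$ with norms $\le 1$, and $\tau_0$ may be taken large and independent of $p$. This is exactly the setting in which the Duhamel integral representations for $(q_{4r},q_5)$ and then $(q_2,q_3)$ were obtained.

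\medskip

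Next I would subtract the integral equations for $q_{4r},q_5$ from those for $\tilde q_{4r},\tilde q_5$. The integrands are smooth (in fact polynomial) functions of $(q_{4r},q_5,p_4,p_5)$, all evaluated at arguments of size $O(\tau^{-c(\cdot)})$ with the relevant exponents $>1$; hence the difference of integrands is controlled, using the mean value theorem and the a priori bounds $N(q_\bullet)\le 1$, by a constant times
\[
\sigma^{-\min(c(p_4),c(p_5)-1,\,c(q_{4r}),\,c(q_5))}\bigl(|q_{4r}-\tilde q_{4r}|+|q_5-\tilde q_5|\bigr)(\sigma) + \sigma^{-c(p)}\bigl(|p_4-\tilde p_4|+|p_5-\tilde p_5|\bigr)(\sigma),
\]
plus the analogous term carrying the extra factor $(\tau-\sigma)$ from the $q_5$-line of Duhamel's formula, which after integration contributes one extra power of $\tau$. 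Since $c(q_{4r})=(\min(c(p_5)-2,c(p_4)-1))^-$ and $c(q_5)=c(p_5)/2$, and since $c(p)>2$, all these exponents exceed $1$ (for the $q_5$-line, exceed $2$), so integrating from $\tau$ to $\infty$ and setting $A(\tau):=\sup_{\sigma\ge\tau}\sigma^{c(p)-2}(|q_{4r}-\tilde q_{4r}|+|q_5-\tilde q_5|)(\sigma)$, one gets an inequality of the form $A(\tau)\le \tfrac12 A(\tau) + C\max_k\|p_k-\tilde p_k\|_{c(p),\tau_0}$ for $\tau_0$ large enough, whence $A(\tau)\lesssim\max_k\|p_k-\tilde p_k\|_{c(p),\tau_0}$; the same scheme applied to the $(q_2,q_3)$ system (with the constant $4$ replaced by $2$, and now also fed by the already-controlled difference $q_{4r}-\tilde q_{4r}$ as an inhomogeneous term) yields the bound for $q_2-\tilde q_2$. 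Restricting attention to $q_4$ and $q_2$ as in the statement then suffices.

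\medskip

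The main obstacle, I expect, is purely bookkeeping: verifying that every exponent arising after differentiation is strictly above the threshold needed to make the Duhamel integrals converge and to absorb the $\tfrac12 A(\tau)$ term, uniformly in $p$. In particular the $q_5$-line of the first Duhamel formula carries the dangerous factor $(\tau-\sigma)$, which costs one power of $\tau$; this is precisely why one needs $c(p)>2$ (not just $>1$), matching the hypothesis, and why the final rate is $\tau^{-(c(p)-2)}$ rather than $\tau^{-(c(p)-1)}$. Once this is checked the estimate is a routine contraction/absorption argument identical in structure to the one already carried out in the proof of Lemma~\ref{L:p}, so I would simply say ``the estimate follows by subtracting the fixed-point equations and arguing exactly as in the proof of Lemma~\ref{L:p}'' and only spell out the one line where the $(\tau-\sigma)$ factor forces the loss of two powers.
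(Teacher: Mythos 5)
Your overall strategy is the paper's: subtract the Duhamel/fixed-point formulations of \eqref{eq:p/mod1} (and then of the $(q_2,q_3)$ system), estimate the differences of the integrands, and close by a Gronwall/absorption argument. However, there is a genuine gap in how you set up the absorption: you measure both differences with the \emph{same} weight, $A(\tau)=\sup_{\sigma\ge\tau}\sigma^{c(p)-2}\bigl(\lvert q_{4r}-\tilde q_{4r}\rvert+\lvert q_5-\tilde q_5\rvert\bigr)(\sigma)$. The dangerous term you yourself single out, namely the $(\tau-\sigma)$-weighted part of the $q_{4r}$-line, produces after differencing a contribution of the form $\int_\tau^\infty(\sigma-\tau)\,\lvert q_5+\tilde q_5\rvert\,\lvert q_5-\tilde q_5\rvert\,d\sigma\lesssim\int_\tau^\infty\sigma^{1-c(q_5)}\lvert q_5-\tilde q_5\rvert(\sigma)\,d\sigma$ with $c(q_5)=c(p)/2$. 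If all you know is $\lvert q_5-\tilde q_5\rvert(\sigma)\le\sigma^{-(c(p)-2)}A$, the integrand is $A\,\sigma^{3-\frac32 c(p)}$, which is not even integrable for $2<c(p)\le 8/3$; and when it is integrable ($c(p)>8/3$), its contribution to $\tau^{c(p)-2}\lvert q_{4r}-\tilde q_{4r}\rvert(\tau)$ is of size $A\,\tau^{2-c(p)/2}$, which \emph{grows} in $\tau$ for $c(p)<4$ and hence cannot be absorbed as $\tfrac12 A(\tau)$. Since the lemma is used with $c(p)=3-3\eps\in\,]2,3[$, your scheme fails precisely in the relevant range; the hypothesis $c(p)>2$ is not what makes your bookkeeping close.

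The missing idea is that $q_5-\tilde q_5$ decays one power of $\tau$ faster than $q_{4r}-\tilde q_{4r}$, and the norm must encode this asymmetry. The paper works with $e_q(\tau)=\lvert q_4-\tilde q_4\rvert+\tau\,\lvert q_5-\tilde q_5\rvert$; subtracting the Duhamel formulas gives an inequality of the type $e_q(\tau)\lesssim\tau\int_\tau^\infty\bigl(\sigma^{-1-c(q_5)}e_q(\sigma)+\sigma^{-c(p)}\max_k\|p_k-\tilde p_k\|_{c(p),\tau_0}\bigr)d\sigma$, and since $c(q_5)=c(p)/2>1$ one applies Gronwall to $e_q(\tau)/\tau$ to obtain $e_q(\tau)\lesssim\tau^{2-c(p)}\max_k\|p_k-\tilde p_k\|_{c(p),\tau_0}$; the $(q_2,q_3)$ system is then treated the same way. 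With the extra power of $\tau$ on the $q_5$-difference, the problematic term becomes $\int_\tau^\infty\sigma^{1-c(q_5)}\sigma^{-(c(p)-1)}d\sigma$, which converges and is harmless exactly under $c(p)>2$. (A secondary point: your sup $A(\tau)$ is not a priori finite — e.g.\ $c(q_{4r})=(c(p)-2)^-<c(p)-2$ — so even a corrected absorption would need a bootstrap on finite intervals or the Gronwall formulation rather than a bare self-improvement of a possibly infinite quantity.) Your treatment of the source terms and the final rate $\tau^{-(c(p)-2)}$ is otherwise consistent with the paper.
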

\begin{proof}
  Subtract the Duhamel's formulations to systems \eqref{eq:p/mod1}
  associated to $p$ and  $\tilde p$, respectively. Denoting
  $e_q(\tau)=|q_4-\tilde q_4|+\tau|q_5-\tilde q_5|$, we have immediately
  \begin{align*}
    e_q(\tau)&\lesssim \tau\int_\tau^\infty
    \(\(\frac{1}{\sigma^{c(p)}}+
    \frac{1}{\sigma^{1+c(q_5)}} 
    \)e_q(\sigma) +\frac{1}{\sigma^{c(p)}} \max_{1\le k\le 5} \|p_k-\tilde
   p_k\|_{c(p),\tau_0}\)d\sigma \\
&\lesssim \tau\int_\tau^\infty
   \( \frac{1}{\sigma^{1+c(q_5)}}
    e_q(\sigma) +\frac{1}{\sigma^{c(p)}} \max_{1\le k\le 5} \|p_k-\tilde
   p_k\|_{c(p),\tau_0}\)d\sigma 
  \end{align*}
From Lemma~\ref{L:p}, $c(q_5)=c(p)/2>1$. We can then apply Gronwall lemma to
$\widetilde e_q(\tau)=e_q(\tau)/\tau$, and the first estimate
follows. The estimate for $q_2$ proceeds along the same lines.  
\end{proof}

\subsection{Reduced problem}
\label{sec:reduced}

In the rest of this paper, we show the following:
\begin{theorem}\label{theo:reduced}
  Let Assumption~\ref{hyp:gen} be satisfied. There exists $\tau_0>0$,
  a modulation $p$ such that $p_j\in W(c(p),\tau_0)$ with
$c(p)>2$, 
 and a solution $w\in C([\tau_0,\infty[;\Sigma)$ to 
\begin{equation}\label{eq:w}
i\partial_\tau w-Lw -iZ_p(w)=iR_p(w)+iZ_p(Q), 
\end{equation}
such that
\begin{equation*}
\|w(\tau)\|_{H^1}\le \frac{C}{\tau^{2^-}},\quad
\lVert \<y\>  w(\tau)\rVert_{L^2}\le \frac{C}{\tau^{1^-}}.
\end{equation*}
\end{theorem}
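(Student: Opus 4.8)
\textbf{Plan of the proof of Theorem~\ref{theo:reduced}.}
The strategy is to set up a fixed-point scheme analogous to the one of Section~\ref{sec:plain}, but now incorporating the modulation parameters $p=(p_1,\dots,p_5)$ into the unknown. The key structural fact, borrowed from \cite{We85} and made precise in \S\ref{sec:linear} below, is that $e^{i\tau L}$ grows polynomially on a finite-dimensional space of secular modes and is bounded on the complement. Since the modulations $Z_p$ are tuned (in \S\ref{sec:tuning}) precisely so as to kill all secular modes but one, the relevant polynomial growth is lowered enough that the source term $iZ_p(Q)+iR_0$, which under Assumption~\ref{hyp:gen} decays only polynomially in $\tau$ (because $V$ vanishes to order $1$ and $g-1$ to order $3$ at the origin), can still be absorbed. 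Concretely, I would first rewrite \eqref{eq:w} in Duhamel form: for the solution decaying at $+\infty$,
\begin{equation*}
  w(\tau) = \int_\tau^{+\infty} e^{i(\tau-\sigma)L}\Big(iR_p(w)(\sigma)+iZ_p(Q)(\sigma)+iZ_p(w)(\sigma)\Big)\,d\sigma,
\end{equation*}
but with the crucial caveat that $p$ is not given: it is determined by the requirement that $w(\tau)$ remain orthogonal to the secular modes that are \emph{not} neutralized by the modulation. This gives a coupled system for $(w,p)$, where $p$ is reconstructed from $w$ by the modulation equations of \S\ref{sec:tuning}, and then $w$ is reconstructed from $p$ and the source via the Duhamel integral. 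By Lemma~\ref{L:p} and Lemma~\ref{lem:qLip}, once $p\in W(c(p),\tau_0)$ with $c(p)>2$ and $\|p_j\|\le 1$, the change of variables $(t,x)\leftrightarrow(\tau,y)$ is uniformly bilipschitz, so all estimates on $V_p$, $g_p$ can be transferred from the corresponding estimates on $V$, $g$ with only harmless constants.

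The scheme proceeds as follows. I would work in a space $\mathcal X$ of pairs $(w,p)$ with $w\in C([\tau_0,\infty[;H^s\cap\Sigma)$ ($s>d/2$, $s\ge1$) satisfying $\|w(\tau)\|_{H^s}\le \tau^{-a}$ and $\|\<y\>w(\tau)\|_{L^2}\le \tau^{-b}$ for suitable $4<b<a$, and $p_j\in W(c(p),\tau_0)$ with $\|p_j\|_{c(p),\tau_0}\le1$, $c(p)>2$. Define an operator $\Phi$ on this set: given $(w,p)$, produce first the modulation parameters $\tilde p = \tilde p(w)$ via the tuning procedure of \S\ref{sec:tuning} (this is where the orthogonality conditions are solved, using the non-degeneracy properties of $L$ recalled in \S\ref{sec:linear}), then produce $\tilde w$ by plugging $(w,\tilde p)$ into the Duhamel integral above and applying the energy estimates of \S\ref{sec:nonsecular} for the non-secular part together with Proposition~\ref{propWe}-type bounds for the one remaining secular direction. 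One checks that for $\tau_0$ large enough, $\Phi$ maps the set into itself: the nonlinear term $R_{NL}$ gains a power $\tau^{4-a}$ as in Section~\ref{sec:plain}; the linear term $R_L$ and the source $R_0$ are controlled by $\tau^{-(m_g-\cdots)}$-type factors which under Assumption~\ref{hyp:gen} are exactly at the threshold allowed by the reduced secular growth after modulation; and the modulation term $Z_p(w)$ is handled by the energy estimate for the time-dependent operator $L+(\text{modulation})$, which is the technical heart of \S\ref{sec:nonsecular}. Finally, since (as the introduction warns) the contraction property of $\Phi$ seems out of reach — one can only establish \emph{continuity} of $\Phi$ (Proposition~\ref{prop:continuityPhi}) — I would invoke the Schauder fixed-point theorem: the set of $(w,p)$ above, with the topology of, say, uniform convergence on compact $\tau$-intervals in a slightly weaker norm, is convex and compact (compactness in $\tau$ coming from the decay in the defining norms and an Arzelà–Ascoli / Aubin–Lions argument, using that $\Phi(w,p)$ enjoys slightly better bounds than required), and $\Phi$ is continuous on it, so it has a fixed point, which is the desired solution.

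\textbf{Main obstacle.} The serious difficulty is the energy estimate for the evolution generated by $i\partial_\tau - L - iZ_p$, i.e.\ the linearized operator perturbed by the time-dependent modulation term. One cannot simply use the spectral decomposition $H^1=S\oplus M$ and the bounds of Proposition~\ref{propWe}, because the modulation mixes the secular and non-secular parts; instead one must show that, once the secular modes are (all but one) neutralized by the choice of $p$, the solution on the non-secular part obeys a good energy estimate with constants uniform in $\tau_0$. This requires careful algebraic identities for $L$ (the generalized kernel, the conformal-type conservation laws) from \S\ref{sec:linear}, combined with a Gronwall argument in which the modulation terms, being $O(\tau^{-c(p)})$ with $c(p)>2$, are integrable. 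A secondary obstacle is organizing the fixed-point set so that $\Phi$ is simultaneously self-mapping, continuous, \emph{and} defined on a compact convex set — in particular the reconstruction $w\mapsto\tilde p(w)$ must be shown continuous (not merely well-defined), which uses Lemma~\ref{lem:qLip} and the implicit-function structure of the tuning equations. Everything else — the nonlinear estimate, the bilipschitz change of variables, the transfer of the flatness hypotheses on $V$ and $g$ to bounds on $V_p$, $g_p$ — is routine in the spirit of Section~\ref{sec:plain}.
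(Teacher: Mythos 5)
Your overall architecture is the same as the paper's: tune the modulation $p=p(w)$ so that the secular output is reduced to the single direction $n_6$, solve the non-secular part as a non-autonomous evolution (keeping the modulation term in the operator and using energy estimates), and close with Schauder because only continuity of the map is available, compactness coming from an Aubin--Lions/Simon argument in a weaker $Y$-topology. However, there is a concrete quantitative error that would make your scheme fail at the stability step: the working norms $\|w(\tau)\|_{H^s}\le \tau^{-a}$, $\|\langle y\rangle w(\tau)\|_{L^2}\le \tau^{-b}$ with $4<b<a$ are copied from Section~\ref{sec:plain} and are unattainable under Assumption~\ref{hyp:gen}. Since $g-1$ vanishes only to order $3$ and $V$ to order $1$, one has $\|R_0(\tau)\|_{L^2}\lesssim \tau^{-3}$, and the tuned parameters are only $|p(\tau)|\lesssim \tau^{-3+3\varepsilon}$ (their size is forced by the $D_j$, which contain the $\tau^{-3}$ contribution of $R_0$); hence the forcing $iZ_p(Q)+iR_0$ decays like $\tau^{-3+}$, and any Duhamel or Gronwall bound yields at best $\|w\|_{H^1}\lesssim \tau^{-2+}$ and $\|\langle y\rangle w\|_{L^2}\lesssim \tau^{-1+}$ --- precisely the rates in the statement. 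With $a,b>4$ the map is not self-mapping, and the claim that $R_L$, $R_0$ sit ``exactly at the threshold'' for such exponents is false. Relatedly, a single pair $(a,b)$ is not enough: one needs \emph{different} decay rates for the $H^\delta$-norm, the weighted norm and the residual $n_6$-coefficient (as in the $Y(\delta,\varepsilon,\tau_0)$-norm), together with a compatibility window of the type $\delta<a-b<\delta(2-3\varepsilon)$, without which the weighted/derivative energy estimates do not close.

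A second, lesser issue: the Duhamel identity you display, with $iZ_p(w)$ integrated against $e^{i(\tau-\sigma)L}$, cannot serve as the definition of the fixed point on the non-secular part. The terms $p_5|y|^2w$ and $(p_2+p_4y)\cdot\nabla w$ in $Z_p(w)$ lose weights and derivatives relative to the $\Sigma$-type norms, so $Z_p(w)$ is not controlled as a source in the space where $w$ lives; this is exactly why $P_MZ_p(\phi)$ must be kept on the left-hand side and absorbed through energy identities (where these terms appear only through commutators and integrations by parts, paired so as to be bounded by $|p|\,\|\phi\|_{H^1}(\|\phi\|_{H^1}+\|\langle y\rangle\phi\|_{L^2})$). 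You do say later that the time-dependent energy estimate handles $Z_p$, but then the displayed formula is not your fixed-point equation, and you still owe the existence/uniqueness of the decaying solution of that non-autonomous linear problem, including the fact that it carries no secular component (in the paper this is done by truncating the source in time, undoing the modulation to reduce to a Schr\"odinger equation with bounded coefficients, and passing to the limit). Finally, a small but real imprecision: the orthogonality conditions are imposed on the secular projection of the \emph{output} (vanishing of $\nu_1,\dots,\nu_5$), not on $w$ itself, and the surviving $n_6$-coefficient is not ``left orthogonal'' but estimated directly, using that $R_0$ is purely imaginary while $m_6$ is real.
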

\begin{proof}[Theorem~\ref{theo:reduced} implies Theorem~\ref{thblup}]
  Writing $v(\tau,y)= e^{i\tau}\(Q(y)+w(\tau,y)\)$, we first see that
  Theorem~\ref{theo:reduced} implies the existence of
  $p=p(\tau)$ like above, 
 and a solution $v\in C([\tau_0,\infty[;\Sigma)$ to 
\begin{gather*}
  i\partial_\tau v + \Delta v = V_p v
  - g_p|v|^{4/d}v +iZ_p(v),\\
\| v  (\tau)-e^{i\tau}Q\|_{H^1}\le \frac{C}{\tau^{2^-}},\quad \lVert \<y\>(  v
  (\tau)-e^{i\tau}Q)\rVert_{L^2}\le \frac{C}{\tau^{1^-}}.
\end{gather*}
If this holds, then Lemma~\ref{L:p} yields a modulation $q$ such that 
\begin{equation}\label{eq:modul0}
  |q_2(t)|+|q_4(t)-1|\Tend t {+\infty} 0, \quad
  |q_1(t)|+|q_3(t)|+|q_5(t)|\le \frac{C}{t^{1^+}}, 
\end{equation}
and a solution of \eqref{eq:waveop},
\begin{equation*}
    \widetilde v(t,x) =
  e^{i\(q_1(t) +q_3(t)\cdot x +q_5(t)|x|^2\)}\frac{1}{q_4(t)^{d/2}}v\(\g(t),
\frac{x}{q_4(t)}-q_2(t)\). 
  \end{equation*}
We now set
\begin{equation*}
  \theta(t)=\gamma(t)\quad ;\quad \l(t) = t q_4\(\frac{1}{t}\)\quad
  ;\quad x(t)= t q_4\(\frac{1}{t}\) q_2\(\frac{1}{t}\). 
\end{equation*}
Equation~\eqref{eq:gamma} and Lemma~\ref{L:p} show that indeed,
$\gamma(t)=t+o(t)$ as $t\to \infty$. We also know from
\eqref{eq:modul0} that
\begin{equation*}
  \l(t)\sim t\quad \text{and} \quad |x(t)|=o(t) \text{ as
  }t\to 0^+ .  
\end{equation*}
In view of the behavior of the $H^1$ and $\F H^1$ norms via the
pseudo-conformal transformation, we readily verify that
Theorem~\ref{thblup} follows from \eqref{eq:modul0}. 
\end{proof}
As suggested by the statement of Theorem~\ref{theo:reduced}, we
construct \emph{simultaneously} the modulation $p$ and the remainder
$w$. We will see in Section~\ref{sec:tuning} that these two unknowns are
related through a nonlinear process.

\section{The linearized operator}
\label{sec:linear}

To prove Theorem~\ref{theo:reduced}, we need more precise properties
concerning the linearized operator $L$ than
those recalled in Proposition~\ref{propWe}. We use again refined
estimates proved in \cite{We85} (see also \cite{BoWa97}). 
\smallbreak

As in \cite{We85}, we identify $\C$ with $\R^2$, and
the space of complex-valued functions $H^1(\R^d,\C)$ with the space
$H^1(\R^d,\R)\times H^1(\R^d,\R)$, 
considering the operator $\LL=iL$ as an operator on $L^2\times L^2$
with domain $H^2\times H^2$: 
\begin{equation*}
\LL=iL=\begin{pmatrix}0 & L_-\\ -L_+ & 0\end{pmatrix}, \quad
L_+=-\Delta+1-\(\frac{4}{d}+1\)Q^{4/d},\quad
L_-=-\Delta+1-Q^{4/d}.
\end{equation*}
Note that $\LL$ is not self-adjoint. We denote by
$$ \< f,g\>=\int_{\R^d} f_1 g_1+\int_{\R^d} f_2 g_2,$$
the scalar product on $L^{2}(\R^d)\times L^{2}(\R^d)$. The space
of secular modes is defined by 
$$S:=\bigcup_{\kappa\ge 1} N\(\LL^\kappa\),$$ 
where $N(A)$ is the null-space of the operator $A$. We next specify
the space $S$ and the dynamics of $e^{itL}$ on $S$. Note that by
direct calculation, 
\begin{equation}
\label{eq_radial_modes}
\left\{
\begin{aligned}
&L_-(|x|^2Q)=-4\(\frac{d}{2}Q+x\cdot \nabla Q\),\quad 
L_-Q=0,\\
-&L_+\(\frac{d}{2}Q+x\cdot\nabla Q\)=2 Q,
\end{aligned}
\right.
\end{equation}
\begin{equation} 
\label{eq_nonradial_modes}
L_-(x_\ell Q)=-2\d_{x_\ell}Q,\quad  L_+(\d_{x_\ell}Q)=0.
\end{equation}
Furthermore, there exists only one radial function $\tQ$ such that
$$L_+ \tQ=-|x|^2Q.$$
Consider for $1\le  \ell\le d$
\begin{gather*}
n_1=-i\alpha_0^{-1} Q; \quad
n_{2,\ell}=-\beta_0^{-1}\partial_{x_\ell}Q,\quad n_{3,\ell}=i\beta_0^{-1} x_\ell
Q\\ 
n_4=\alpha_0^{-1}\(\frac{d}{2}Q+x\cdot\nabla Q\),\quad
n_5=-i\alpha_0^{-1}\(\frac{1}{2}|x|^2Q+\gamma_0 Q\),\quad
n_6=\alpha_0^{-1}\tQ, 
\end{gather*}
(where $\alpha_0,\beta_0,\gamma_0$ are normalization constants,
$\alpha_0,\beta_0>0$). Then 
\begin{equation}
\label{LLL_n}
\left\{
\begin{aligned}
\LL n_1&=\LL n_{2,\ell}=0,& \LL n_4&=-2 n_1,&\LL n_{3,\ell}&=2 n_{2,\ell},\\
\LL n_5&=2n_4,&\LL n_6&=-2n_5+2\gamma_0n_1.&&
\end{aligned}\right.\end{equation} 
This shows that all $n_j$'s are in the space $S$. By similar
computations, the following functions are in the space
$S^*=\bigcup_{\kappa\ge 1} N\(({\LL^*})^\kappa\)$: 
\begin{alignat*}{6}
m_1&=i\tQ, &\ m_{2,\ell}&=x_\ell Q,&\ m_{3,\ell}&=-i\partial_{x_\ell}Q,\\
m_4&=-\frac{1}{2}|x|^2Q-\gamma_0 Q,&\ m_5&=i\frac{d}{2}Q+ix\cdot\nabla
Q,&\ m_6&=-Q.
\end{alignat*}
Moreover,  $M=(S^*)^{\bot}$, and $\< n_k,m_j\>=\delta_{jk}$, so that
$$ P_S h=\sum_{1\le j\le 6} \nu_j n_j,\text{ where }
\nu_j=\< h,m_j\>.$$ 
As a consequence, in view of \eqref{LLL_n}, the exact dynamics of
$e^{itL}$ on $S$ is obtained.

\begin{proposition}
\label{P:We2}
Let $G\in C(\R;H^1\times H^1)$, and $W$ such that
\begin{equation}
\partial_t W+iL W=G
\end{equation}
Denote $\nu_{j}=\< W,m_j\>$ and $d_j=\< G,m_j\>$. Then,
\begin{align*}
\nu'_1&=2\nu_4-2\gamma_0\nu_6+d_1 & \nu'_{2,\ell}&=-2\nu_{3,\ell}+d_{2,\ell} &
\nu'_{3,\ell}&=d_{3,\ell}\\
\nu'_4&=-2\nu_5+d_4&
\nu'_5 &=2\nu_6+d_5&\nu'_6&=d_6.
\end{align*}
\end{proposition}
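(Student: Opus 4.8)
\emph{Plan.} This is a direct consequence of the structure of the secular space $S$ described above: one projects the evolution equation onto $S$ and reads off the system from the relations \eqref{LLL_n}, which exhibit $\LL=iL$ restricted to $S$ as a nilpotent perturbation of a diagonal operator.

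\emph{Step 1 (passage to $S$).} Each $m_j$ is a finite linear combination of $Q$, $\partial_{x_\ell}Q$, $x_\ell Q$, $|x|^2Q$ and $\tQ$, hence a Schwartz function, so $\nu_j=\<W,m_j\>$ and $d_j=\<G,m_j\>$ are well defined; differentiating in $t$ and using the equation gives
\[
  \nu_j'=\<\partial_t W,m_j\>=\<G-\LL W,m_j\>=d_j-\<\LL W,m_j\>.
\]
Let $P_Sh=\sum_k\<h,m_k\>n_k$ be the projector onto $S$ recalled above; it is finite-rank, hence bounded on $L^2$ and extends to $H^{-1}$. Since both $S$ and $M$ are $\LL$-invariant, $P_S\LL=\LL P_S$; moreover $\LL P_MW\in M$, which is annihilated by $m_j$ because $M=(S^*)^\perp$ and $m_j\in S^*$. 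Therefore $\<\LL W,m_j\>=\<\LL P_SW,m_j\>$; equivalently, applying $P_S$ to the equation, $\partial_t(P_SW)+\LL(P_SW)=P_SG$.

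\emph{Step 2 (the linear system).} Write $P_SW=\sum_k\nu_k n_k$ and $P_SG=\sum_k d_k n_k$, the index running over the $2d+4$ linearly independent modes $n_1,n_{2,\ell},n_{3,\ell},n_4,n_5,n_6$ ($1\le\ell\le d$). Relations \eqref{LLL_n} give
\[
  \LL(P_SW)=\bigl(-2\nu_4+2\gamma_0\nu_6\bigr)n_1+2\sum_\ell\nu_{3,\ell}\,n_{2,\ell}+2\nu_5\,n_4-2\nu_6\,n_5,
\]
and matching the coefficient of each $n_k$ in $\sum_k\nu_k'n_k=\sum_k d_kn_k-\LL(P_SW)$ yields exactly the announced equations; in particular $\LL n_{3,\ell}$ and $\LL n_6$ carry no $n_{3,\ell}$, resp.\ $n_6$, component, whence $\nu_{3,\ell}'=d_{3,\ell}$ and $\nu_6'=d_6$. (Equivalently, one may bypass $P_S$ and compute $\LL^*m_j$ from \eqref{LLL_n} through $\<\LL^*m_j,n_k\>=\<m_j,\LL n_k\>$ and $\<n_k,m_j\>=\delta_{jk}$, obtaining $\LL^*m_1=-2m_4+2\gamma_0m_6$, $\LL^*m_{2,\ell}=2m_{3,\ell}$, $\LL^*m_4=2m_5$, $\LL^*m_5=-2m_6$ and $\LL^*m_{3,\ell}=\LL^*m_6=0$, then substituting in $\nu_j'=d_j-\<W,\LL^*m_j\>$.)

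\emph{Main point.} There is no real obstacle here: the content lies entirely in the algebraic identities \eqref{LLL_n} from \cite{We85}. The only thing deserving a line is that these manipulations are legitimate at the regularity $G\in C(\R;H^1\times H^1)$ — that $P_S$ acts on $\LL W\in H^{-1}\times H^{-1}$ and commutes with $\LL$ there — which holds by density, using the Schwartz decay of the $m_k$ together with the fact that $\LL$ is a second-order differential operator with smooth coefficients leaving $S$ and $M$ invariant.
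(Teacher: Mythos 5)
Your proposal is correct and follows exactly the route the paper intends: the proposition is stated there as an immediate consequence of the biorthogonality $\<n_k,m_j\>=\delta_{jk}$, the invariance of $S$, $M$ (equivalently of $S^*$ under $\LL^*$), and the relations \eqref{LLL_n}, which is precisely the computation you carry out, with the coefficients matching in every equation. Your remark on justifying the duality/projection manipulations at the $H^1$/$H^{-1}$ level is a reasonable (and harmless) addition to what the paper leaves implicit.
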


\section{Tuning the modulation}
\label{sec:tuning}

Our approach consists of a careful examination of \eqref{eq:w}. As we
have seen in the previous section, we can 
write $H^1=M\oplus S$. Recall that
$S$, the generalized kernel of $iL$, is a finite dimensional space,
and that the group $e^{itL}$ is bounded on $M$. To construct the wave
operator of Theorem~\ref{theo:reduced}, we have to control the secular
part of $w$ (its $S$ component). We decompose $w$ into 
$w=w_S+w_M$. By noticing that 
\begin{align}
\label{ZPQ}
Z_p(Q)&=-i\(p_1+p_3\cdot y+p_5|y|^2\)Q+p_2\cdot \nabla
Q+p_4\(\frac{d}{2} +y\cdot \nabla \)Q\\ 
\notag
&=p_1\alpha_0 n_1-p_3\beta_0 n_3+2p_5\alpha_0( n_5-\gamma_0
n_1)-p_2\beta_0 \cdot n_2+\alpha_0 n_4 
\end{align}
is in $S$, we deduce the projected
equations on $S$ and on $M$. Namely, we want to construct a solution
to the system  
\begin{align}
\label{eq_wS}
&\partial_\tau w_S+iLw_S=P_SR_p(w)+P_SZ_p(w)+Z_p(Q),\\
\label{eq_wM}
&\partial_\tau w_M+iLw_M -P_MZ_p(w_M)=P_MR_p(w)+P_MZ_p(w_S).
\end{align}
We introduce
\begin{equation}\label{operator}
  \begin{aligned}
      \Phi(w)(\tau)&= \int_\tau^\infty
  e^{i(\tau-\sigma)L}\(P_SR_p(w)+P_SZ_p(w)+Z_p(Q)\)d\sigma+\Phi_2(w)(\tau)\\
&= \Phi_1(w)(\tau)+ \Phi_2(w)(\tau), 
  \end{aligned}
\end{equation}
where $ \Phi_2(w)=\phi$ is the solution (in $M$ for all $\tau$) of the equation
\begin{equation*}
  \partial_\tau \phi+iL\phi -P_MZ_p(\phi)=P_MR_p(w)+P_MZ_p(w_S).
\end{equation*}
The existence of $ \Phi_2(w)$ will be shown in
\S\ref{sec:nonsecular}. In the
present section, we define the modulation parameter $p$, and estimate
$\Phi_1(w)$. 
The main point in our approach is that $p$  depends on
$w$, and is chosen so that the secular part $\Phi_1(w)$ of $\Phi(w)$ 
belongs to $\vect (n_6)$. As $p$ also appears in
  the definition of $\Phi$ in \eqref{operator}, the 
dependence of $\Phi$ upon $w$ is more implicit (and more nonlinear)
than it may seem. 
\smallbreak

As it is standard, we shall construct  in Section~\ref{sec:fixed} a fixed point
for $\Phi$. However, we shall not use Banach--Picard result (based on
contractions), but rather the Schauder fixed point argument
(based on compactness).

For
$p=(p_1,\ldots,p_5)$, $c>0$, denote, once and for all, 
$$ |p(\tau)|=\max_{1\le k \le 5}|p_k(\tau)|,\quad
\|p\|_{c,\tau_0}=\max_{1\le k \le 5} \|p_k\|_{c,\tau_0}.$$  
The main result of this section is the following:
\begin{proposition}\label{P:p}
Let Assumption~\ref{hyp:gen} be satisfied.
Let $\eps\in ]0,1/3[$. Then if $\tau_0>0$ is large enough we have the
following property.  Let $w\in C([\tau_0,\infty[;H^1)$ with 
\begin{equation}\label{decay_w}
  \sup_{\tau \ge \tau_0}\tau^{2-\eps} \|w(\tau)\|_{H^1}\le 1. 
\end{equation}
There exists a unique
modulation parameter 
$p=p(w)$, such that, for $\tau\ge \tau_0$, 
\begin{equation}\label{pjest}
|p(\tau)|\le \frac{1}{\tau^{3-3\eps}},
\end{equation}
and
\begin{equation*}
\Phi_1(w)(\tau)=\int_\tau^\infty
  e^{i(\tau-\sigma)L}\(P_SR(w)+P_SZ_p(w)+Z_p(Q)\)d\sigma\in \vect{n_6}.
\end{equation*}
Furthermore, for this choice of $p$
\begin{equation}
\label{estimate_m6}
\forall \tau\ge \tau_0,\quad
\left|\<\Phi_1(w)(\tau),m_6\>\right|\le \frac{C}{\tau^{3-2\eps}},
\end{equation} 
where $C$ does not depend on $w$. 
\end{proposition}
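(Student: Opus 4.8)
The plan is to determine the modulation parameter $p=p(w)$ by imposing that the secular component $\Phi_1(w)$ kill all secular modes except $n_6$. Using Proposition~\ref{P:We2}, the fact that $\Phi_1(w)$ solves $\partial_\tau \Phi_1 + iL\Phi_1 = -(P_SR(w)+P_SZ_p(w)+Z_p(Q))$ (the sign coming from the Duhamel integral from $\tau$ to $\infty$), I would write out the five (well, $2d+4$, but block-diagonalized) ODE's for the coordinates $\nu_j(\tau) = \langle \Phi_1(w)(\tau),m_j\rangle$. The source terms $d_j$ on the right split, by linearity, into a part coming from $Z_p(Q)$ — which, by the explicit formula \eqref{ZPQ}, contributes $p_1\alpha_0$ to the $n_1$-slot, $-p_3\beta_0$ to the $n_3$-slot, $2p_5\alpha_0$ to the $n_5$-slot (plus a $-2p_5\alpha_0\gamma_0$ correction to the $n_1$-slot), $-p_2\beta_0$ to the $n_2$-slot, and $\alpha_0$ to the $n_4$-slot — a part $\langle P_SZ_p(w),m_j\rangle$ which is linear in $p$ with coefficients controlled by $\|w\|_{H^1}\lesssim \tau^{-2+\eps}$, and a part $\langle P_SR(w),m_j\rangle$ independent of $p$, whose size is dictated by the flatness assumptions on $V$, $g-1$ (Assumption~\ref{hyp:gen}) together with the decay of $w$ and the exponential decay of $Q$.

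The next step is the bootstrap: the requirement $\Phi_1(w)(\tau)\in\vect(n_6)$ forces $\nu_j(\tau)\equiv 0$ for $j\ne 6$, which in turn forces $\nu_j'(\tau)\equiv 0$. Feeding this into the ODE's of Proposition~\ref{P:We2}, I get $0 = 2\nu_4 - 2\gamma_0\nu_6 + d_1$, etc. Since $\Phi_1(w)\in\vect(n_6)$, $\nu_4 = 0$ and this reads $d_1 = 2\gamma_0\nu_6$; similarly $d_{2,\ell}=0$, $d_{3,\ell}=0$, $d_4 = 0$, $d_5 = -2\nu_6$. These are five (vector-)equations in the five unknowns $p_1,\ldots,p_5$ (and $\nu_6$), structured triangularly thanks to the block-diagonal form in \eqref{ZPQ}: $d_4 = 0$ determines $p_4$ (the $\alpha_0$ term must balance $\langle P_S(R(w)+Z_p(w)),m_4\rangle$), then $d_{3,\ell}=0$ and $d_{2,\ell}=0$ determine $p_3$ and $p_2$, then $d_5 = -2\nu_6$ with $\nu_6 = \langle\Phi_1(w),m_6\rangle$ computed from $d_6 = \nu_6'$... — here one has to be slightly careful, since $\nu_6$ itself is an integral of $d_6$ from $\tau$ to $\infty$, so the system is of a Volterra/fixed-point type rather than purely algebraic. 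I would set this up as a fixed point for $p$ in the space $W(3-3\eps,\tau_0)$ (using the $W(c,\tau_0)$ notation of \eqref{eq:defW} adapted to the vector $p$): given a candidate $p$, compute $\nu_6(\tau) = \int_\tau^\infty d_6(\sigma)\,d\sigma$, then read off $p_1,\ldots,p_5$ from the algebraic relations above, and check this map is a contraction on the ball of radius $\tau_0^{-\text{something}}$ for $\tau_0$ large. The source terms $\langle P_SR(w),m_j\rangle$ are $O(\tau^{-3+2\eps})$ or better — the $n_4$ slot, paired against $V_p Q$ and $(1-g_p)F(Q)$, is where the worst term lives: $V$ vanishes to order $1$ at $0$ and carries a $1/t^2 = O(1/\tau^2)$ prefactor, so $V_p Q$ contributes at scale $\tau^{-2}\cdot\tau^{-1} = \tau^{-3}$; and $(1-g_p)F(Q)$, with $g-1$ vanishing to order $3$, contributes at scale $\tau^{-3}$ — both consistent with \eqref{pjest}, and the $R(w)$ and $R_{NL}(w)$ pieces are smaller by the decay of $w$.

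Once $p$ is pinned down, estimate \eqref{pjest} is essentially the statement that the fixed point lives in the ball of radius $\tau^{-(3-3\eps)}$, which falls out of the contraction-mapping bookkeeping. For \eqref{estimate_m6}, I would integrate the $\nu_6' = d_6$ equation: $\langle\Phi_1(w)(\tau),m_6\rangle = \int_\tau^\infty d_6(\sigma)\,d\sigma$ where $d_6 = \langle P_S(R(w) + Z_p(w)) + Z_p(Q), m_6\rangle$; note $Z_p(Q)$ has no $n_6$ component by \eqref{ZPQ}, so $d_6 = \langle P_S(R(w)+Z_p(w)),m_6\rangle$. Pairing $m_6 = -Q$ against $R_0 = (1-g_p)F(Q) + V_pQ$ (modulo the sign/$i$ conventions in \eqref{eq:defR}) and using flatness of $g-1$ to order $3$ and of $V$ to order $1$, plus the $1/t^2$ factor and exponential decay of $Q$ and the cutoff $|x|\lesssim\tau$ versus $|x|\gtrsim\tau$ trick from \S\ref{sec:plain}, gives $|d_6(\sigma)|\lesssim \sigma^{-3}$, possibly with an $\eps$-loss from $\|w\|_{H^1}\lesssim\sigma^{-2+\eps}$ in the $Z_p(w)$ and $R_L(w)$ contributions; integrating from $\tau$ to $\infty$ yields $\lesssim \tau^{-2}$, and with the $\eps$-loss exactly $\tau^{-(3-2\eps)}$ after one is careful about which term dominates. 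The main obstacle I anticipate is the self-referential structure noted above: $p$ enters $R_p(w)$, $Z_p$, $g_p$, and $V_p$ (through $g(q_4(y+q_2)/\gamma^{-1}(\tau))$ etc., with $q_2,q_4$ themselves functions of $p$ via Lemma~\ref{L:p}), so the equations defining $p$ are genuinely nonlinear in $p$, not just in $w$; making the contraction close requires carefully tracking that every occurrence of $p$ inside the source terms comes multiplied by a smallness factor (either a power of $\tau_0^{-1}$ or the smallness of $w$ or of $g-1$/$V$), so that the Lipschitz constant in $p$ is $<1$. The uniqueness claim then follows from the same contraction.
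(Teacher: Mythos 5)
Your overall scheme is the paper's: derive the ODEs for $\nu_j=\<\Phi_1(w),m_j\>$ from Proposition~\ref{P:We2}, impose $d_2=d_3=d_4=0$, $d_5=-2\nu_6$, $d_1=2\gamma_0\nu_6$, and close a fixed point for $p$ in the unit ball of $W(3-3\eps,\tau_0)$, tracking the nonlinear dependence of the sources on $p$ through $q(p)$ (Lemmas~\ref{L:p}, \ref{lem:qLip}, \ref{lem:Rlip}) to get the contraction. However, there is a genuine gap at the step on which both \eqref{pjest} and \eqref{estimate_m6} hinge, namely the size of $d_6$. You estimate the pairing of $R_0$ with $m_6$ by $\sigma^{-3}$ using the flatness of $g-1$ and $V$; integrating from $\tau$ to $\infty$ then gives only $|\nu_6(\tau)|\lesssim\tau^{-2}$, which is strictly weaker than the required $\tau^{-(3-2\eps)}$ (your closing phrase ``with the $\eps$-loss exactly $\tau^{-(3-2\eps)}$'' goes the wrong way: $\tau^{-(3-2\eps)}\ll\tau^{-2}$). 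Worse, since $p_5$ is read off from $d_5=-2\nu_6$ (up to the $D_5$ contribution), a bound $\nu_6\sim\tau^{-2}$ would force $p_5\sim\tau^{-2}$, violating \eqref{pjest} because $3-3\eps>2$ for $\eps<1/3$; the fixed point would not close at all with your numbers.

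The missing ingredient is a cancellation, not a sharper size estimate: by \eqref{eq:defR}, $iR_0=(1-g_p)F(Q)+V_pQ$ is real, so $R_0$ is purely imaginary, while $m_6=-Q$ (as well as $m_2$ and $m_4$) is real; hence $\<R_0,m_6\>=0$ identically. Consequently $D_6$ is controlled only by the $w$-dependent terms, $|D_6|\lesssim \|w\|_{H^1}^2+\|w\|_{H^1}^{1+4/d}+\tau^{-2}\|w\|_{L^2}+|p|\,\|w\|_{L^2}\lesssim \tau^{-(4-2\eps)}$, which after integration gives exactly the $\tau^{-(3-2\eps)}$ of \eqref{estimate_m6} and lets the map $\Psi$ preserve the unit ball of $W(3-3\eps,\tau_0)$ (this is the second and first cases of \eqref{Djest} in the paper). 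For the same parity reason your identification of the worst source with the $m_4$-slot is off: $\<R_0,m_4\>=0$ too, and the genuine $\tau^{-3}$ source sits in the $m_3$-slot, where it is harmless since only $|p_3|\le\tau^{-(3-3\eps)}$ is needed; the paper further improves the $j=1,5$ slots using that $m_1,m_5$ are radial, so the cubic Taylor term of $g$ integrates to zero against them. Apart from this (and a harmless sign slip in $\nu_6(\tau)=-\int_\tau^\infty d_6$), your construction coincides with the paper's.
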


We prove Proposition~\ref{P:p} in \S\ref{sec:proj}. We first need some {\it a priori} estimates for arbitrary $p$.

\smallbreak

\subsection{General estimates}
\label{sec:estgen}
Recall from \eqref{eq:defR} the notations:
\begin{equation*}
\left\{
 \begin{aligned}
  iR_{NL}(w) &= -g_p\times \( F\(Q+w\) -F(Q) -\ell(w)\),\\
iR_{L}(w)&= \(1- g_p \)\times  \ell(w)+ V_p w,\\
iR_0&= \(1- g_p \)\times  F(Q)+ V_p Q ,
\end{aligned} 
\right.
\end{equation*}
with
\begin{equation*}
 F(z)=|z|^{4/d}z\quad ;\quad \ell(w) =
\(\frac{2}{d}+1\)Q^{4/d}w +\frac{2}{d}Q^{4/d}\overline w.
\end{equation*} 
\begin{lemma}\label{lem:Rlip}
  Let Assumption~\ref{hyp:gen} be satisfied, and
$$ \|p_k\|_{c(p),\tau_0}\le 1,\; \|\tilde{p}_k\|_{c(p),\tau_0}\le 1,\quad
k\in\{1,\dots ,5\},$$ 
where $c(p)\in ]2,3[$. Then, for every fixed $w$, we have the pointwise
estimates
\begin{gather}
\label{RNL_pointwise} 
  |R_{NL}(w)|\lesssim Q\lvert w\rvert^2+\sum_{3\le j\le 1+4/d}|w|^j,\\
\label{RL_pointwise} 
  |R_L(w)|\lesssim 
    \frac{\<y\>^{3}}{\tau^{3}}e^{-c\<y\>} |w| +
    \frac{1}{\tau^2}\min\Big(1,\frac{\<y\>}{\tau}\Big) |w|,\\
\label{R_pointwise_diff}
 \lvert R_p(w)-R_{\tilde p}(w)\rvert\lesssim \frac{1}{\tau^{c(p)+1}}
  \|p-{\tilde p}\|_{c(p)}\(e^{-c\<y\>} +
  \<y\>^3|w|\<w\>^{4/d}\)\\
  \label{RNL_pointwise_diff}
\lvert R_{NL,p}(w)-R_{NL,\tilde p}(w)\rvert\lesssim \frac{1}{\tau^{c(p)+1}}
  \|p-{\tilde p}\|_{c(p),\tau_0}
  \<y\>^3|w|^2\<w\>^{4/d-1},\\
\label{RL_pointwise_diff}
\lvert R_{L,p}(w)-R_{L,\tilde p}(w)\rvert\lesssim \frac{1}{\tau^{c(p)+1}}
  \|p-{\tilde p}\|_{c(p),\tau_0}
  \<y\>|w|.
\end{gather}
\end{lemma}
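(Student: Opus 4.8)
I would first fix notation: write $t=\gamma^{-1}(\tau)$ and $x=q_4(y+q_2)$, so that $g_p(\tau,y)=g(x/t)$ and $V_p(\tau,y)=(q_4^2/t^2)V(x/t)$. Since all the $p_k$ carry the same exponent $c(p)$ with $2<c(p)<3$ and $\|p_k\|_{c(p),\tau_0}\le1$, the hypotheses of Lemma~\ref{L:p} are met, so $q_4\asymp1$, $|q_2|\lesssim1$, $t\asymp\tau$ and $\langle y\rangle\asymp\langle x\rangle$; in particular $|x|\lesssim\langle y\rangle$. Since $d\le2$, $4/d\in\{2,4\}$, so $F(z)=|z|^{4/d}z$ is a polynomial in $(z,\bar z)$ of degree $1+4/d$ and $\ell(w)$ is exactly its differential at $Q$; hence $F(Q+w)-F(Q)-\ell(w)$ is a sum of monomials of degree $j$ in $(w,\bar w)$, $2\le j\le1+4/d$, with coefficient a fixed constant times $Q^{1+4/d-j}$. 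As $Q$ is bounded and $1+4/d-j\ge1$ when $j=2$, this gives $|F(Q+w)-F(Q)-\ell(w)|\lesssim Q|w|^2+\sum_{3\le j\le1+4/d}|w|^j$, and hence \eqref{RNL_pointwise} since $g_p$ is bounded. I will use throughout that $Q$ decays exponentially, so that $Q^{4/d}\lesssim e^{-c\langle y\rangle}$ and $|\ell(w)|\lesssim e^{-c\langle y\rangle}|w|$.

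For \eqref{RL_pointwise}, set $z=x/t$. If $|z|\le1$, Assumption~\ref{hyp:gen} gives $|g(z)-1|\lesssim|z|^3=|x|^3/t^3\lesssim\langle y\rangle^3/\tau^3$ and $|V(z)|\lesssim|z|\lesssim\langle y\rangle/\tau$, whence $|V_p|\lesssim\tau^{-2}\min(1,\langle y\rangle/\tau)$. If $|z|>1$ then $|x|>t$, so $\langle y\rangle\asymp\langle x\rangle\ge|x|>t\asymp\tau$; since $g-1$ and $V$ are bounded, $|g(z)-1|\lesssim1\lesssim\langle y\rangle^3/\tau^3$ and $|V_p|\lesssim\tau^{-2}\lesssim\tau^{-2}\min(1,\langle y\rangle/\tau)$. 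Inserting these together with $|\ell(w)|\lesssim e^{-c\langle y\rangle}|w|$ into $iR_L(w)=(1-g_p)\ell(w)+V_pw$ yields \eqref{RL_pointwise}.

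The core of the proof is the difference estimate
\begin{equation*}
 |g_p-g_{\tilde p}|\lesssim\frac{\langle y\rangle^3}{\tau^{c(p)+1}}\,\|p-\tilde p\|_{c(p),\tau_0},\qquad
 |V_p-V_{\tilde p}|\lesssim\frac{\langle y\rangle}{\tau^{c(p)+1}}\,\|p-\tilde p\|_{c(p),\tau_0}.
\end{equation*}
Because $\nabla(g-1)$ vanishes to order $2$ near $0$ and $\nabla g,\nabla V$ are bounded, one has $|g(\xi)-g(\eta)|\lesssim(|\xi|^2+|\eta|^2)|\xi-\eta|$ for $|\xi|,|\eta|\le1$ and $|g(\xi)-g(\eta)|+|V(\xi)-V(\eta)|\lesssim|\xi-\eta|$ in all cases, so it remains to estimate $x/t-\tilde x/\tilde t$, the tilded objects being produced from $\tilde p$ by Lemma~\ref{L:p}. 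Lemma~\ref{lem:qLip} gives $|q_4-\tilde q_4|+|q_2-\tilde q_2|\lesssim\tau^{2-c(p)}\|p-\tilde p\|_{c(p),\tau_0}$, and integrating $t=\tau_0+\int_{\tau_0}^\tau q_4^2$ (and its analogue for $\tilde t$), together with $c(p)<3$, yields $|t-\tilde t|\lesssim\int_{\tau_0}^\tau|q_4-\tilde q_4|\,|q_4+\tilde q_4|\,d\sigma\lesssim\tau^{3-c(p)}\|p-\tilde p\|_{c(p),\tau_0}$. Decomposing $x/t-\tilde x/\tilde t=(q_4/t-\tilde q_4/\tilde t)y+(q_4q_2/t-\tilde q_4\tilde q_2/\tilde t)$ and using $t,\tilde t\asymp\tau$ then gives $|x/t-\tilde x/\tilde t|\lesssim\tau^{1-c(p)}\langle y\rangle\|p-\tilde p\|_{c(p),\tau_0}$ and $|q_4^2/t^2-\tilde q_4^2/\tilde t^2|\lesssim\tau^{-c(p)}\|p-\tilde p\|_{c(p),\tau_0}$. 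Splitting once more according to whether the two rescaled arguments lie in the unit ball (then $|x/t|,|\tilde x/\tilde t|\lesssim\langle y\rangle/\tau$, and one uses the quadratic improvement for $g$) or not (then $\langle y\rangle\gtrsim\tau$, and the bare Lipschitz estimate already produces the required power of $\langle y\rangle/\tau$) yields the two displayed bounds. Then \eqref{RNL_pointwise_diff} follows from $R_{NL,p}(w)-R_{NL,\tilde p}(w)=i(g_{\tilde p}-g_p)(F(Q+w)-F(Q)-\ell(w))$ with \eqref{RNL_pointwise} and $Q|w|^2+\sum_{3\le j\le1+4/d}|w|^j\lesssim|w|^2\langle w\rangle^{4/d-1}$; \eqref{RL_pointwise_diff} follows from $R_{L,p}(w)-R_{L,\tilde p}(w)=-i[(g_{\tilde p}-g_p)\ell(w)+(V_p-V_{\tilde p})w]$ with $|\ell(w)|\lesssim e^{-c\langle y\rangle}|w|$ and $\langle y\rangle^3e^{-c\langle y\rangle}\lesssim\langle y\rangle$; and \eqref{R_pointwise_diff} follows by adding $R_{0,p}-R_{0,\tilde p}=-i[(g_{\tilde p}-g_p)F(Q)+(V_p-V_{\tilde p})Q]$, whose modulus is $\lesssim\tau^{-(c(p)+1)}\|p-\tilde p\|_{c(p),\tau_0}e^{-c\langle y\rangle}$, together with $|w|^2\langle w\rangle^{4/d-1}\le|w|\langle w\rangle^{4/d}$ and $\langle y\rangle|w|\le\langle y\rangle^3|w|\langle w\rangle^{4/d}$.

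I expect the genuinely delicate point to be this last difference estimate: keeping exact track of the powers of $\tau$ through the change of variables $\tau\leftrightarrow t$ — in particular the a priori growing factor $|t-\tilde t|\lesssim\tau^{3-c(p)}\|p-\tilde p\|_{c(p),\tau_0}$ — and recognizing that in the region where the flatness of $g$ and $V$ brings nothing one automatically has $\langle y\rangle\gtrsim\tau$, which is precisely the gain that compensates. Everything else reduces to routine expansions combined with the boundedness of $g$, $V$ and their derivatives and the exponential decay of $Q$.
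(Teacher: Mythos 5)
Your proof is correct and follows essentially the same route as the paper: pointwise bounds from the structure of $R_{NL}$, $R_L$ and Assumption~\ref{hyp:gen}, then the key difference estimates on $g_p-g_{\tilde p}$ and $V_p-V_{\tilde p}$ obtained via Lemmas~\ref{L:p} and~\ref{lem:qLip} by first controlling $|\gamma^{-1}-\tilde\gamma^{-1}|$ through the crude $\tau^{3-c(p)}$ bound and then passing to the reciprocal. The only cosmetic deviations are your case split on whether the rescaled arguments lie in the unit ball (the paper instead uses the global bound $|g(a)-g(b)|\lesssim|a-b|(|a|^2+|b|^2)$, valid since $\nabla g$ is bounded and $O(|x|^2)$ near $0$) and your assembling of \eqref{R_pointwise_diff} from the three pieces rather than estimating $R_p$ directly.
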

\begin{proof}
Estimates \eqref{RNL_pointwise} and
\eqref{RL_pointwise} follow from the definition of $R_{NL}$ and of $R_{L}$ (see
\eqref{eq:defR}), and, for \eqref{RL_pointwise}, from Assumption~\eqref{hyp:gen}. 

Next we estimate $|g_p-g_{\tilde p}|$.  Notice that
\begin{equation*}
\left|\frac{1}{\gamma^{-1}(\tau)}-
\frac{1}{\tilde \gamma^{-1}(\tau)}\right|=
\left|\frac{\tilde \gamma^{-1}(\tau)-
\gamma^{-1}(\tau)}{\gamma^{-1}(\tau)\tilde \gamma^{-1}(\tau)}\right|.   
\end{equation*}
We have
\begin{equation*}
  \frac{d \gamma^{-1}(\tau)}{d\tau} = q_4\(\tau\)^2 =
  \(1+ q_{4r}\(\tau\)\)^2.
\end{equation*}
Therefore, by Lemma~\ref{lem:qLip}:
\begin{equation*}
 \Big\lvert \frac{d}{d\tau}\big( \tilde \gamma^{-1}(\tau)-
\gamma^{-1} (\tau)\big) \Big\rvert \lesssim \left\lvert
q_{4r}\(\tau\) -\tilde q_{4r}
\(\tau\) \right\rvert 
\lesssim \frac{1}{\tau^{c(p)-2}} \|p-\tilde
  p\|_{c(p),\tau_0},
\end{equation*}
 Integrating between $\tau_0$ and $\tau$ and using that
$\gamma(\tau_0)=\tilde \gamma(\tau_0)=\tau_0$
we get, since
$c(p)<3$,  
\begin{equation*}
\left\lvert \tilde
  \gamma^{-1}(\tau)- 
\gamma^{-1}(\tau)\right\rvert\lesssim \frac{1}{\tau^{c(p)-3}}
\|p-\tilde 
  p\|_{c(p),\tau_0}. 
\end{equation*}
This rather poor estimate yields the more interesting one
\begin{equation*}
  \left|\frac{1}{\gamma^{-1}(\tau)}-
\frac{1}{\tilde \gamma^{-1}(\tau)}\right|\lesssim \frac{1}{\tau^{c(p)-1}}
\|p-\tilde 
  p\|_{c(p),\tau_0}. 
\end{equation*}
Denote $\lambda = q_4/\gamma^{-1}$, and $\tilde \lambda$ its
counterpart associated to $\tilde p$. We can write
\begin{equation*}
  g_p(\tau,y)-g_{\tilde p}(\tau,y)= g(\lambda y +\lambda q_2)-
  g(\tilde \lambda y +\tilde\lambda\tilde q_2 ).  
\end{equation*}
Note that Assumption~\ref{hyp:gen} implies
\begin{equation*}
  \lvert g(a)-g(b)\rvert \lesssim \lvert a-b\rvert\(|a|^2+|b|^2\). 
\end{equation*}
Invoking Lemma~\ref{L:p} and Lemma~\ref{lem:qLip}, we deduce
\begin{align*}
  \lvert g_p(\tau,y)-g_{\tilde p}(\tau,y) \rvert &\lesssim
  \(\lvert \lambda -\tilde \lambda\rvert \lvert y\rvert +\lvert
  \lambda q_2-\tilde \lambda \tilde q_2\rvert\)\(\lambda^2+\tilde
  \lambda^2\)\<y\>^2 \\
&\lesssim \frac{1}{\tau^{c(p)-1}}
 \|p-{\tilde 
  p}\|_{c(p),\tau_0}\<y\> \times \frac{1}{\tau^2}\<y\>^2\\
 &\lesssim \frac{1}{\tau^{c(p)+1}}
\|p-{\tilde 
  p}\|_{c(p),\tau_0}\<y\>^3. 
\end{align*}
We have a similar estimate on
$V_{p}-V_{\tilde{p}}=\lambda^2V(\lambda(y+q_2))-
\tilde{\lambda}^2V(\tilde{\lambda}(y+\tilde{q}_2))$: 
\begin{equation*}
  \lvert V_p(\tau,y)-V_{\tilde p}(\tau,y) \rvert \lesssim \frac{1}{\tau^{c(p)+1}}
\|p-{\tilde 
  p}\|_{c(p),\tau_0}\<y\>.
\end{equation*}
By definition, we have (without splitting the terms as in
  \eqref{eq:defR})
  \begin{equation*}
    -iR_p(w) = g_p \times \lvert Q+w\rvert^{4/d}(Q+w) - V_p\times
    (Q+w)-F(Q)-\ell (w).
  \end{equation*}
We also have
$$ \left|Q+w\right|^{1+4/d}\lesssim Q^{1+4/d}+|w|^{1+4/d},$$
and the estimate \eqref{R_pointwise_diff} follows. 

Estimates \eqref{RNL_pointwise_diff} and \eqref{RL_pointwise_diff} of
the lemma are a straightforward consequence 
of the definitions \eqref{eq:defR} of $R_{L,p}$ and $R_{NL,p}$, and of
the above estimates.  
\end{proof}

We introduce the notation, for $1\le j\le 6$, 
\begin{equation}\label{Dj}
D_j(p)(\tau)=\<P_SR_p(w)+P_SZ_p(w),m_j\>.
\end{equation}
\begin{lemma}\label{lem:5.1}
 Let Assumption~\ref{hyp:gen} be satisfied. If 
$$ \|p\|_{c(p),\tau_0}\le 1,$$
where $c(p)>2$, then we have for all
$\tau\ge \tau_0$, 
\begin{equation}
\label{Djest}
\begin{aligned}
 |D_j(p)(\tau)|
\lesssim &\  \|w\|_{H^1}^{1+4/d} +
\|w\|_{H^1}^2+\frac{1}{\tau^2}\|w\|_{L^2}+|p(\tau)| 
\|w\|_{L^2}\\
&+
\begin{cases}
0&\text{ if } j=2,4,6\\
1/{\tau^{1+c(p)^-}}+
1/{\tau^4}&\text{ if }j=1,5\\ 
1/{\tau^3}&\text{ if }j=3.
\end{cases} 
\end{aligned}
\end{equation}
\end{lemma}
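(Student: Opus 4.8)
The plan is to estimate each term $D_j(p)(\tau)=\<P_SR_p(w)+P_SZ_p(w),m_j\>$ by expanding it according to the decomposition $R_p(w)=R_{NL}(w)+R_L(w)+R_0$ of \eqref{eq:defR} and the expression \eqref{ZPQ}-type formula for $Z_p(w)$, pairing each piece against $m_j$, and using the pointwise bounds of Lemma~\ref{lem:Rlip} together with the exponential decay of $Q$ (hence of each $m_j$, which is a Schwartz function built from $Q$, $\nabla Q$, $|x|^2Q$, $\tQ$). First I would treat the nonlinear part: by \eqref{RNL_pointwise}, $|R_{NL}(w)|\lesssim Q|w|^2+\sum_{3\le j\le 1+4/d}|w|^j$, and since $m_j$ is bounded and exponentially decaying, $|\<R_{NL}(w),m_j\>|\lesssim \|w\|_{L^2}\| Q m_j\|_{L^\infty}\|w\|_{L^\infty}+\ldots$; using the Sobolev embedding $H^1\hookrightarrow L^\infty$ (valid since $d\le 2$, with a logarithmic caveat for $d=2$ handled by the slack in the exponents, or simply bounding $\|w\|_{L^\infty}\lesssim\|w\|_{H^1}$ on the relevant range via interpolation) this contributes $\lesssim \|w\|_{H^1}^2+\|w\|_{H^1}^{1+4/d}$, which is exactly the first two terms on the right of \eqref{Djest}, uniformly in $j$.

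Next I would treat the linear part $R_L(w)$. By \eqref{RL_pointwise}, $|R_L(w)|\lesssim \tau^{-3}\<y\>^3 e^{-c\<y\>}|w|+\tau^{-2}\min(1,\<y\>/\tau)|w|$. Pairing against $m_j$ (again exponentially decaying), the first term gives $\lesssim \tau^{-3}\|w\|_{L^2}$, which is absorbed into $\tau^{-2}\|w\|_{L^2}$; the second term gives $\lesssim \tau^{-2}\|w\|_{L^2}$ directly. For the $Z_p(w)$ contribution I would use $Z_p(w)=-i(p_1+p_3\cdot y+p_5|y|^2)w+p_2\cdot\nabla w+p_4(\tfrac d2+y\cdot\nabla)w$; after integration by parts (legitimate since $m_j\in\mathcal S$) each term is bounded by $|p(\tau)|$ times a polynomially-weighted $L^2$-norm of $w$ against the exponentially-decaying $m_j$ or its derivatives, hence $\lesssim |p(\tau)|\,\|w\|_{L^2}$ — the fourth term of \eqref{Djest}. (The $\<y\>^2$ weight from $p_5|y|^2$ is harmless against $e^{-c\<y\>}m_j$.)

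The remaining, and really the only $j$-dependent, contribution is the source term: $D_j$ also implicitly contains $\<Z_p(Q),m_j\>$ through the definition — actually here $D_j$ is defined with $R_p(w)$, whose $R_0$ piece is $iR_0=(1-g_p)F(Q)+V_pQ$, plus the $Z_p(w)$ as above; the $Z_p(Q)$ is handled separately in \S\ref{sec:proj}. For the $R_0$ term, I would use the flatness Assumption~\ref{hyp:gen}: on $\<y\>\lesssim \tau$ one has $|1-g_p|\lesssim \<y\>^3/\tau^3$ and $|V_p|\lesssim \tau^{-2}\cdot\<y\>/\tau$ (the first from the order-$3$ vanishing of $g-1$, the second from the order-$1$ vanishing of $V$ combined with the prefactor $q_4^2/\gamma^{-1}(\tau)^2\sim\tau^{-2}$ and $\gamma^{-1}(\tau)\sim\tau$), while on $\<y\>\gtrsim\tau$ the exponential decay of $Q$ (or $F(Q)=Q^{1+4/d}$) beats everything. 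Pairing $(1-g_p)F(Q)$ against $m_j$ and integrating gives, after splitting the region, $\lesssim \tau^{-3}$ plus exponentially small terms; similarly $V_pQ$ gives $\lesssim \tau^{-3}$. The key point now is that when $j\in\{2,4,6\}$ — i.e. $m_j\in\{x_\ell Q,\ \tfrac d2 Q+x\cdot\nabla Q,\ -Q\}$, all \emph{real-valued} — the inner product $\<iR_0,m_j\>$ of a purely imaginary-type quantity (recall $iR_0$ is the ``natural'' object and $R_0$ itself picks up an $i$) against the real function $m_j$ vanishes by the orthogonality structure used throughout \cite{We85}: more precisely, $R_0$ built from real $g$, $V$, $Q$ is such that $iR_0$ pairs trivially with the real secular directions $m_2,m_4,m_6$, exactly as $Q^{1+4/d}$ and $V Q$ are real so $\<i(\ldots),\,\text{real}\>=0$ in the $\R^2\cong\C$ identification. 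Hence those three entries of the source contribution are $0$, while for $j=1,5$ one gets the bound $1/\tau^{1+c(p)^-}+1/\tau^4$ (the $1/\tau^{1+c(p)^-}$ coming from the slightly-lossy control of $\gamma^{-1}(\tau)-\tau$ when pairing the $V_p$ term, and $1/\tau^4$ from a more careful expansion of $(1-g_p)F(Q)$ using that $m_1=i\tQ$, $m_5=i(\tfrac d2 Q+x\cdot\nabla Q)$ are themselves built to kill lower-order moments), and for $j=3$ one gets $1/\tau^3$ from the leading linear-in-$y$ behaviour of $V_p$ paired with $m_{3,\ell}=-i\partial_{x_\ell}Q$.

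Collecting the four universal bounds $\|w\|_{H^1}^{1+4/d}+\|w\|_{H^1}^2+\tau^{-2}\|w\|_{L^2}+|p(\tau)|\|w\|_{L^2}$ with the $j$-dependent source contribution yields exactly \eqref{Djest}. \textbf{The main obstacle} I anticipate is the source term bookkeeping: getting the sharp powers $\tau^{-4}$ (rather than $\tau^{-3}$) for $j=1,5$ requires exploiting that $m_1,m_5$ are odd/have a specific parity and moment-cancellation structure against the \emph{even} leading Taylor coefficient of $g$ (note $\nabla g(0)=\nabla^2 g(0)=0$ here means the first nonzero term in $g-1$ is cubic, hence odd), so that the naive $\tau^{-3}$ bound improves; symmetrically, identifying which pairings vanish outright (the $j=2,4,6$ cases) rests on the real-versus-imaginary structure of the secular modes, which must be tracked carefully through the $\C\cong\R^2$ identification. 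The $d=2$ endpoint Sobolev embedding $H^1\not\hookrightarrow L^\infty$ is a minor annoyance, circumvented because \eqref{decay_w} provides extra decay $\tau^{-(2-\eps)}$ so one can afford an $L^p$ loss, or because $R_{NL}$ is paired against the exponentially decaying $Q m_j$ allowing an $L^2\times L^2$ estimate with one factor of $w$ in $H^1$ and the geometric weight absorbing the rest.
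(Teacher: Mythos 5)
Your proposal is correct and follows essentially the same route as the paper: the universal terms come from the pointwise bounds \eqref{RNL_pointwise}--\eqref{RL_pointwise} and from $Z_p(w)$ paired (after integration by parts) against the Schwartz functions $m_j$, the vanishing for $j=2,4,6$ comes from $R_0$ being purely imaginary against the real $m_2,m_4,m_6$, the $j=1,5$ improvement comes from moment cancellation of the leading Taylor terms of $g-1$ and $V$ against $m_1,m_5$, and the $j=3$ case is the direct $\tau^{-3}$ bound on $\lVert R_0\rVert_{L^2}$. One small correction to your bookkeeping: $m_1$ and $m_5$ are radial (even), and the cancellation is that the odd cubic term of $g-1$ and the odd linear term of $V$ integrate to zero against them; moreover the $\tau^{-(1+c(p)^-)}$ error arises from the spatial shift $q_2$ via Lemma~\ref{L:p} (where $c(q_2)=(c(p)-2)^-$), not from the control of $\gamma^{-1}(\tau)-\tau$.
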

\begin{proof}

Taking the $L^2$-norm in $y$ in the pointwise estimate
\eqref{RNL_pointwise}, Sobolev embedding yields: 
\begin{equation*}
  \lVert  R_{NL}(w)(\tau)\rVert_{L^2}\lesssim \sum_{2\le k\le
    1+4/d} \lVert w(\tau)\rVert_{H^1}^k. 
\end{equation*}
By the pointwise estimate \eqref{RL_pointwise} we get
\begin{equation*}
  \lVert  R_{L}(w)(\tau)\rVert_{L^2}\lesssim \frac{1}{\tau^2}\lVert
  w(\tau)\rVert_{L^2}. 
\end{equation*}These estimates yield, since $m_j\in \Sch\(\R^d\)$,
\begin{equation*}
  \lvert D_j(p)(\tau)\rvert \lesssim \sum_{2\le k\le
    1+4/d} \lVert w(\tau)\rVert_{H^1}^k + \frac{1}{\tau^2}\lVert
  w(\tau)\rVert_{L^2}+\lvert
  \<R_0,m_j\>\rvert + \lvert p(\tau)\rvert \lVert
  w(\tau)\rVert_{L^2}, 
\end{equation*}

Notice that $R_0$ is purely imaginary, that $m_2$, $m_4$ and $m_6$ are
real, and thus 
\begin{equation*}
\forall j\in \{2,4,6\}\quad \<R_0,m_j\>=0,
\end{equation*}
which yields the first case in \eqref{Djest}.

By Assumption \ref{hyp:gen}, the Taylor expansion of $g$ near the origin
reads:
\begin{equation*}
  g(x)= 1 + \sum_{|\alpha|=3} c_{\alpha} x^{\alpha}+\O(|x|^4). 
\end{equation*}
In view of Lemma~\ref{L:p}, we infer
\begin{align*}
  g_p(\tau,y) &= 1+\frac{q_4^3}{\(\gamma^{-1}(\tau)\)^3}
  \sum_{|\alpha|=3} c_{\alpha} \(y + q_2\)^\alpha + \O\(\frac{\lvert
    q_4\(y+q_2\)\rvert^4}{\tau^4}\)\\
&= 1+\frac{q_4^3}{\(\gamma^{-1}(\tau)\)^3}
  \sum_{|\alpha|=3} c_{\alpha} y^\alpha +\O\(\frac{\<y\>^2}{\tau^{3+c(q_2)}}\)+
  \O\(\frac{\<y\>^4}{\tau^4}\). 
\end{align*}
Notice that if $j\in\{1,5\}$, $m_j$ is a radial function. Thus if $|\alpha|=3$, 
$$ \int y^{\alpha} m_j=0.$$
Arguing similarly on $V$, we infer 
\begin{equation*}
  \lvert \<P_S R_0,m_j\>\rvert = \lvert
  \<R_0,m_j\>\rvert=\lvert\<(1-g_p)F(Q)+V_pQ,m_j\>\rvert \lesssim 
  \frac{1}{\tau^{3+c(q_2)}} +\frac{1}{\tau^4}.
\end{equation*}
Lemma~\ref{L:p} then yields the second
case in \eqref{Djest}. To prove the third case, we use the pointwise estimate
\begin{equation*}
  \lvert R_0\rvert\lesssim \<\frac{y}{\tau}\>^{3} Q^{1+4/d} +
  \frac{1}{\tau^{2}}\({\bf 
  1}_{|y|>\tau} +\frac{\<y\>}{\tau}{\bf
  1}_{|y|\le \tau}\) Q .
\end{equation*}
Since $Q$ decays exponentially, this yields
\begin{equation*}
  \lVert  R_0(\tau)\rVert_{L^2}\lesssim \frac{1}{\tau^{3}},
\end{equation*}
and the third case in \eqref{Djest} follows. 
\end{proof}

\subsection{Control of the secular modes by projection}
\label{sec:proj}

We next prove Proposition~\ref{P:p}.
We introduce, for arbitrary $p$, 
\begin{equation}\label{dj}
d_j(p)(\tau)=\<P_SR_p(w)+P_SZ_p(w)+Z_p(Q),m_j\>=D_j(p)(\tau)+\<Z_p(Q),m_j\>.
\end{equation}
By the explicit expression \eqref{ZPQ} of $Z_p(Q)$ we get the
relations between $d_j$ and $D_j$:  
\begin{align*}
d_1(p)& =D_1(p)+\alpha_0p_1-2\alpha_0\gamma_0p_5, &
d_2(p)&=D_2(p)-\beta_0p_2,& d_3(p)&=D_3(p)-\beta_0 p_3\\ 
d_4(p)&=D_4(p)+\alpha_0p_4,&d_5(p)&= D_5(p)+2\alpha_0p_5,&d_6(p)&=D_6(p),
\end{align*}
where $\alpha_0,\beta_0,\gamma_0$ are real constants, $\alpha_0,\beta_0>0$.
From \eqref{decay_w}, we know that $w$ tends to zero as
$\tau\to+\infty$. Recalling that $Z_p(Q)\in S$ for any parameter $p$, 
the stability of $S$ by $e^{itL}$ shows that $ \Phi_1(w) \in S$. 
Denote, as in Proposition~\ref{P:We2}, 
\begin{equation*}
  \Phi_1(w)(\tau)=\sum_{j=1}^6 \nu_j(\tau) n_j.
\end{equation*}
By Proposition~\ref{P:We2},
$$\nu_6(\tau)=-\int_{\tau}^{+\infty} d_6=-\int_{\tau}^{+\infty} D_6(p),$$
which is well-defined in view of \eqref{Djest}, \eqref{decay_w}, \eqref{pjest}.
We want $\nu_j$ to vanish, for $1\le j\le 5$, so in view of
Proposition~\ref{P:We2}, we would like to impose
\begin{equation*}
  d_2=d_3=d_4 =0\quad ;\quad d_5=-2\nu_6 \quad ;\quad d_1 = 2\g_0\nu_6.
\end{equation*}
The proposition  follows if we get a fixed point $p$ in the unit ball
of $(W(3-3\eps,\tau_0))^{3+2d}$ (the space $W$ is defined by
\eqref{eq:defW}) for the operator 
$\Psi(p)=\widetilde{p}=(\widetilde{p}_1,\widetilde{p}_2,\widetilde{p}_3,
\widetilde{p}_4,\widetilde{p}_5)$: 
\begin{align*}
  \widetilde{p}_5 &= \frac{1}{2\alpha_0}\(-D_5(p) + 2\int_\tau^\infty
  D_6(p)\)\quad ;\quad \widetilde{p}_4 
    =-\frac{1}{\alpha_0} D_4(p) \\  \widetilde{p}_j &=
    \frac{1}{\beta_0}D_j(p),\ 
    j=2,3 \quad ;\quad 
\widetilde{p}_1=- \frac{1}{\alpha_0}D_1(p) 
-\frac{\g_0}{\alpha_0} D_5(p) . 
\end{align*}
Let $B$ be the closed unit ball in $(W(3-3\eps,\tau_0))^{3+2d}$. We first
show that $B$ is stable by $\Psi$. By \eqref{decay_w}, and since
$0<\eps<1/2$, we have, for 
$\tau\ge \tau_0\gg 1$, 
\begin{align}
\label{premiere_D}
\|w\|_{H^1}^{1+4/d} +&
\|w\|_{H^1}^2
+\frac{1}{\tau^2}\|w\|_{L^2}+|p(\tau)|
\|w\|_{L^2}\\
\notag
&\lesssim 
\frac{1}{\tau^{(1+4/d)(2-\eps)}}+\frac{1}{\tau^{4-2\eps}}
+\frac{1}{\tau^{4-\eps}}+\frac{1}{\tau^{5-4\eps}}\le 
\frac{1}{\tau^{4-2\eps}}.
\end{align}
By definition of $\Psi$, \eqref{premiere_D}
and the estimates \eqref{Djest} on $D_j$, we get, for $j\in\{2,3,4\}$ 
\begin{align*}
|\widetilde{p}_j(\tau)|&\lesssim |D_j(p)(\tau)|\lesssim \|w\|_{H^1}^{1+4/d} +
\|w\|_{H^1}^2
+\frac{1}{\tau^2}\|w\|_{L^2}+\frac{1}{\tau^3}+|p(\tau)|
\|w\|_{L^2}\\
&\le \frac{C}{\tau^3}\le \frac{1}{\tau^{3-3\eps}},
\end{align*}
if $\tau\ge \tau_0$ and $\tau_0$ is chosen sufficiently large. In view
of the estimates  
\eqref{Djest} and \eqref{premiere_D}, we have
\begin{equation}
\label{int_D6}
 \int_{\tau}^{+\infty}|D_6(p)(\sigma)|d\sigma\lesssim
\int_{\tau}^{+\infty}\frac{1}{\sigma^{4-2\eps}}d\sigma 
\lesssim \frac{1}{\tau^{3-2\eps}},
\end{equation} 
provided $\tau_0\gg 1 $. 
By the estimate \eqref{Djest} (second case) and \eqref{premiere_D} we
get, taking again $\tau\ge \tau_0\gg 1$, 
\begin{equation*}
  |\widetilde{p}_5(\tau)|\lesssim
|D_5(p)(\tau)|+
\int_{\tau}^{+\infty}|D_6(p)(\sigma)|d\sigma\lesssim
\frac{1}{\tau^{4-2\eps}}+\frac{1}{\tau^{4-3\eps}}+\frac{1}{\tau^{3-2\eps}}\le 
\frac{1}{\tau^{3-3\eps}},
\end{equation*}
and similarly
\begin{equation*}
  |\widetilde{p}_1(\tau)|\lesssim |D_1(p)(\tau)|+ 
|D_5(p)(\tau)|\le \frac{1}{\tau^{3-3\eps}}.
\end{equation*}
As a consequence $\widetilde{p}=\Psi(p)\in B$, and the
stability property of $\Psi$ is settled. \

It remains to prove the contraction property of $\Psi$, 
\begin{equation}\label{contractionp}
\lVert \Psi(p)-\Psi({\tilde p})\rVert_{3-3\eps,\tau_0}\le
\kappa\lVert p-{\tilde p}\rVert_{3-3\eps,\tau_0},
 \end{equation}
for all $p,{\tilde p}\in B$, with $\kappa<1$. 
In view of the definition of
$\Psi$, it is enough to show that if $\iota$ is small, and $\tau_0$ is
chosen large enough, we have, for $\tau\ge \tau_0$, 
\begin{equation}\label{contrp1}
\|D_j(p)-D_j({\tilde p})\|_{3-3\eps,\tau_0}\le 
\iota\|p-{\tilde p}\|_{3-3\eps,\tau_0},
\end{equation}
for $1\le j\le 5$,  and
\begin{equation}\label{contrp2}
\|D_6(p)-D_6({\tilde p})\|_{4-3\eps,\tau_0}\le \iota
\|p-{\tilde p}\|_{3-3\eps,\tau_0}.
\end{equation}
Recall that by definition, 
$$D_j(p)=\<P_SR_p(w)+P_SZ_p(w),m_j\>.$$
We have
\begin{align*}
  \left\lvert \<P_SZ_p(w)-P_SZ_{\tilde p}(w),m_j\>
\right\rvert&\lesssim |p_k(\tau)-{\tilde p}_k(\tau)|\|w(\tau)\|_{L^2}\\
&\lesssim  \frac{1}{\tau^{2-\eps}}|p_k(\tau)-{\tilde
  p}_k(\tau)|\lesssim \frac{1}{\tau^{5-4\eps}}\|p-{\tilde 
  p}\|_{3-3\eps,\tau_0}. 
\end{align*}
By the pointwise estimate \eqref{R_pointwise_diff} we get
\begin{equation*}
  \left\lvert \< R_p(w)-R_{\tilde p}(w),m_j\>\right\rvert 
 \lesssim \frac{1}{\tau^{4-3\eps}}\|p-{\tilde 
  p}\|_{3-3\eps,\tau_0}.
\end{equation*}
Taking  $\tau_0$ larger if necessary, we deduce the estimate
\eqref{contrp1}. 

To prove \eqref{contrp2}, we argue as in the proof of
Lemma~\ref{lem:5.1}: 
\begin{equation*}
  D_6(p)=\<R_{NL}(w)+R_L(w) +P_SZ_p(w),m_6\>,
\end{equation*}
that is, the contribution of $R_0$ vanishes, since $R_0$ is purely
imaginary and $m_6$ is real. We then invoke inequalities
\eqref{RNL_pointwise_diff} and \eqref{RL_pointwise_diff} of
Lemma~\ref{lem:Rlip}, to infer: 
\begin{align*}
  \left\lvert \< R_p(w)-R_{\tilde p}(w),m_6\>\right\rvert 
 &\lesssim \frac{1}{\tau^{4-3\eps}}\|p-{\tilde 
  p}\|_{3-3\eps,\tau_0}\|w(\tau)\|_{H^1}\<\|w(\tau)\|_{H^1}\>^{4/d}\\
&\lesssim
\frac{1}{\tau^{4-3\eps}}\times \frac{1}{\tau^{2-\eps}}\|p-{\tilde 
  p}\|_{3-3\eps,\tau_0},
\end{align*}
which gives \eqref{contrp2} and concludes the proof of the contraction property \eqref{contractionp}.

Therefore there exists a fixed point $p\in B$ for $\Psi$. For this $p$,
we have $\nu_j(\tau)=0$, for $1\le j\le 5$. Moreover, since 
$$\Phi_1(w)(\tau)=\nu_6(\tau) n_6,$$
it remains to show \eqref{estimate_m6}, that is, to check that
$|\nu_6(\tau)|\lesssim 
1/\tau^{3-2\eps}$. This follows immediately from \eqref{int_D6}
and the fact that ${\nu}_6'=D_6$. 
\qed

\section{The non-secular part}
\label{sec:nonsecular}

As announced in the previous paragraph, we now study the
$M$-component of $w$, which has to solve \eqref{eq_wM}. For this, we
consider the operator $\Phi_2$, that is, we study the equation
\begin{equation}
  \label{eq:Mgen}
  \d_\tau \phi +iL\phi -P_M Z_p(\phi)=F\quad ;\quad \|
    \phi(\tau)\|_{\Sigma}\Tend \tau {+\infty} 0, 
\end{equation}
where $F\in C([\tau_0,\infty[;M)$. 
For $a,b>0$, let 
\begin{align*}
  X(a,b,\delta)& =\left\{ \phi \in C([\tau_0,\infty[;M\cap
    \Sigma^\delta),\quad
    \|\phi\|_{X(a,b,\delta)}<\infty\right\},\text{ where}\\
\|\phi\|_{X(a,b,\delta)}&= \sup_{\tau\ge \tau_0}\tau^a\left\lVert  
  \phi(\tau)\right\rVert_{H^\delta}
+\sup_{\tau\ge \tau_0}\tau^b \left\lVert 
  \<y\>^{\delta}\phi(\tau)\right\rVert_{L^2}.
\end{align*}

The main result of this section is:

\begin{proposition}\label{prop:cauchyM}
Let $\tau_0>0$ and $p\in C([\tau_0,\infty[)^{3+2d}$ such that
\begin{equation*}
  \forall \tau\ge \tau_0,\quad |p(\tau)|\le
  \frac{1}{\tau^{3-3\eps}}.
\end{equation*}
Assume that $F\in X(a+1+\eta,b+1+\eta,\delta)$, with $a,b>0$,
$\eta>0$ and 
\begin{equation*}
 \delta<a-b<\delta(2-3\eps),\quad  1\le \delta\le 5.
\end{equation*}
Then
\eqref{eq:Mgen} has a unique solution $\phi\in 
X(a,b,\delta)$. Furthermore, it satisfies 
\begin{equation*}
  \|\phi\|_{X(a,b,\delta)}\le \mu \|F\|_{X(a+1+\eta,b+1+\eta,\delta)}.
\end{equation*}
\end{proposition}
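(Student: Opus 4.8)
The proof is a fixed-point argument for the linear inhomogeneous equation \eqref{eq:Mgen}, carried out in the Banach space $X(a,b,\delta)$. Since $P_M Z_p(\phi)$ is a \emph{small} time-dependent perturbation (because $|p(\tau)|\lesssim \tau^{-(3-3\eps)}$ decays), the idea is to treat it as part of the source term and invert $\partial_\tau + iL$ on $M$ using the Duhamel formula backward in time,
\[
  \phi(\tau) = \int_\tau^\infty e^{i(\tau-\sigma)L}\bigl(F(\sigma)+P_M Z_p(\phi)(\sigma)\bigr)\,d\sigma,
\]
and to show this defines a contraction. The linear estimates on $e^{itL}$ restricted to $M$ (the second inequalities in Proposition~\ref{propWe}: $\|e^{itL}P_M\psi\|_{H^s}\lesssim \|\psi\|_{H^s}$ and $\||x|^{s'}e^{itL}P_M\psi\|_{L^2}\lesssim \||x|^{s'}\psi\|_{L^2}+(1+t^{s'})\|\psi\|_{H^{s'}}$) are exactly what is available, and the weight-loss term $(1+t^{s'})\|\psi\|_{H^{s'}}$ is the reason one needs the two-parameter norm and the constraint $\delta<a-b$ rather than a single decay rate.

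\textbf{Key steps, in order.} First I would set up the map $\Gamma(\phi)(\tau)=\int_\tau^\infty e^{i(\tau-\sigma)L}\bigl(F(\sigma)+P_MZ_p(\phi)(\sigma)\bigr)\,d\sigma$ on $X(a,b,\delta)$ and check it is well-defined: the $H^\delta$ part costs nothing from $e^{itL}P_M$, so $\|\Gamma(\phi)(\tau)\|_{H^\delta}\lesssim \int_\tau^\infty\bigl(\|F(\sigma)\|_{H^\delta}+\|Z_p(\phi)(\sigma)\|_{H^\delta}\bigr)d\sigma$; using $\|F\|_{X(a+1+\eta,\dots)}$ the $F$-contribution is $\lesssim \tau^{-a-\eta}\lesssim \tau^{-a}$, and the $Z_p(\phi)$-contribution — since $Z_p(\phi)$ involves $p$ times $\phi$, $\nabla\phi$ and $y\cdot\nabla\phi$, hence is controlled by $|p(\sigma)|(\|\phi\|_{H^\delta}+\|\<y\>^{\delta}\phi\|_{L^2})$ up to the usual interpolation between the two norms — is $\lesssim \int_\tau^\infty \sigma^{-(3-3\eps)}(\sigma^{-a}+\sigma^{-b})\,d\sigma$ times $\|\phi\|_X$, which is $\lesssim \tau^{-a}$ provided $3-3\eps+b-a>0$, i.e. $a-b<3-3\eps$; the hypothesis $a-b<\delta(2-3\eps)$ with $\delta\ge1$ is comfortably stronger when $\eps$ is small, but one checks it gives what is needed. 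Second, for the weighted part I would apply the weighted bound on $e^{itL}P_M$: $\|\<y\>^\delta\Gamma(\phi)(\tau)\|_{L^2}\lesssim\int_\tau^\infty\bigl(\|\<y\>^\delta(F+Z_p(\phi))(\sigma)\|_{L^2}+(1+(\sigma-\tau)^\delta)\|(F+Z_p(\phi))(\sigma)\|_{H^\delta}\bigr)d\sigma$; the genuinely weighted term gives the expected $\tau^{-b}$ decay (from $\|F\|_{X(\dots,b+1+\eta,\delta)}$ and from $|p|\|\<y\>^{\delta}\phi\|$ type terms), while the loss term contributes $\int_\tau^\infty \sigma^\delta\bigl(\sigma^{-a-1-\eta}+\sigma^{-(3-3\eps)}\sigma^{-a}\bigr)d\sigma\lesssim \tau^{\delta-a}$, and this is $\lesssim \tau^{-b}$ exactly because $a-b>\delta$. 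Third, the same two computations with $\phi-\tilde\phi$ in place of $\phi$ and $F=0$ show $\Gamma$ is a contraction on $X(a,b,\delta)$ once $\tau_0$ is taken large (the $Z_p$ terms carry the extra decaying factor $\tau_0^{-(3-3\eps)}$), giving existence and uniqueness of the fixed point $\phi$ together with $\|\phi\|_{X(a,b,\delta)}\le\mu\|F\|_{X(a+1+\eta,b+1+\eta,\delta)}$. Finally I would note $\phi\in M$ for all $\tau$ because $M$ is stable under $e^{itL}$, $P_M Z_p(\phi)\in M$, and $F\in M$, so the integrand stays in $M$; continuity in $\tau$ is standard from the Duhamel representation and the integrability just established, and one verifies $\phi$ indeed solves \eqref{eq:Mgen} by differentiating.

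\textbf{Main obstacle.} The delicate point is the bookkeeping of the weighted-norm loss term $(1+t^{\delta})\|\cdot\|_{H^\delta}$ coming from $e^{itL}P_M$ acting on weighted data: it forces the weighted decay rate $b$ to be strictly slower than $a-\delta$, and simultaneously one needs $a-b$ small enough ($<\delta(2-3\eps)$) for the $Z_p$ self-interaction integrals to close — reconciling these two inequalities, together with the extra $\eta$ of decay on $F$ used to absorb the logarithmically-borderline integrals, is where all the hypotheses of the proposition are consumed, and where care is needed to make sure the admissible window for $(a,b,\delta)$ is nonempty and that every integral converges with room to spare. The estimates on $Z_p(\phi)$ themselves (differentiating $\phi$ inside costs no decay in $\tau$, only the fixed multiplier $|p(\tau)|$) are routine, as is the interpolation $\|y\cdot\nabla\phi\|_{L^2}\lesssim \|\phi\|_{H^\delta}^{1-1/\delta}\|\<y\>^\delta\phi\|_{L^2}^{1/\delta}$-type bound needed to control $Z_p$ by the $X$-norm; I would not belabor these.
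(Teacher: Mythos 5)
Your plan founders at the very first step: the claim that
\[
\left\lVert Z_p(\phi)(\sigma)\right\rVert_{H^\delta}+\left\lVert \<y\>^\delta Z_p(\phi)(\sigma)\right\rVert_{L^2}
\lesssim \lvert p(\sigma)\rvert\bigl(\lVert\phi(\sigma)\rVert_{H^\delta}+\lVert\<y\>^\delta\phi(\sigma)\rVert_{L^2}\bigr)
\]
is false, and without it the Duhamel map $\Gamma$ does not even send $X(a,b,\delta)$ into itself, let alone contract. Recall $iZ_p(\phi)=(p_1+p_3\cdot y+p_5|y|^2)\phi+ip_2\cdot\nabla\phi+ip_4(\tfrac d2+y\cdot\nabla)\phi$: it is a first-order operator with quadratically growing coefficients, so $\lVert Z_p(\phi)\rVert_{H^\delta}$ requires $\lVert\phi\rVert_{H^{\delta+1}}$ and mixed quantities like $\lVert |y|^2\phi\rVert_{H^\delta}$, and $\lVert\<y\>^\delta Z_p(\phi)\rVert_{L^2}$ requires $\lVert\<y\>^{\delta+2}\phi\rVert_{L^2}$ and $\lVert\<y\>^{\delta+1}\nabla\phi\rVert_{L^2}$ --- none of these is controlled by the $X(a,b,\delta)$ norm, and no interpolation between $\lVert\phi\rVert_{H^\delta}$ and $\lVert\<y\>^\delta\phi\rVert_{L^2}$ recovers an extra derivative or extra weight. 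The smallness of $\lvert p(\tau)\rvert$ cannot compensate a loss of regularity in a fixed-norm contraction; treating $Z_p(\phi)$ as a source in the Duhamel formula is therefore the wrong move. (The bookkeeping you do for the $(1+t^\delta)$ loss of $e^{itL}P_M$ on weighted data, using $a-b>\delta$, is fine in itself; that is not where the problem lies.)

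This is precisely why the paper proceeds by energy estimates rather than by a semigroup fixed point. The terms of $Z_p$ are (up to lower order) skew-symmetric, so when one pairs the equation with $L\overline\phi$ (using Weinstein's equivalence $\lVert\phi\rVert_M^2=\re\<L\phi,\phi\>\approx\lVert\phi\rVert_{H^1}^2$ on $M$) or with $\<y\>^{2s}\overline\phi$, integrations by parts cancel the derivative and weight losses: one only gets terms like $\lvert p\rvert\,\lVert\phi\rVert_{H^{2\kappa+1}}\bigl(\lVert\phi\rVert_{H^{2\kappa+1}}+\lVert\<y\>\<\nabla\>^{2\kappa}\phi\rVert_{L^2}\bigr)$ (Lemma~\ref{lem:apriori}), which close via the differential inequality of Lemma~\ref{L:diff} and the interpolation inequalities of Appendix~\ref{sec:interpol} (Lemma~\ref{lem:aprioriref}, done at $\delta=1$ and $\delta=5$). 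Existence is then obtained not by contraction but by approximation: truncate $F$ in time, solve the resulting backward Cauchy problems after undoing the modulation (which turns the equation into a Schr\"odinger equation with bounded potentials), check the approximate solutions have no secular part via Proposition~\ref{P:We2}, pass to the limit using the a priori bound (which also gives uniqueness), and finally interpolate in $\delta$ between $1$ and $5$. If you want to salvage a fixed-point formulation you must keep $Z_p$ on the left-hand side, which is exactly what the paper's equation \eqref{eq:Mgen} and its energy method do; as written, your contraction cannot close.
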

\subsection{Energy estimates}
\label{sec:energyest}
Recall the important property, established in \cite{We85}:
on $M$, the $H^1$ norm $\lVert\cdot\rVert_{H^1}$ is equivalent to
$\lVert\cdot\rVert_M$, where 
\begin{equation*}
  \| \phi\|_M^2 = \re \<L\phi,\phi\>.
\end{equation*}
\begin{lemma}\label{lem:apriori}
  Let $\kappa\in\N$, and $F\in L^1([\tau_0,\infty[;
  \Sigma^{2\kappa+1})$. Suppose that $\phi\in 
  C([\tau_0,\infty[;M\cap \Sigma^{2\kappa+1})$ solves
  \eqref{eq:Mgen} and tends to $0$ in $\Sigma^{2\kappa+1}$ as $\tau\to
  +\infty$. There exists $C>0$ such that for 
  all $\tau\ge \tau_0$, the following holds:
\begin{align*}
\|\phi(\sigma)\|_{H^{2\kappa+1}} \le
C\int_\tau^\infty&\Big( \|F(\sigma)\|_{H^{2\kappa+1}}\\
+&|p(\sigma)|
\(\|\phi(\sigma)\|_{H^{2\kappa+1}} + \|\<y\>\<\nabla\>^{2\kappa}
\phi(\sigma)\|_{L^2}\)\Big)d\sigma,\\
\|\<y\>^{2\kappa+1}\phi(\sigma)\|_{L^2}\le
C\int_\tau^\infty&\Big(\|\<y\>^{2\kappa+1}F(\sigma)\|_{L^2}+
\|\<y\>^{2\kappa}\nabla 
\phi(\sigma)\|_{L^2}+\|\phi(\sigma)\|_{L^2}\\ 
& +
|p(\sigma)|\|\< y\>^{2\kappa+1}
\phi(\sigma)\|_{L^2}\Big)d\sigma. 
\end{align*}
\end{lemma}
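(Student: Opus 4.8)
The statement is a pair of energy estimates for solutions of the perturbed linearized equation \eqref{eq:Mgen}, which live on the space $M$ where the bilinear form $\re\<L\cdot,\cdot\>$ is coercive. I would proceed by an energy method, differentiating in $\tau$ a suitable weighted energy and integrating backward from $+\infty$, exploiting that $\phi(\tau)\to 0$. The key structural fact I would use repeatedly is that $iL$ generates, on $M$, a \emph{bounded} group (Proposition~\ref{propWe}), and that the perturbation $P_MZ_p(\phi)$ is controlled by $|p(\tau)|$ times low-order norms of $\phi$; since $|p|\in L^1_\tau$ (it decays like $\tau^{-(3-3\eps)}$, $3-3\eps>1$), it can be absorbed on the right-hand side.

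\emph{Step 1: the $H^1$ ($\kappa=0$) case via the coercive norm.} Set $e(\tau)=\|\phi(\tau)\|_M^2=\re\<L\phi,\phi\>$. Differentiating along the flow \eqref{eq:Mgen} and using that $L$ is (formally) self-adjoint so that $\re\<L\phi,-iL\phi\>=0$, one gets
\begin{equation*}
  \frac{d}{d\tau} e(\tau) = 2\re\<L\phi, F + P_MZ_p(\phi)\> .
\end{equation*}
The first term is $\lesssim \|\phi\|_{H^1}\|F\|_{H^1}$, the second $\lesssim |p(\tau)|\,\|\phi\|_{H^1}\bigl(\|\phi\|_{H^1}+\|\<y\>\phi\|_{L^2}\bigr)$, using the explicit form of $Z_p(\phi)$ in \S\ref{sec:modul} (the $p_3\cdot y$ and $p_5|y|^2$ terms generate the weighted norm, $p_4(d/2+y\cdot\nabla)$ the $H^1$ norm). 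Since $e\simeq\|\phi\|_{H^1}^2$ on $M$, dividing by $\sqrt{e}$ and integrating from $\tau$ to $\infty$ (where $e\to0$) yields the first inequality with $\kappa=0$.

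\emph{Step 2: higher order $\kappa\ge1$.} Here $\phi$ no longer lies in $M$ after applying derivatives, so I cannot use $\|\cdot\|_M$ directly; instead I would commute. Apply $\<\nabla\>^{2\kappa}$ to \eqref{eq:Mgen}, write the commutator $[\<\nabla\>^{2\kappa},iL]\phi$ — which is lower order since $L-(-\Delta+1)$ is a smooth rapidly decaying potential multiplication — and also commute through $P_MZ_p$, picking up terms $\lesssim|p|(\|\phi\|_{H^{2\kappa+1}}+\|\<y\>\<\nabla\>^{2\kappa}\phi\|_{L^2})$. Run the plain $L^2$ energy identity on $\<\nabla\>^{2\kappa}\phi$ and use that the skew-adjoint part of $iL$ contributes nothing; the anti-self-adjoint-defect and the commutators are bounded by $\|\phi\|_{H^{2\kappa+1}}$ times lower-order or $|p|$-weighted quantities, and integrating backward gives the $H^{2\kappa+1}$ bound. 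For the weighted estimate, apply $\<y\>^{2\kappa+1}$, compute $[\<y\>^{2\kappa+1},\Delta]\phi$, which produces $\<y\>^{2\kappa}\nabla\phi$ and $\<y\>^{2\kappa-1}\phi$ type terms (hence the $\|\<y\>^{2\kappa}\nabla\phi\|_{L^2}+\|\phi\|_{L^2}$ on the right), run the $L^2$ identity for $\<y\>^{2\kappa+1}\phi$, and integrate backward; the $Z_p$ contribution is again $|p|\,\|\<y\>^{2\kappa+1}\phi\|_{L^2}$.

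\emph{Main obstacle.} The delicate point is Step~1's use of the coercivity $\|\phi\|_M\simeq\|\phi\|_{H^1}$ on $M$: one must verify that $\phi$ genuinely stays in $M$ for all $\tau$ (guaranteed by the statement of \eqref{eq:Mgen}, where $\phi\in M$ and $P_M$ sits in front of $Z_p$), and that the cross term $\re\<L\phi, P_MZ_p(\phi)\>$ does not secretly reintroduce a term of size $\|\phi\|_{H^1}^2$ without a $|p|$ prefactor — this is why the precise block structure of $Z_p$ matters, and why all five components of $p$ appear. A secondary technical nuisance is justifying the backward integration and the vanishing of boundary terms at $\tau=+\infty$, which follows from the decay hypothesis $\phi(\tau)\to0$ in $\Sigma^{2\kappa+1}$ together with the just-derived a priori bounds (a standard bootstrap). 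None of the commutator computations is deep, only book-keeping.
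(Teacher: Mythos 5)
Your $\kappa=0$ step and your treatment of the weighted (momentum) estimate coincide with the paper's proof: the energy $\re\<L\phi,\phi\>$, equivalent to $\|\phi\|_{H^1}^2$ on $M$, the explicit integrations by parts on $Z_p$, and the direct computation of $\frac{d}{d\tau}\int\<y\>^{2s}|\phi|^2$ (where the unweighted terms $\|\<y\>^{2\kappa}\nabla\phi\|_{L^2}+\|\phi\|_{L^2}$ are harmless because they are allowed in the statement) are exactly what is done there.

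The gap is in your Step 2, i.e.\ the $H^{2\kappa+1}$ estimate for $\kappa\ge1$. Commuting $\<\nabla\>^{2\kappa}$ through the equation is not enough, for two related reasons. First, the commutator $[\<\nabla\>^{2\kappa},iL]\phi$ does \emph{not} carry a factor $|p(\tau)|$: it comes from the potential terms $Q^{4/d}\phi$ and $Q^{4/d}\overline\phi$ in $L$ and is of size $\|\phi\|_{H^{2\kappa}}$ with an $O(1)$ constant. Second, your claim that ``the skew-adjoint part of $iL$ contributes nothing'' in the plain $L^2$ identity for $\psi=\<\nabla\>^{2\kappa}\phi$ fails: $\re\<-iL\psi,\psi\>=\im\int(L\psi)\overline\psi$ contains the term $-\frac2d\im\int Q^{4/d}\overline\psi^{\,2}$, again of size $\|\psi\|_{L^2}^2=\|\phi\|_{H^{2\kappa}}^2$ with no small prefactor (and you cannot fall back on the coercive $M$-norm, since $\<\nabla\>^{2\kappa}\phi\notin M$). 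Either way you end up with an unweighted lower-order term $\int_\tau^\infty\|\phi(\sigma)\|_{H^{2\kappa}}\,d\sigma$ on the right-hand side, which is absent from the lemma's conclusion and is fatal for its use: in the spaces $X(a,b,\delta)$ of Proposition~\ref{prop:cauchyM}, $\|\phi\|_{H^{2\kappa}}$ is only controlled at the rate $\tau^{-a}$, so this integral costs a full power of $\tau$, and the bootstrap of Lemma~\ref{lem:aprioriref} via Lemma~\ref{L:diff} (whose conditions $a_j+(b-a)\alpha_j>0$ exclude precisely such undecayed terms in the $H$-estimate) no longer closes; equivalently, a backward Gronwall over $[\tau,\infty[$ with a non-integrable, non-small coefficient gives nothing. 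The paper's way around this is the key idea you are missing: apply $(iL)^\kappa$ rather than $\<\nabla\>^{2\kappa}$. Since $(iL)^\kappa$ commutes exactly with $iL$ and preserves the splitting $H^1=S\oplus M$, the function $\phi_\kappa=(iL)^\kappa\phi$ stays in $M$ and solves an equation of the same form, the only commutator being $[(iL)^\kappa,Z_p]$, which \emph{does} carry the factor $|p|$ and is of order $(2\kappa+1,1)$ in $(\nabla,\<y\>)$; one then reuses the coercive $\kappa=0$ energy for $\phi_\kappa$ and converts back with the equivalence $\|(iL)^\kappa f\|_{H^1}+K\|f\|_{H^1}\approx\|f\|_{H^{2\kappa+1}}$ on $M$.
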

\begin{proof}
We begin with the first inequality in the case $\kappa=0$. 
Multiply \eqref{eq:Mgen} by $L\overline \phi$, integrate with
respect to $y$ and consider the real part: 
\begin{equation*}
  \re \int_{\R^d} \d_\tau
\phi\,L\overline{\phi}-\re\int_{\R^d}P_MZ_p(\phi)\,L\overline{\phi} 
=\re \int_{\R^d} F L\overline{\phi}.
\end{equation*}
We readily check the identity
\begin{equation*}
   \re \int_{\R^d} \d_\tau
\phi\,L\overline{\phi} =\frac{1}{2}\frac{d}{d\tau}\|\phi\|_M^2.
\end{equation*}
A straightforward integration by parts yields
\begin{equation*}
  \left\lvert \int_{\R^d} F L\overline{\phi}\right\rvert \lesssim
  \lVert F\rVert_{H^1}\lVert \phi\rVert_{H^1}.
\end{equation*}
It remains to estimate 
\begin{equation}\label{decomp}
\re\int_{\R^d}P_MZ_p(\phi)\,L\overline{\phi}=
\re\int_{\R^d}Z_p(\phi)\,L\overline{\phi}-
\re\int_{\R^d}P_SZ_p(\phi)\,L\overline{\phi}. 
\end{equation}
We start with the first term. Recall that
$$ Z_p(\phi)=-i\(p_1+p_3\cdot y+p_5|y|^2\)\phi+p_2\cdot \nabla \phi
+p_4\left(\frac d2 +y\cdot \nabla\right)\phi$$
and
$$ L \overline{\phi}= -\Delta \overline{\phi}+\overline{\phi}-
\(\frac{2}{d}+1\)Q^{4/d} \overline{\phi}-\frac{2}{d}Q^{4/d}\phi.$$
We have, by elementary integration by parts:
\begin{gather*}
\re i\int p_1\phi\,\Delta \overline{\phi}=0,\\
\left|\re i\int p_3 \cdot y \phi \Delta \overline{\phi}\right|=
\left|\im \int \phi \,p_3 \cdot  \nabla \overline{\phi}\right|\le
|p_3|\|\phi\|_{L^2}\|\phi\|_{H^1},\\ 
\left|\re i\int  p_5 |y|^2\phi \Delta \overline{\phi}\right|=
2\left|p_5\im \int \phi\,y\cdot \nabla \overline{\phi}\right|
\le 2|p_5|\|\phi\|_{H^1}\|\<y\>\phi\|_{L^2},\\
\re \int p_2\cdot\nabla \phi\Delta \overline{\phi}=0,\\
\re \int p_4 \(\frac d2 +y\cdot \nabla\)\phi\Delta
\overline{\phi}=
-p_4 \int |\nabla \phi|^2.
\end{gather*}
We infer:
\begin{equation*}
\left|\re \int Z_p(\phi)\Delta\overline{\phi} \right|
\le C(|p_3|+|p_4|)\|\phi\|_{H^1}^2+|p_5|\|\phi\|_{H^1}\|\<
y\>\phi\|_{L^2}. 
\end{equation*} 
We easily deduce that the first term in \eqref{decomp}
is controlled by 
\begin{equation*}
\Big|\re \int Z_p(\phi)L\overline{\phi} \Big|
\le C|p|\|\phi\|_{H^1}(\|\phi\|_{H^1}+\|\<y\>\phi\|_{L^2}).
\end{equation*}
For the remaining second term in \eqref{decomp} we use the structure
of the space $S$, 
$$\left|\re\int P_SZ_p(\phi)\,L\overline{\phi}\right|=\left|\sum_{1\le j\le
    6}\< Z_p(\phi), m_j\>\re\int n_j\,L\overline{\phi}\right|.$$ 
Integrating by parts both in the scalar product and in the integral, we get
$$\left|\re\int P_SZ_p(\phi)\,L\overline{\phi}\right|\le C|p|\|\phi\|_{L^2}^2.$$
Summarizing, we have obtained
$$\frac{d}{d\tau}\|\phi\|_{M}^2\le C\|F\|_{H^1}\|\phi\|_{H^1}+
C|p|\|\phi\|_{H^1}\big(\|\phi\|_{H^1}+\|\<y\>\phi\|_{L^2}\big).$$
Since the $M$-norm and the $H^1$-norm are equivalent on $M$, the
first inequality of the lemma follows in the case $\kappa=0$.
\smallbreak

Let $\kappa \ge 1$. We write \eqref{eq:Mgen} as 
\begin{equation*}
\d_{\tau} \phi+iL \phi -Z_p(\phi)=F-P_SZ_p(\phi).
\end{equation*}
Applying the operator $(iL)^\kappa$ we get
\begin{equation*}
\d_{\tau} \((iL)^\kappa\phi\)+iL \((iL)^\kappa \phi\)
-(iL)^\kappa Z_p(\phi)=(iL)^\kappa F-(iL)^\kappa P_SZ_p(\phi). 
\end{equation*}
Hence
\begin{equation*}
\d_{\tau} \((iL)^\kappa\phi\)+iL \((iL)^\kappa \phi\) 
-Z_p(iL)^\kappa\phi=(iL)^\kappa F+\left[ (iL)^\kappa,Z_p\right]\phi 
-(iL)^\kappa P_SZ_p(\phi),
\end{equation*}
where $\left[(iL)^\kappa,Z_p\right]$ denotes the commutator of the operators
$(iL)^\kappa$ and $Z_p$. 
By direct computation, the commutator $[iL,Z_p]$ is an operator of
order $2$ in $(\<y\>,\nabla)$, which is only of order $1$ in
$\left<y\right>$ and whose coefficients are multiples of
$p_1$,\ldots,$p_5$:
\begin{align*}
  [iL,Z_p]\phi &= \left[iL, -i\(p_1 +p_3\cdot y +p_5|y|^2\)+p_2\cdot \nabla
    +p_4\(\frac{d}{2}+y\cdot \nabla\)\right]\phi\\
&= -i\left[\Delta, -i\(p_3\cdot y +p_5|y|^2\)+p_4 y\cdot
  \nabla\right]\phi\\
&\quad -i\(\frac{2}{d}+1\)\left[Q^{4/d}, p_2\cdot \nabla
    +p_4 y\cdot \nabla\right]\phi
-i\frac{2}{d}\left[Q^{4/d}, p_2\cdot \nabla
    +p_4 y\cdot \nabla\right]\overline \phi\\
&= -2\(p_3\cdot \nabla +2p_5 y\cdot \nabla +dp_5+ip_4\Delta\)\phi\\
&\quad +i\(\frac{2}{d}+1\)\(p_2\cdot \nabla\(Q^{4/d}\)+p_4 y\cdot
\nabla\(Q^{4/d}\)\)\phi\\
&\quad +i\frac{2}{d}\(p_2\cdot \nabla\(Q^{4/d}\)+p_4 y\cdot
\nabla\(Q^{4/d}\)\)\overline\phi.
\end{align*}
Furthermore
$$ \left[(iL)^\kappa,Z_p\right]=\sum_{j=0}^{\kappa-1}
(iL)^j\left[iL,Z_p\right](iL)^{\kappa-j-1}.$$ 
Hence
\begin{equation*}
  \left\lVert \left[(iL)^\kappa,Z_p\right]\phi \right\rVert_{H^{1}}\le 
|p|\,\(\|\phi \|_{H^{2\kappa+1}}+\left\lVert \<y\>
  \<\nabla\>^{2\kappa}\phi\right\rVert_{L^{2}}\).
\end{equation*}
Notice also that
\begin{equation*}
  \left\|(iL)^\kappa P_S(Z_p(\phi)) \right\|_{H^1} 
=\Big\|\sum_{1\le j\le 6}\<Z_p(\phi),m_j\>(iL)^\kappa
  n_j\Big\|_{H^1} \lesssim  |p|\,\|\phi\|_{L^2}.
\end{equation*}
Denoting $\phi_\kappa= (iL)^\kappa \phi$, we see that
$\phi_{\kappa}(\tau)\in M$ and that it solves 
\begin{equation*}
  \d_\tau \phi_\kappa +iL\phi_\kappa -P_M Z_p(\phi_\kappa)=(iL)^\kappa
  F+ \left[(iL)^\kappa,Z_p\right]\phi 
-(iL)^\kappa P_SZ_p(\phi)+P_SZ_p(\phi_\kappa). 
\end{equation*}
From the case $\kappa=0$ and the previous estimates, we get:
\begin{align*}
\left\|(iL)^\kappa \phi(\tau)\right\|_{H^1}&\lesssim
\int_{\tau}^{+\infty}\Big(\|F(\sigma)\|_{H^{2\kappa+1}}\\
&\quad +\lvert p(\sigma)\rvert
  \Big(\lVert \phi(\sigma)\rVert_{H^{2\kappa+1}}+
\big\|\<y\>\<\nabla\>^{2\kappa}\phi(\sigma)\big\|_{L^2}\Big)\Big)d\sigma. 
\end{align*}
Noting that for a large constant $K$, depending on $\kappa$, we have for all $f\in M$,
$$ \left\|(iL)^\kappa f\right\|_{H^1}+K\|f\|_{H^1}\approx  \|f\|_{H^{2\kappa+1}},$$
and using the case $\kappa=0$ 
to bound $\|\phi(\tau)\|_{H^1}$, we get the first estimate of the lemma.  
\smallbreak

To conclude, we estimate the momenta: for $s\in \N$, we compute more
generally 
\begin{align*}
\frac 12\frac{d}{d\tau}\int \< y\>^{2s} |\phi|^2&=\re \int \< y\>^{2s}\d_\tau\phi
\overline{\phi} =\re \int \< y \>^{2s}(-iL\phi+P_MZ_p(\phi)+F)
\overline{\phi}\\  
&=\im \int \< y\>^{2s} L\phi\,\overline{\phi}+\re \int \<
y\>^{2s}P_MZ_p(\phi)  \,\overline{\phi}+\re \int \<y\>^{2s} F\overline{\phi}. 
\end{align*}
By a direct integration by parts
\begin{equation}
\label{momentumB}
\left|\im \int \< y\>^{2s} L\phi\, \overline{\phi}\right|\lesssim \|
\< y\>^{s-1}\nabla \phi\|_{L^2}\|\< y\>^s \phi\|_{L^2}+\|\phi\|_{L^2}^2. 
\end{equation}
Furthermore,
$$\re \int \< y\>^{2s} P_MZ_p(\phi)  \,\overline{\phi}=\re \int \< y\>^{2s}
Z_p(\phi)  \,\overline{\phi}-\re \int \< y\>^{2s} P_SZ_p(\phi)
\,\overline{\phi}.$$  
On the one hand,
\begin{align*}
\left|\re \int \< y\> ^{2s}  Z_p (\phi)\,  \overline{\phi}\right|&=\left|\re
  \int \< y\>^{2s}  p_2\cdot \nabla \phi\, \overline{\phi}+p_4\< y\>^{2s}
  \left(\frac d2 + y\cdot \nabla\right)\phi\, \overline{\phi} \right| \\
&\lesssim \max_{k=2,4}\lvert p_k\rvert\lVert \<  y\>^s\phi\rVert_{L^2}^2.
\end{align*}
On the other hand,
\begin{equation*}
 \left|\re \int \<  y\> ^{2s} P_SZ_p (\phi)\,
  \overline{\phi}\right|=\left|\sum_{1\le j\le 6}\< Z_p(\phi),
  m_j\>\re\int n_j\,\<  y\> ^{2s}\overline{\phi}\right| 
\lesssim \lvert p\rvert\lVert \phi\rVert_{L^2}^2.
\end{equation*}
Hence
\begin{equation}
\label{momentumC}
\left|
\re \int \< y\>^{2s} P_MZ_p(\phi)  \,\overline{\phi}\right|\lesssim
\lvert p\rvert\lVert \<  y\>^s\phi\rVert_{L^2}^2. 
\end{equation}
Combining \eqref{momentumB}, \eqref{momentumC}, we obtain the second
estimate of the lemma, concluding this proof. 
\end{proof}

\subsection{Refined a priori estimates}
\label{sec:aprioriref}
In the sequel, we consider $0<\eps<1/3$, and extra smallness
assumptions will be precised when needed.
\begin{lemma}\label{lem:aprioriref}
  Let $\tau_0>0$ and $p\in C([\tau_0,\infty[)^{3+2d}$ such that
\begin{equation*}
  \forall \tau\ge \tau_0,\quad |p(\tau)|\le
  \frac{1}{\tau^{3-3\eps}}.
\end{equation*}
Assume that $F\in X(a+1+\eta,b+1+\eta,\delta)$, with 
$\eta>0$ and 
\begin{equation}
\label{a_b_delta}
\delta<a-b<\delta(2-3\eps),\quad  \delta\in \{1,5\},
\end{equation}
where $X$ is defined in Proposition~\ref{prop:cauchyM}. 
Let $\mu>0$. If $\tau_0$ is 
sufficiently large, 
every solution $\phi\in X(a,b,\delta)$ of \eqref{eq:Mgen} satisfies
\begin{equation}
\label{estimation_phi}
  \|\phi\|_{X(a,b,\delta)}\le \mu \|F\|_{X(a+1+\eta,b+1+\eta,\delta)}.
\end{equation}
\end{lemma}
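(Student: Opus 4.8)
The plan is to bootstrap the two inequalities of Lemma~\ref{lem:apriori} into the weighted, decay-rate form \eqref{estimation_phi}, using the smallness of $p$ and the integrability gained by the extra $\eta$ in the norm of $F$. Since the hypothesis restricts $\delta$ to $\{1,5\}$, I would treat these two cases, but the argument is the same; write $m=2\kappa+1$ with $m=1$ ($\kappa=0$) or $m=5$ ($\kappa=2$). First I would record the interpolation inequality
\begin{equation*}
  \|\<y\>^m\nabla\phi\|_{L^2}+\|\<y\>^{m-1}\nabla^{2\kappa}\phi\|_{L^2}
  \lesssim \|\phi\|_{H^{2\kappa+1}}^{1/2}\,\|\<y\>^{2\kappa+1}\phi\|_{L^2}^{1/2}
  + \|\phi\|_{H^{2\kappa+1}},
\end{equation*}
which is the standard weighted Gagliardo--Nirenberg estimate (a commutator of $\<y\>$-weights and derivatives against the two pure quantities controlled by $\|\phi\|_{X(a,b,\delta)}$); this is what lets the mixed terms $\|\<y\>\<\nabla\>^{2\kappa}\phi\|_{L^2}$ appearing in Lemma~\ref{lem:apriori} be reabsorbed.

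Next I would plug the a priori decay of $\phi$, namely $\|\phi(\sigma)\|_{H^\delta}\le \tau_0^{?}$... more precisely $\|\phi(\sigma)\|_{H^\delta}\le \sigma^{-a}\|\phi\|_{X}$ and $\|\<y\>^\delta\phi(\sigma)\|_{L^2}\le\sigma^{-b}\|\phi\|_{X}$, together with $|p(\sigma)|\le\sigma^{-3+3\eps}$, into the integral bounds of Lemma~\ref{lem:apriori}. For the $H^m$ estimate one gets, for $\tau\ge\tau_0$,
\begin{equation*}
  \|\phi(\tau)\|_{H^m}\lesssim \int_\tau^\infty \Big(\sigma^{-(a+1+\eta)}\|F\|_{X}
  + \sigma^{-3+3\eps}\big(\sigma^{-a}+\sigma^{-(a+b)/2}\big)\|\phi\|_{X}\Big)\,d\sigma
  \lesssim \tau^{-a}\Big(\frac{\|F\|_X}{\eta}+\tau_0^{-1}\|\phi\|_X\Big),
\end{equation*}
where the key arithmetic is $3-3\eps+a>a+1$ and $3-3\eps+(a+b)/2>a+1$, the latter being exactly the constraint $a-b<\delta(2-3\eps)$ combined with $\delta\ge1$ (checking this carefully is one place to be precise). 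The analogous computation for $\|\<y\>^m\phi(\tau)\|_{L^2}$ uses the interpolation inequality above to bound $\|\<y\>^{m-1}\nabla\phi\|_{L^2}\lesssim \sigma^{-(a+b)/2}\|\phi\|_X$, the source term contributes $\sigma^{-(b+1+\eta)}$, and the $p$-term contributes $\sigma^{-3+3\eps-b}$; one needs $(a+b)/2>b$, i.e. $a>b$, which holds, and $b+1\le (a+b)/2 + ?$... actually one needs the $\sigma$-integral of $\sigma^{-(a+b)/2}$ to beat $\tau^{-b}$, i.e. $(a+b)/2>b+1$; this is where $\delta<a-b$ is used, since $\delta\ge1$ forces $a-b>1$ hence $(a+b)/2 = b+(a-b)/2$... one sees $a-b>\delta\ge1$ gives $(a-b)/2>1/2$, which is not quite enough, so I would instead keep the $\|\phi\|_{H^m}$ contribution (decaying like $\sigma^{-a}$, integrable against $\tau^{-b}$ since $a>b+1$) dominant and treat the interpolated term by Young, splitting the $1/2$-$1/2$ product so that the bad half sits on the already-estimated $H^m$ quantity. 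The upshot in both cases is
\begin{equation*}
  \|\phi\|_{X(a,b,\delta)}\le C\Big(\|F\|_{X(a+1+\eta,b+1+\eta,\delta)}
  + \tau_0^{-c}\|\phi\|_{X(a,b,\delta)}\Big)
\end{equation*}
for some $c>0$, and choosing $\tau_0$ large enough so that $C\tau_0^{-c}<1/2$, one absorbs the last term on the left and divides by whatever is needed to reach the prescribed constant $\mu$ (shrinking $\tau_0$-dependent constants further if necessary).

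The main obstacle, and the step I would spend the most care on, is the bookkeeping of exponents in the weighted estimate: Lemma~\ref{lem:apriori} produces the genuinely mixed norm $\|\<y\>\<\nabla\>^{2\kappa}\phi\|_{L^2}$ and the lower-order piece $\|\<y\>^{2\kappa}\nabla\phi\|_{L^2}+\|\phi\|_{L^2}$, none of which is directly one of the two quantities defining $\|\cdot\|_{X(a,b,\delta)}$, so the interpolation step has to be done so that the resulting powers of $\sigma$, after multiplication by $|p(\sigma)|\sim\sigma^{-3+3\eps}$ and integration from $\tau$ to $\infty$, genuinely decay faster than $\tau^{-a}$ (resp. $\tau^{-b}$); verifying that the double inequality $\delta<a-b<\delta(2-3\eps)$ is exactly what makes all these integrals converge with room to spare — and that $\eps<1/3$ keeps $2-3\eps>1$ so the window for $a-b$ is nonempty — is the crux. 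Everything else (the energy identity, the integration by parts estimates on $Z_p$, the commutator bound) has already been done in Lemma~\ref{lem:apriori}, so the proof of Lemma~\ref{lem:aprioriref} is purely this Gronwall-type absorption argument with careful exponent tracking.
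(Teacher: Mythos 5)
Your overall strategy (feed the a priori decay of $\phi$ into Lemma~\ref{lem:apriori}, interpolate the mixed weighted-derivative norms against the two quantities defining $\|\cdot\|_{X(a,b,\delta)}$, then absorb the $\phi$-terms by taking $\tau_0$ large) is indeed the paper's strategy, and the final absorption step is legitimate since $\phi\in X(a,b,\delta)$ is assumed, so its norm is finite and can be absorbed. The gap is in the interpolation step, and it is not mere bookkeeping. First, the inequality you state is false as written: the left-hand terms $\|\<y\>^{m}\nabla\phi\|_{L^2}$ and $\|\<y\>^{m-1}\nabla^{2\kappa}\phi\|_{L^2}$ with $m=2\kappa+1=5$ have total weight-plus-derivative order $6$ and $8$, so they are not controlled by $\|\<y\>^{5}\phi\|_{L^2}$ and $\|\phi\|_{H^{5}}$ at all; what Lemma~\ref{lem:apriori} actually requires is $\|\<y\>\<\nabla\>^{4}\phi\|_{L^2}$ and $\|\<y\>^{4}\nabla\phi\|_{L^2}$. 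Even for these, the $1/2$--$1/2$ exponents fail for the heavily weighted term: translating a fixed bump to distance $R$ gives $\|\<y\>^{4}\nabla\phi\|_{L^2}\sim R^{4}$ while $\|\<y\>^{5}\phi\|_{L^2}^{1/2}\|\phi\|_{H^{5}}^{1/2}+\|\phi\|_{H^{5}}\sim R^{5/2}$. The correct exponents are dictated by the weight/derivative proportions, namely \eqref{eq:inter8}--\eqref{eq:inter9}: $\|\<y\>\nabla^{4}\phi\|_{L^2}\lesssim\|\<y\>^{5}\phi\|_{L^2}^{1/5}\|\phi\|_{H^{5}}^{4/5}$ and $\|\<y\>^{4}\nabla\phi\|_{L^2}\lesssim\|\<y\>^{5}\phi\|_{L^2}^{4/5}\|\phi\|_{H^{5}}^{1/5}$.

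Second, your exponent arithmetic does not close for $\delta=5$ precisely because of the $1/2$--$1/2$ choice. If the mixed term decays like $\sigma^{-(a+b)/2}$, the $p$-term in the $H^{5}$ estimate integrates to $\tau^{-(2-3\eps)-(a+b)/2}$, and beating $\tau^{-a}$ forces $a-b<2(2-3\eps)<4$; this is not ``exactly the constraint $a-b<\delta(2-3\eps)$'' as you assert, and it is incompatible with the hypothesis $a-b>\delta=5$. With the sharp exponents the same computation requires $a-b<5(2-3\eps)$, while the momentum-side term $\|\<y\>^{4}\nabla\phi\|_{L^2}\lesssim\sigma^{-(a+4b)/5}\|\phi\|_{X}$ (which carries no factor of $p$) requires $(a+4b)/5>b+1$, i.e. $a-b>5$: the double inequality \eqref{a_b_delta} is exactly the sharp-interpolation window and cannot be reached with your exponents. (For $\delta=1$ your detour through interpolation is also unnecessary and creates the spurious obstruction you then try to patch by ``Young splitting'': when $\kappa=0$ the momentum estimate only involves $\|\nabla\phi\|_{L^2}\le\|\phi\|_{H^{1}}\lesssim\sigma^{-a}\|\phi\|_{X}$, and $a-b>1$ is all that is needed.) Replacing your interpolation inequality by \eqref{eq:inter8}--\eqref{eq:inter9} (and no interpolation when $\delta=1$), and then running your absorption argument --- which the paper packages as Lemma~\ref{L:diff} --- repairs the proof.
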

\begin{remark}
  The restriction $\delta\in\{1,5\}$ in the above statement is arbitrary. 
\end{remark}
\begin{proof}
\noindent {\bf First case: $\delta=1$.} Denote by
\begin{equation*}
  M_1=\|F\|_{X(a+1+\eta,b+1+\eta,1)}.
\end{equation*}
The $H^1$-estimate and the momentum estimate with $\kappa=0$ of
Lemma~\ref{lem:apriori} read, along with 
the assumption on $p$:
\begin{align*}
  \|\phi(\tau)\|_{H^1} \le
C\int_\tau^\infty&\Big(\frac{M_1}{\sigma^{a+1+\eta}}+
\frac{1}{\si^{3-3\eps}}\(\|\phi(\sigma)\|_{H^1}+\|\<
y\> \phi(\sigma)\|_{L^2}\)\Big)d\sigma,\\ 
\|\<y\>\phi(\tau)\|_{L^2} \le
C\int_\tau^\infty& \Big(\frac{ M_1}{\sigma^{b+1+\eta}}+
\frac{1}{\si^{3-3\eps}}\|\< y\> \phi(\sigma)\|_{L^2}
 +\|\phi(\sigma)\|_{H^1}\Big)d\sigma.
\end{align*}
We apply Lemma~\ref{L:diff} with the following data:
\begin{equation*}
  \alpha_1=\beta_1=0\ ;\ \alpha_2=\beta_2=1\ ;\
  a_1=a_2=b_1=2-3\eps\ ;\ b_2=-1. 
\end{equation*}
This is possible under the assumptions $a,b>0$ and  $1<a-b<2-3\eps$,
which are fulfilled in the context of Lemma~\ref{lem:aprioriref}. We
then have 
\begin{equation*}
  \|\phi\|_{X(a,b,1)}\le \mu
  \|F\|_{X(a+1+\eta,b+1+\eta,1)},
\end{equation*}
for $\tau_0$ sufficiently large.
\smallbreak

\noindent {\bf Second case: $\delta=5$.}  Denote by
\begin{equation*}
  M_5=\|F\|_{X(a+1+\eta,b+1+\eta,5)}.
\end{equation*}
To proceed in a similar way as in the first case, we use interpolation
estimates \eqref{eq:inter8} and \eqref{eq:inter9}. By Lemma
\ref{lem:apriori} in the case $\kappa=2$, we obtain 
\begin{align*}
  \|\phi(\sigma)\|_{H^5} \le
C\int_\tau^\infty&\Big( \frac{M_5}{\sigma^{a+1+\eta}}\\
&+\frac{1}{\si^{3-3\eps}}
\(\|\phi(\sigma)\|_{H^5} + \|\<y\>^5
\phi(\sigma)\|_{L^2}^{1/5}\|\phi(\sigma)\|_{H^5}^{4/5}\)\Big)d\sigma,\\ 
\|\<y\>^5\phi(\sigma)\|_{L^2}\le
C\int_\tau^\infty&\Big(\frac{
  M_5}{\sigma^{b+1+\eta}}+ 
\|\<y\>^5\phi(\sigma)\|_{L^2}^{4/5}\|\phi(\sigma)\|_{H^5}^{1/5}
+\|\phi(\sigma)\|_{L^2}\\  
& +
\frac{1}{\si^{3-3\eps}}\|\< y\>^5 \phi(\sigma)\|_{L^2}\Big)d\sigma.
\end{align*}
We apply Lemma~\ref{L:diff} with the following data:
\begin{equation*}
  \alpha_1=\beta_3=0\ ;\ \beta_2=1\ ;\ \alpha_2=\beta_1=\frac{1}{5}\
  ;\ 
  a_1=a_2=b_3=2-3\eps\ ;\ b_1=b_2=-1. 
\end{equation*}
This is possible under the assumptions $a,b>0$ and  $5<a-b<10-15\eps$,
which are fulfilled in the context of Lemma~\ref{lem:aprioriref}. We
then have 
\begin{equation*}
  \|\phi\|_{X(a,b,5)}\le \mu
  \|F\|_{X(a+1+\eta,b+1+\eta,5)},
\end{equation*}
for $\tau_0$ sufficiently large.
Summarizing, we have obtained the lemma in the following cases:
\begin{alignat*}{4}
1<a-b<2-3\eps&\text{ and }&\delta=1,\\
5<a-b<5(2-3\eps)&\text{ and }&\delta=5,
\end{alignat*}
which corresponds to the announced result. 
\end{proof}

\subsection{Proof of  Proposition~\ref{prop:cauchyM}} 
\label{sec:cauchyM}

The proof is set up in the same spirit as the existence of M\o ller's
wave operators. Let $\chi(\tau)=1-H(\tau)$, where $H$ is the Heaviside
function, be the function equal to $1$ for $\tau<0$ and $0$ for
$\tau>0$. We first consider the case where $\delta=5$ and 
$F\in L^1([\tau_0,\infty[;M\cap \Sigma^5)$. 
For $(\tau_n)_n$ a sequence going to $+\infty$, consider 
\begin{equation}
  \label{eq:Mgentronque}
  \d_\tau \phi_n +iL\phi_n -P_M Z_p(\phi_n)=\chi(\tau-\tau_n)F \quad ;\quad 
    \phi_{n\mid \tau=1+\tau_n}= 0. 
\end{equation}
To begin with, we remove the projection $P_M$ from the left hand side, and
consider
\begin{equation}
  \label{eq:Mgentronque2}
  \d_\tau \phi_n +iL\phi_n- Z_p(\phi_n)=\chi(\tau-\tau_n)F\quad ;\quad 
    \phi_{n\mid \tau=1+\tau_n}= 0. 
\end{equation}
We show that for every $n$, 
  \eqref{eq:Mgentronque2} has a unique solution 
  $\phi_n\in C([\tau_0,\infty[;\Sigma^5)$.
To see this, remove the modulation by reversing the approach presented
in \S\ref{sec:modul}: recalling \eqref{eq:modul1}, define $\widetilde
\phi_n$ by  
\begin{equation*}
  \widetilde \phi_n(t,x) =
  e^{i\(q_1(t) +q_3(t)\cdot x +q_5(t)|x|^2\)}\frac{1}{q_4(t)^{d/2}}\phi_n\(\g(t),
\frac{x}{q_4(t)}-q_2(t)\),
\end{equation*}
where $\gamma$ is given by \eqref{eq:gamma} and the $q_j$'s are
well-defined function of the $p_k$'s in view of Lemma~\ref{L:p}. We
check that \eqref{eq:Mgentronque2} is then equivalent to an equation of
the form
\begin{equation*}
  i\d_t \widetilde \phi_n +\Delta\widetilde \phi_n =
  W_1\widetilde \phi_n 
  +W_2\overline{\widetilde \phi_n} +\widetilde F_n\quad ;\quad 
    \widetilde \phi_{n\mid t= t_n}=0,
\end{equation*}
where the notation $\widetilde F_n$ is obvious,
$t_n=\gamma^{-1}(\tau_n+1)$, and the potentials are given by 
\begin{align*}
  W_1(t,x)&= \frac{1}{q_4(t)^2}\(1-\(\frac{2}{d}+1\)
Q\(\frac{x}{q_4(t)}-q_2(t)\)^{4/d}\),\\
 W_2(t,x)&= -\frac{2}{dq_4(t)^2
}Q\(\frac{x}{q_4(t)}-q_2(t)\)^{4/d}e^{2i\(q_1(t) +q_3(t)\cdot x
  +q_5(t)|x|^2\)} .
\end{align*}
We note that $W_j\in L^{\infty}([t_0,\infty[;W^{5,\infty}(\R^d))$,
$j=1,2$. We can then construct $\widetilde \phi_n$ in
$C([t_0,\infty[;L^2)$: a fixed point argument yields $\widetilde
\phi_n$ on small time 
intervals (with a non-trivial initial data in order to repeat the
process), and we can split $[t_0,t_n]$ into finitely many time
intervals on which we can control the $L^\infty_tL^2_x$-norm of
$\widetilde \phi_n$ by the $L^1_tL^2_x$-norm of $\widetilde F_n$ on the
same time interval. We can proceed along the same line to construct
$\widetilde \phi_n$ in $C([t_0,\infty[;H^5)$, and then infer that
$\widetilde \phi_n$ is also in $C([t_0,\infty[;\Sigma^5)$ (with
$\widetilde \phi_{n\mid t\ge t_n}=0$). We skip the easy details. 
\smallbreak

We deduce that \eqref{eq:Mgentronque2} has a unique solution 
  $\phi_n\in C([\tau_0,\infty[;\Sigma^5)$. The case of
  \eqref{eq:Mgentronque} follows easily, by rewriting it as
\begin{equation*}
  \d_\tau \phi_n +iL\phi_n - Z_p(\phi_n)=-P_SZ_p(\phi_n)+
  \chi(\tau-\tau_n)F\quad ;\quad  
    \phi_{n\mid \tau=1+\tau_n}= 0, 
\end{equation*}
and by recalling that
\begin{equation*}
  P_SZ_p(\phi_n) = \sum_{j=1}^6 \<Z_p(\phi_n),m_j\>n_j,\text{ hence }
  \|P_SZ_p(\phi_n)(\tau)\|_{\Sigma^5}\lesssim
  \frac{1}{\tau^{3-3\eps}}\|\phi_n(\tau)\|_{L^2}. 
\end{equation*}
The important point which we must note now is that $\phi_n\in
C([t_0,\infty[;M\cap \Sigma^5)$, which is compactly supported in time,
has no secular part. This is 
so thanks to Proposition~\ref{P:We2}, and the integral formulation of
\eqref{eq:Mgentronque}, which can be written as:
\begin{equation*}
  \phi_n(\tau) = \int_\tau^{1+\tau_n}
  e^{i(\sigma-\tau)L}\((\chi(\sigma-\tau_n)F(\sigma)+ P_M 
  Z_p(\phi_n)(\sigma)\) d\sigma,\quad \tau\ge \tau_0.
\end{equation*}
Since $\chi(\cdot-\tau_n)F\in L^1([\tau_0,\infty[;M\cap \Sigma^5)$,
Proposition~\ref{P:We2} 
shows that the right hand side of the above equation has no
non-trivial $S$-component. Therefore, $\phi_n(\tau)\in M$. 
\smallbreak

To conclude, we note that under the assumptions of the proposition,
$\chi(\cdot-\tau_n)F$ converges to $F$ in
$X(a+1+\eta/2,b+1+\eta/2,\delta)$. Since \eqref{eq:Mgen} is 
linear,
Lemma~\ref{lem:aprioriref} shows that $\phi_n$ is a Cauchy sequence in
$X(a,b,\delta)$, thus it converges in this space to $\phi$ solution
to \eqref{eq:Mgen} which satisfies \eqref{estimation_phi}. Uniqueness 
follows from Lemma~\ref{lem:aprioriref}, and we have defined an
operator $F\mapsto \phi$.

By density and Lemma~\ref{lem:aprioriref}, the result remains true if
we assume only $\delta=1$ (and $F\in
X(a+1+\eta,b+1+\eta,\delta)$). The proposition then follows by complex
interpolation between the cases $\delta=1$ and $\delta=5$.\qed

\section{Fixed point argument}
\label{sec:fixed}

In this section we show  Theorem~\ref{theo:reduced}.

Recall that we have defined the operator $\Phi$ as follows:
\begin{equation*}
  \begin{aligned}
      \Phi(w)(\tau)&= \int_\tau^\infty
  e^{i(\tau-\sigma)L}\(P_SR_p(w)+P_SZ_p(w)+Z_p(Q)\)d\sigma+\Phi_2(w)(\tau)\\
&= \Phi_1(w)(\tau)+ \Phi_2(w)(\tau), 
  \end{aligned}
\end{equation*}
where $ \Phi_2(w)=\phi$ is the solution (in $M$) of the equation
\begin{equation*}
  \partial_\tau \phi+iL\phi -P_MZ_p(\phi)=P_MR_p(w)+P_MZ_p(P_Sw)
\end{equation*}
given by Proposition \ref{prop:cauchyM}.
The modulation $p$ is a function of $w$ itself, defined in
\S\ref{sec:tuning}, Proposition \ref{P:p}. To prove
Theorem~\ref{theo:reduced} (hence 
Theorem~\ref{thblup}), we show that $\Phi$ has a fixed point in a
suitable space. Consider 
for $0<\eps<1/3$ and $1<\delta$ 
  \begin{equation*}
    Y(\delta,\eps,\tau_0)=\Big\{ w\in C\left([\tau_0,\infty[;M\cap
    \Sigma^\delta\right)+C\left([\tau_0,\infty[;\vect n_6\right);
  \|w\|_{\delta,\eps,\tau_0}<\infty\Big\} 
\end{equation*}
where $\|w\|_{\delta,\eps,\tau_0}$ is defined as
\begin{equation*}
\sup_{\tau\ge\tau_0} \tau^{2-\eps}\|P_M
w(\tau)\|_{H^\delta}+\sup_{\tau\ge\tau_0} \tau^{2-2\eps-\delta}\|\<y\>^\delta P_M 
    w(\tau)\|_{L^2}+\sup_{\tau\ge\tau_0} 
\tau^{3-3\eps}\lvert \<w(\tau),m_6\>\rvert. 
\end{equation*}

\subsection{Stability}
\label{sec:stability}

The main result of this section is the following: 
\begin{proposition}\label{prop:stabilite}
  Let $\delta \in ]1,2[$, and $0<\eps<1/4$ so that
  $\eps<1-\delta/2$.
There exists $\tau_0>0$ 
such that $\Phi$ maps
the closed
unit ball of $Y(\delta,\eps,\tau_0)$ to itself. 
\end{proposition}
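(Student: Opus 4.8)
The plan is to decompose $\Phi(w)=\Phi_1(w)+\Phi_2(w)$, to control the single surviving secular coordinate with Proposition~\ref{P:p} and the $M$-component with Proposition~\ref{prop:cauchyM}, and to extract all the smallness from the freedom to take $\tau_0$ large.

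First I would fix $w$ in the closed unit ball of $Y(\delta,\eps,\tau_0)$ and write $w=w_M+w_S$ with $w_M=P_Mw$ and $w_S=P_Sw=\<w,m_6\>n_6\in\vect{n_6}$, so that
\[
\|w_M(\tau)\|_{H^\delta}\le\tau^{-(2-\eps)},\qquad \|\<y\>^\delta w_M(\tau)\|_{L^2}\le\tau^{-(2-2\eps-\delta)},\qquad |\<w(\tau),m_6\>|\le\tau^{-(3-3\eps)}.
\]
Since $\delta>1$ and $n_6\in\Sch(\R^d)$, this gives $\tau^{2-\eps}\|w(\tau)\|_{H^1}\le1+\|n_6\|_{H^1}\tau^{-(1-2\eps)}$, so for $\tau_0$ large $w$ satisfies \eqref{decay_w} up to a harmless $w$-independent constant. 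Proposition~\ref{P:p} then provides the modulation $p=p(w)$ with $|p(\tau)|\lesssim\tau^{-(3-3\eps)}$, places $\Phi_1(w)(\tau)$ in $\vect{n_6}$, and yields $|\<\Phi_1(w)(\tau),m_6\>|\le C\tau^{-(3-2\eps)}$ with $C$ independent of $w$. As $\Phi_2(w)$ is $M$-valued, $P_S\Phi(w)=\Phi_1(w)$ and $P_M\Phi(w)=\Phi_2(w)$; hence the secular term of the $Y$-norm is already bounded:
\[
\sup_{\tau\ge\tau_0}\tau^{3-3\eps}\,|\<\Phi(w)(\tau),m_6\>| =\sup_{\tau\ge\tau_0}\tau^{3-3\eps}\,|\<\Phi_1(w)(\tau),m_6\>| \le C\,\tau_0^{-\eps}\le\tfrac13
\]
for $\tau_0$ large.

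For the $M$-part I would apply Proposition~\ref{prop:cauchyM} to $\Phi_2(w)=\phi$, which solves \eqref{eq:Mgen} with source $F:=P_MR_p(w)+P_MZ_p(w_S)$, choosing $a=2-\eps$ and $b=2-2\eps-\delta$ — exactly the two scales occurring in $\|\cdot\|_{\delta,\eps,\tau_0}$. The admissibility conditions hold precisely by the hypotheses: $b>0$ is $\eps<1-\delta/2$; $\delta<a-b=\eps+\delta$ is $\eps>0$; $a-b<\delta(2-3\eps)$ reduces to $\eps<\delta/(1+3\delta)$, true since $\delta>1$ forces $\delta/(1+3\delta)>1/4>\eps$; and $1\le\delta\le5$ is immediate. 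It then remains to check that $F\in X(a+1+\eta,b+1+\eta,\delta)$ for some $\eta\in\,]0,\eps[$, with norm uniform in $w$. For this I would use the pointwise estimates \eqref{RNL_pointwise}, \eqref{RL_pointwise} of Lemma~\ref{lem:Rlip} (and the pointwise bound on $R_0$ from the proof of Lemma~\ref{lem:5.1}), together with the algebra property of $H^\delta(\R^d)$ in $L^\infty$ ($\delta>d/2$) and the analogous weighted bookkeeping: a routine power count (in terms of $\|w_M\|_{H^\delta}$ and $\|\<y\>^\delta w_M\|_{L^2}$, the $w_S$-contribution being negligible) shows that each of $R_{NL}(w)$, $R_L(w)$, $R_0$ and $Z_p(w_S)$ decays, in $H^\delta$ and in $\<y\>^\delta L^2$, at least like $\tau^{-(a+1+\eta)}$ resp.\ $\tau^{-(b+1+\eta)}$, the only constraint being $\eta<\eps$, forced by the $w$-independent source $R_0$ (which contributes $\tau^{-3}=\tau^{-(a+1+\eps)}$ in $H^\delta$); all other terms are quadratically small in $w$ or carry extra powers of $1/\tau$. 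Then Proposition~\ref{prop:cauchyM}, with its constant $\mu$ made as small as we wish by enlarging $\tau_0$ (see Lemma~\ref{lem:aprioriref}), gives $\phi\in X(a,b,\delta)$ with $\|\phi\|_{X(a,b,\delta)}\le\mu\|F\|_{X(a+1+\eta,b+1+\eta,\delta)}\le\tfrac13$ for $\tau_0$ large. Since $\|\phi\|_{X(a,b,\delta)}$ is exactly the $P_M$-part of $\|\Phi(w)\|_{\delta,\eps,\tau_0}$, combining with the secular estimate yields $\|\Phi(w)\|_{\delta,\eps,\tau_0}\le\tfrac23<1$, so $\Phi$ maps the closed unit ball into itself.

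The analytic heart — Propositions~\ref{P:p} and \ref{prop:cauchyM}, Lemmas~\ref{lem:Rlip} and \ref{lem:5.1} — being already available, I expect the main obstacle to be the bookkeeping of exponents: one has to choose $(a,b)=(2-\eps,\,2-2\eps-\delta)$ so that the image of $\Phi$ lands on the exact scales of the $Y$-norm \emph{and} this pair falls inside the admissible window $\delta<a-b<\delta(2-3\eps)$ of Proposition~\ref{prop:cauchyM} — which is exactly where the hypotheses $\eps<1-\delta/2$ and $\eps<1/4$ get consumed — while at the same time the $w$-independent source $R_0$ must decay like $\tau^{-3}$, just fast enough (by the margin $\eps$) to be placed at the scale $a+1+\eta$, so that all the smallness comes from $\tau_0$ being large rather than from any contraction.
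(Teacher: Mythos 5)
Your proposal is correct and follows essentially the same route as the paper: decompose $\Phi=\Phi_1+\Phi_2$, control the surviving $n_6$-coordinate via Proposition~\ref{P:p} and \eqref{estimate_m6}, and bound $\Phi_2$ by estimating the source $F=P_MR_p(w)+P_MZ_p(P_Sw)$ term by term (with $R_0$ giving the binding $\tau^{-3}$ constraint) and invoking Proposition~\ref{prop:cauchyM}. The only, harmless, difference is the bookkeeping of exponents: the paper takes $a=2-\eps/2$, $b=2-\tfrac{3}{2}\eps-\delta$, so the image lands in a strictly smaller space and the margin $\tau_0^{-\eps/2}$ supplies the smallness, whereas you take $(a,b)=(2-\eps,\,2-2\eps-\delta)$ — the exact scales of the $Y$-norm, with the same $a-b=\eps+\delta$ — and extract the smallness from the constant $\mu$ of Lemma~\ref{lem:aprioriref} for $\tau_0$ large; both choices satisfy the admissibility window and yield the stated conclusion.
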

\begin{proof}
For $w\in Y(\delta,\eps,\tau_0)$, Proposition~\ref{P:p} yields a
modulation $p$ such that 
\begin{equation*}
\sup_{\tau\ge \tau_0}\tau^{3-3\eps}
|p(\tau)|\le 1. 
\end{equation*}
By Proposition~\ref{prop:cauchyM}, 
$P_S\Phi(w)=\Phi_1(w)$. Since by
  Proposition~\ref{P:p},
$\Phi_1(w)\in \vect (n_6)$, the secular part of $\Phi(w)$ has the
suitable structure for $Y$. Moreover, by \eqref{estimate_m6}, for
  $\tau\ge \tau_0$,
  \begin{equation*}
    \lvert
    \<\Phi_1(w)(\tau),m_6\>\rvert\le 
    \frac{C}{\tau^{3-2\eps}}\le \frac{C}{\tau_0^\eps}\frac{1}{\tau^{3-3\eps}}. 
  \end{equation*}
 Therefore,
increasing $\tau_0$ if necessary, 
\begin{equation*}
  \sup_{\tau\ge\tau_0} \tau^{3-3\eps}\lvert \<\Phi(w)(\tau),m_6\>\rvert\le
\frac 12.
\end{equation*}
Thus $\Phi_1(w)$ is in the $1/2$-ball of $Y(\delta,\eps,\tau_0)$.
\smallbreak

To control the non-secular part $P_M\Phi(w)=\Phi_2(w)$, we apply
Proposition~\ref{prop:cauchyM} with 
\begin{equation*}
  F = P_M R_p(w)+P_MZ_p\(P_S w\). 
\end{equation*}
We look for $a$ and $b$ such that $F\in X(a+1+\eta,b+1+\eta,\delta)$. 
We note that
\begin{equation*}
  P_MZ_p\(P_S w\) = \<w,m_6\> P_M Z_p(n_6),
\end{equation*}
so we have the estimate
\begin{equation*}
  \|P_MZ_p\(P_S w\)(\tau)\|_{\Sigma^\delta}\lesssim \lvert
  \<w(\tau),m_6\>\rvert\, 
  |p(\tau)|\lesssim \frac{1}{\tau^{3-3\eps}}\times
  \frac{1}{\tau^{3-3\eps}}=\frac{1}{\tau^{6-6\eps}}. 
\end{equation*}
The delicate term, which explains the assumption $\delta<2$, is the
last one, $P_M R_p(w)$. We treat separately the contributions of
$R_0$, $R_L$ and $R_{NL}$.   
Since $d\le 2$ and $\delta>1$, $H^\delta
(\R^d)$ is an algebra, and we infer
\begin{align*}
  \|P_M R_{NL}(w)\|_{H^\delta}&\le \|R_{NL}(w)\|_{H^\delta}+ \|P_S
  R_{NL}(w)\|_{H^\delta} \\
&\lesssim \|w\|_{H^\delta}^2
  +\|w\|_{H^\delta}^{1+4/d}\lesssim \frac{1}{\tau^{2(2-\eps)}}.  
\end{align*}
From the pointwise estimate \eqref{RNL_pointwise},
\begin{align*}
  \|\<y\>^\delta P_M R_{NL}(w)\|_{L^2}&\le \|\<y\>^\delta
  R_{NL}(w)\|_{L^2}+ 
\|\<y\>^\delta P_S R_{NL}(w)\|_{L^2} \\
&\lesssim \|w\|_{L^2}^2
  +\lVert\<y\>^{\delta}w\rVert_{L^2}\sum_{2\le j\le
    4/d}\|w\|_{H^{\delta}}^{j} \lesssim 
  \frac{1}{\tau^{2(2-\eps)}}. 
\end{align*}
We next treat the contribution of $R_L$. Using that $\tau^2 V_p$ is
bounded in the Sobolev space $W^{2,\infty}$, uniformly for $\tau\geq
1$, we get 
\begin{equation*}
  \|P_M R_{L}(w)\|_{H^\delta}\lesssim
  \frac{1}{\tau^2}\|w\|_{H^\delta}\lesssim \frac{1}{\tau^{4-\eps}}.
\end{equation*}
Using simply the boundedness of the external potential $V$, we infer
\begin{equation*}
\|\<y\>^\delta P_M R_{L}(w)\|_{L^2}\lesssim
\frac{1}{\tau^3}\|w\|_{L^2} + \frac{1}{\tau^2}\|\<y\>^\delta w\|_{L^2}
\lesssim \frac{1}{\tau^{4-2\eps-\delta}}.
\end{equation*}
The term $P_MR_0$ can be estimated in a similar way, up to the fact
that the $H^\delta$-norm and the momenta of $Q$ do not decay in time:
\begin{equation*}
  \|P_M R_0\|_{H^\delta}\lesssim \frac{1}{\tau^3}\quad ;\quad
  \|\<y\>^\delta P_M R_0\|_{L^2}\lesssim \frac{1}{\tau^3}.
\end{equation*}
Summarizing, we have obtained 
\begin{gather*}
\|F\|_{H^\delta}\lesssim
\frac{1}{\tau^{6-6\eps}}+\frac{1}{\tau^{4-2\eps}}+
\frac{1}{\tau^{4-\eps}}+ \frac{1}{\tau^3}\lesssim \frac{1}{\tau^3},\\ 
\|\<y\>^\delta F\|_{L^2}\lesssim
\frac{1}{\tau^{6-6\eps}}+\frac{1}{\tau^{4-2\eps}}+
\frac{1}{\tau^{4-2\eps-\delta}}+ \frac{1}{\tau^3}\lesssim
\frac{1}{\tau^{4-2\eps-\delta}}, 
\end{gather*}
meaning that $F\in X(3,4-2\eps-\delta,\delta)$.
We can then apply Proposition~\ref{prop:cauchyM} provided that there
exists $a,b,\eta>0$ with
\begin{equation*}
 \delta<a-b<\delta(2-3\eps),
\end{equation*}
such that  $F\in X(a+1+\eta,b+1+\eta,\delta)$. We take $a+1+\eta =3$
(note that this constraint comes from $R_0$). This requires $a
=2-\eta$ ($\eta>0$ can be arbitrarily small), and since on the other
hand, we must have $a>\delta$, this explains why we have assumed
$\delta<2$. By taking 
$\eta=\frac{\eps}{2}$, we get as a constraint
\begin{equation*}
 \delta<\eps+\delta<\delta(2-3\eps).
\end{equation*}
As $\eps<1/4$ and $\delta>1$, this condition is fulfilled. Therefore, by Proposition~\ref{prop:cauchyM},
$\Phi_2(w)\in X(2-\frac{\eps}{2},2-\frac{3}{2}\eps-\delta,\delta)$. By
increasing $\tau_0$ if necessary, $\Phi_2(w)$ is also in the
$1/2$-ball of  $Y(\delta,\eps,\tau_0)$, and the proposition follows.  
\end{proof}

\subsection{Compactness}
\label{sec:compactness}
We recall the following compactness result, which is a particular case
of \cite[Corollary 4]{Si87}. 
\begin{theorem}[From \cite{Si87}]\label{T:Simon}
Let $X\subset B\subset Y$ be Banach spaces such that $X$ is compactly
embedded into $B$, and $B$ is continuously embedded into $Y$. Let
$\tau_0<\tau_1$ and $F$ be a subset of $L^{\infty}([\tau_0,\tau_1];X)$
such that $\left\{\frac{\partial v}{\partial \tau},\; v\in F\right\}$
is bounded in $L^{\infty}([\tau_0,\tau_1];Y)$. Then $F$ has compact
closure in $C([\tau_0,\tau_1];B)$. 
\end{theorem}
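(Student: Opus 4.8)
The plan is to deduce this from the vector-valued Arzel\`a--Ascoli theorem in $C([\tau_0,\tau_1];B)$, which asserts that a subset of that space is relatively compact as soon as it is uniformly equicontinuous and has, for each fixed $\tau$, relatively compact sections in $B$. Set $R=\sup_{v\in F}\|v\|_{L^\infty([\tau_0,\tau_1];X)}$ and $M=\sup_{v\in F}\|\partial_\tau v\|_{L^\infty([\tau_0,\tau_1];Y)}$, both finite by assumption. First I would replace each $v\in F$ by its representative that is $M$-Lipschitz from $[\tau_0,\tau_1]$ into $Y$, and observe that this representative is automatically continuous into $B$: if $\tau_n\to\tau$, then $v(\tau_n)\to v(\tau)$ in $Y$, while $\{v(\tau_n)\}$ is bounded in $X$ and hence, by compactness of the embedding $X\hookrightarrow B$, relatively compact in $B$; any $B$-limit point of the sequence must equal $v(\tau)$ because $B\hookrightarrow Y$ is continuous, so $v(\tau_n)\to v(\tau)$ in $B$. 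The same reasoning shows that every section $v(\tau)$ belongs to the $B$-closure of the ball $\{u\in X:\|u\|_X\le R\}$, which is a compact subset of $B$; this yields the pointwise relative compactness hypothesis for free.

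The heart of the argument is the interpolation inequality of Ehrling/Lions: for every $\eta>0$ there exists $C_\eta>0$ such that
\begin{equation*}
 \|u\|_B\le\eta\,\|u\|_X+C_\eta\,\|u\|_Y,\qquad\forall u\in X.
\end{equation*}
I would prove this by contradiction. If it failed, there would be $\eta_0>0$ and $u_n\in X$ with $\|u_n\|_B>\eta_0\|u_n\|_X+n\|u_n\|_Y$; normalizing $\|u_n\|_B=1$ forces $\|u_n\|_X<\eta_0^{-1}$, so along a subsequence $u_n$ converges in $B$ by the compact embedding, while $\|u_n\|_Y<n^{-1}\to0$ together with the continuity of $B\hookrightarrow Y$ forces the $B$-limit to vanish, contradicting $\|u_n\|_B=1$ for all $n$.

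To conclude, the bound on the derivatives gives $\|v(\tau)-v(\sigma)\|_Y\le M|\tau-\sigma|$ uniformly over $v\in F$; for $\tau,\sigma$ outside a null set one has $v(\tau)-v(\sigma)\in X$, and applying the interpolation inequality to $u=v(\tau)-v(\sigma)$ yields $\|v(\tau)-v(\sigma)\|_B\le 2R\,\eta+C_\eta M\,|\tau-\sigma|$, which then extends to all $\tau,\sigma$ by continuity of $v$ into $B$. Given $\varepsilon>0$, choosing $\eta$ with $2R\eta<\varepsilon/2$ and then $\delta$ with $C_\eta M\delta<\varepsilon/2$ makes $F$ uniformly equicontinuous in $C([\tau_0,\tau_1];B)$, and Arzel\`a--Ascoli finishes the proof. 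The step I expect to be the real substance is the interpolation inequality, which is what converts time-regularity of $F$ measured only in the weak space $Y$ into genuine equicontinuity in the intermediate space $B$; the reduction to continuous-into-$B$ representatives is the only additional piece of technical care, and everything else is routine.
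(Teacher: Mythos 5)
Your proof is correct, but note that the paper does not actually prove Theorem~\ref{T:Simon}: it is quoted as a particular case of Corollary~4 of \cite{Si87}, so there is no internal argument to compare with. Your route is the standard self-contained proof of this endpoint case: the uniform bound on $\partial_\tau v$ in $Y$ gives an $M$-Lipschitz representative with values in $Y$; the Ehrling--Lions interpolation inequality (whose proof by contradiction you give correctly, using injectivity of the embeddings to identify the limit) converts this, together with the $L^\infty(X)$ bound, into uniform equicontinuity with values in $B$; and the sections lie in the $B$-closure of an $X$-ball, which is compact, so the vector-valued Arzel\`a--Ascoli theorem applies. Simon's own proof of the general result is organized differently (via estimates on time translates, so as to cover relative compactness in $L^p([\tau_0,\tau_1];B)$ as well), but in this $L^\infty$-in-time/$C$-in-time case the substance is the same, so your argument is a legitimate, more elementary replacement for the citation. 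Two small remarks: (i) as stated, $F$ is only said to be a subset of $L^{\infty}([\tau_0,\tau_1];X)$; the boundedness you assume when setting $R<\infty$ is genuinely needed (it is the hypothesis of Simon's Corollary~4, and it holds where the theorem is applied in Lemma~\ref{L:compactness}, since there $F\subset\Phi(K)\subset K$); (ii) your continuity-into-$B$ argument implicitly takes $\|v(\tau_n)\|_X\le R$, which is only guaranteed for $\tau_n$ outside a null set; this is harmless, and is most cleanly handled by first proving the estimate $\|v(\tau)-v(\sigma)\|_B\le 2R\eta+C_\eta M|\tau-\sigma|$ for $\tau,\sigma$ in the full-measure set and then extending the representative by uniform continuity, exactly as you do in the final step.
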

Fix $\eps$, $\delta$ as in Proposition \ref{prop:stabilite}.
Let $K$ be the closed unit ball of $Y(\delta,\eps,\tau_0)$. By
Proposition \ref{prop:stabilite}, the operator $\Phi$ maps $K$ into
itself. 
  Notice that $K$ is closed into $Y(\delta',\eps',\tau_0)$ if 
  $\delta'<\delta$, $\eps'>\eps$ and $2\eps+\delta<2\eps'+\delta'$. 
In this subsection we show the following lemma. 

\begin{lemma}
\label{L:compactness}
Let $0<\eps<\eps'<1/4$ and $1<\delta'<\delta<2$ 
and assume
$2\eps+\delta<2\eps'+\delta'<2$  (this implies that  $Y(\delta,\eps,\tau_0)$ is
continuously embedded into   $Y(\delta',\eps',\tau_0)$).  
The image $\Phi(K)$ of $K$ has compact closure in  $Y(\delta',\eps',\tau_0)$. 
\end{lemma}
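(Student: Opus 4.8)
The plan is to split $[\tau_0,\infty[$ into a compact piece $[\tau_0,R]$ and a tail $[R,\infty[$: on the tail we exploit the quantitative decay already extracted in the proof of Proposition~\ref{prop:stabilite}, and on the compact piece the Aubin--Lions--Simon criterion of Theorem~\ref{T:Simon}. For $w$ in the closed unit ball $K$ of $Y(\delta,\eps,\tau_0)$ decompose $\Phi(w)=\Phi_1(w)+\Phi_2(w)$ with $\Phi_1(w)\in\vect n_6$ and $\Phi_2(w)=P_M\Phi(w)$; recall that the $Y(\delta',\eps',\tau_0)$-norm is a supremum over $\tau\ge\tau_0$ of $\tau^{2-\eps'}\|P_M\cdot\|_{H^{\delta'}}$, $\tau^{2-2\eps'-\delta'}\|\<y\>^{\delta'}P_M\cdot\|_{L^2}$ and $\tau^{3-3\eps'}|\<\cdot,m_6\>|$. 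The crucial observation is that each of these three quantities, evaluated on $\Phi(w)$, decays in $\tau$ strictly faster than the exponents $2-\eps'$, $2-2\eps'-\delta'$, $3-3\eps'$ demand, uniformly in $w\in K$. Indeed, \eqref{estimate_m6} gives $|\<\Phi_1(w)(\tau),m_6\>|\le C\tau^{-(3-2\eps)}$ with $C$ independent of $w$, and $\eps<\eps'$ forces $2\eps<3\eps'$, so $\tau^{3-3\eps'}|\<\Phi(w)(\tau),m_6\>|\le C\tau^{2\eps-3\eps'}$; moreover the proof of Proposition~\ref{prop:stabilite} shows $\Phi_2(w)\in X(2-\tfrac{\eps}{2},2-\tfrac32\eps-\delta,\delta)$ with norm bounded uniformly in $w\in K$, whence (using $\delta'<\delta$) $\|P_M\Phi(w)(\tau)\|_{H^{\delta'}}\le C\tau^{-(2-\eps/2)}$ and $\|\<y\>^{\delta'}P_M\Phi(w)(\tau)\|_{L^2}\le C\tau^{-(2-\frac32\eps-\delta)}$, so $\tau^{2-\eps'}\|P_M\Phi(w)(\tau)\|_{H^{\delta'}}\le C\tau^{\eps/2-\eps'}$ (with $\eps/2<\eps<\eps'$) and $\tau^{2-2\eps'-\delta'}\|\<y\>^{\delta'}P_M\Phi(w)(\tau)\|_{L^2}\le C\tau^{\frac32\eps+\delta-2\eps'-\delta'}$, the last exponent being negative since $\tfrac32\eps+\delta<2\eps+\delta<2\eps'+\delta'$. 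Hence $\rho(R):=\sup_{w\in K}\sup_{\tau\ge R}(\dots)\to0$ as $R\to+\infty$, and it only remains to extract, from an arbitrary sequence in $\Phi(K)$, a subsequence converging on each $[\tau_0,R]$.

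Fix $\tau_1=R$. On $[\tau_0,\tau_1]$ all three weights are bounded above and away from zero, so convergence in the $Y(\delta',\eps',\tau_0)$-norm restricted to $[\tau_0,\tau_1]$ is equivalent to convergence of the $P_M$-parts in $C([\tau_0,\tau_1];M\cap\Sigma^{\delta'})$ together with convergence of the scalar coefficients $\nu_6(\tau):=\<\Phi_1(w)(\tau),m_6\>$ in $C([\tau_0,\tau_1])$. For the scalars (for fixed $w$, $\nu_6'=D_6(p)$): by \eqref{Djest} and \eqref{decay_w}, $|\nu_6(\tau)|+|\nu_6'(\tau)|\lesssim\tau^{-(3-2\eps)}+\tau^{-(4-2\eps)}$ uniformly in $w\in K$, so the family $\{\nu_6:w\in K\}$ is bounded in $C^1([\tau_0,\tau_1])$ and is relatively compact in $C([\tau_0,\tau_1])$ by Arzel\`a--Ascoli. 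For the $P_M$-parts we apply Theorem~\ref{T:Simon} with $X=M\cap\Sigma^{\delta}$, $B=M\cap\Sigma^{\delta'}$ and $Y$ a sufficiently large weighted Sobolev space of negative order (say the space of $u$ with $\<y\>^{-2}u\in H^{-2}$), into which $\Sigma^{\delta'}$ embeds continuously; the embedding $X\hookrightarrow B$ is compact because $-\Delta+\<y\>^2$ has compact resolvent and $\delta>\delta'\ge0$. By Proposition~\ref{prop:cauchyM} and the stability estimates of Proposition~\ref{prop:stabilite}, $\{\Phi_2(w):w\in K\}$ is bounded in $L^\infty([\tau_0,\tau_1];\Sigma^{\delta})$; and writing $\partial_\tau\Phi_2(w)=-iL\Phi_2(w)+P_MZ_p(\Phi_2(w))+P_MR_p(w)+P_MZ_p(P_Sw)$ and using that $|p(\tau)|\le\tau_0^{-(3-3\eps)}$ together with the pointwise bounds of Lemma~\ref{lem:Rlip} on $R_p$ and the $H^\delta$-bounds on $P_M R_p(w)$ from the proof of Proposition~\ref{prop:stabilite}, one checks that $\{\partial_\tau\Phi_2(w):w\in K\}$ is bounded in $L^\infty([\tau_0,\tau_1];Y)$. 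Theorem~\ref{T:Simon} then gives relative compactness of $\{\Phi_2(w):w\in K\}$ in $C([\tau_0,\tau_1];\Sigma^{\delta'})$.

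Taking $\tau_1=R_j\uparrow+\infty$ and extracting diagonally, any sequence $(w_n)\subset K$ admits a subsequence $(w_{n_k})$ with $\Phi(w_{n_k})$ convergent in $C([\tau_0,R_j];\Sigma^{\delta'})$ and in $C([\tau_0,R_j])$ for every $j$; together with $\rho(R_j)\to0$ this shows $(\Phi(w_{n_k}))_k$ is Cauchy in $Y(\delta',\eps',\tau_0)$, proving that $\Phi(K)$ has compact closure there. I expect the main obstacle to be the $\partial_\tau$-estimate for $\Phi_2(w)$: the term $P_MZ_p(\Phi_2(w))$ contains the multiplication by $|y|^2$ from $Z_p$, which is unbounded on $L^2$ and forces the ambient space $Y$ in Theorem~\ref{T:Simon} to carry a growing spatial weight; this is harmless because only boundedness on finite time intervals is needed and $\Phi_2(w)$ already lies in $\Sigma^{\delta}$ with $\delta>1$, but it is where care is required. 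A secondary technical point is to spell out the compactness of the weighted Sobolev embedding $\Sigma^{\delta}\hookrightarrow\Sigma^{\delta'}$ for $\delta>\delta'$ and the fact that, on the bounded interval $[\tau_0,\tau_1]$, the $Y(\delta',\eps',\tau_0)$-norm is equivalent to the product norm on $C([\tau_0,\tau_1];\Sigma^{\delta'})\times C([\tau_0,\tau_1])$ used above.
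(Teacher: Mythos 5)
Your proposal is correct and follows essentially the same strategy as the paper: uniform smallness of the weighted norms on a time tail (using the decay built into $\Phi(K)\subset K$, which you sharpen via \eqref{estimate_m6} and the $X(2-\frac{\eps}{2},2-\frac{3}{2}\eps-\delta,\delta)$ bound), combined with Theorem~\ref{T:Simon} on compact intervals $[\tau_0,\tau_1]$ to get compactness there, where the ambient space must indeed carry a negative spatial weight to absorb the $|y|^2$ term of $Z_p$ (the paper takes $\Sigma^{\delta-2}=H^{\delta-2}+\F(H^{\delta-2})$, you take $\{u:\<y\>^{-2}u\in H^{-2}\}$, which serves the same purpose). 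The only packaging differences are that you extract a diagonal subsequence and treat the $\vect(n_6)$-component separately by Arzel\`a--Ascoli, whereas the paper builds an explicit finite $r$-net by cutting off in time and applies Simon's theorem to the full $\Phi(w)$; both routes are sound.
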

\begin{remark}
The assumptions of the lemma are satisfied for example by $\delta,\eps,\delta',\eps'$ defined by
$$ \delta=2-4\eps, \quad \eps'=2\eps,\quad \delta'=2-5\eps,$$
for some small $\eps>0$.
\end{remark}
\begin{proof}
It is sufficient to show that for all $r>0$, there exists a finite
number $N$ of functions in $\psi_n\in Y(\delta',\eps',\tau_0)$, such
that  
\begin{equation}
\label{precompact}
\psi\in \Phi(K)\Longrightarrow \exists n\in \{1,\ldots,N\},\;
\left\|\psi-\psi_n\right\|_{\delta',\eps',\tau_0}<r. 
\end{equation}
Recall first that $\Phi(K)\subset K$. Thus for $\psi\in \Phi(K)$,
\begin{align*}
\tau^{2-\eps'}\|P_M \psi(\tau)\|_{H^{\delta'}}+\tau^{3-3\eps'}\lvert
\<\psi(\tau),m_6\>\rvert&+\tau^{2-2\eps'-\delta'}\|\<y\>^{\delta'} P_M 
    \psi(\tau)\|_{L^2}\\
&\le \tau^{\eps-\eps'}+\tau^{3\eps-3\eps'}+\tau^{2\eps+\delta-2\eps'-\delta'}. 
\end{align*}
Let $\tau_1$ such that $\tau_0<\tau_1/2$ and 
\begin{equation*}
\(\frac{\tau_1}{2}\)^{\eps-\eps'}+\(\frac{\tau_1}{2}\)^{3\eps-3\eps'}
+\(\frac{\tau_1}{2}\)^{2\eps+\delta-2\eps'-\delta'}<\frac{r}{2}. 
\end{equation*}
From the two preceding inequalities, we get that for $\psi\in \Phi(K)$,
\begin{multline}
\label{cond_tau1}
\tau\ge\frac{\tau_1}{2}\Longrightarrow \\
\tau^{2-\eps'}\|P_M \psi(\tau)\|_{H^{\delta'}}+\tau^{3-3\eps'}\lvert
\<\psi(\tau),m_6\>\rvert+\tau^{2-2\eps'-\delta'}\|\<y\>^{\delta'} P_M 
    \psi(\tau)\|_{L^2}<\frac{r}{2}.
\end{multline}
Next, consider the set
$$F=\left\{ \Phi(w)\big|_{[\tau_0,\tau_1]},\; w\in K\right\}.$$
We will show that the assumptions of Theorem \ref{T:Simon} hold with 
$$ X=\Sigma^{\delta}, \quad B=\Sigma^{\delta'}, \quad Y=\Sigma^{\delta-2},$$
where we define (as $\delta<2$) $\Sigma^{\delta-2} =
H^{\delta-2}+\F(H^{\delta-2})$. Note that $0<\delta'<\delta$, so that
$X$ is compactly embedded in $B$.
The fact that $\Phi(K)\subset K$ shows that $F$ is a bounded subset
of $C\([\tau_0,\tau_1];\Sigma^{\delta}\)$. 
Furthermore if $\phi=\Phi(w)\in K$ then
$\phi=\phi_M+\phi_S$ where 
\begin{align*}
\partial_\tau \phi_S+iL\phi_S&=P_SR_p(w)+P_SZ_p(w)+Z_p(Q),\\
\partial_\tau \phi_M+iL\phi_M -P_MZ_p(\phi_M)&=P_MR_p(w)+P_MZ_p(w_S).
\end{align*}
Using that $\phi$ and $w$ are in $K$, we get that
$\partial_{\tau}\phi\in
C\([\tau_0,+\infty[;\Sigma^{\delta-2}\)$ and that
$\partial_{\tau}\phi_{\mid [\tau_0,\tau_1]}$ is uniformly bounded in
$\Sigma^{\delta-2}$ with a bound which is independent of
$\phi$. 
By Theorem~\ref{T:Simon}, $F$ has compact closure in
$C\left([\tau_0,\tau_1];\Sigma^{\delta}\right)$. As a consequence,
there exist $\tilde{\psi}_1$,\ldots $\tilde{\psi}_N$ such that  
\begin{equation}
\label{compactness_F}
\forall \tilde{\psi}\in F,\;\exists n\in\{1,\ldots N\},\quad
\sup_{\tau_0\le \tau\le \tau_1}\left\|\tilde{\psi}(\tau)-
\tilde{\psi}_n(\tau)\right\|_{\Sigma^{\delta'}}
<\frac{r}{6\tau_1^{\kappa}},  
\end{equation}
where $\kappa=\max\{2-\eps',3-3\eps',2-2\eps'-\delta'\}>0$.

Let $\chi\in C^{\infty}([\tau_0,+\infty[)$, supported in
$[\tau_0,\tau_1]$, such that $0\le \chi\le 1$, and $\chi=1$ on
$[\tau_0,\tau_1/2]$. For $1\le n\le N$, let
$\psi_n=\chi \tilde{\psi}_n$. We show that the $\psi_n$'s satisfy
\eqref{precompact}, which will conclude the proof of the lemma. Let
$\psi\in \Phi(K)$. By \eqref{compactness_F}, there exists
$n\in\{1,\ldots N\}$ such that 
$$
\left\|\chi\psi-\psi_n\right\|_{L^{\infty}\(\tau_0,+\infty,\Sigma^{\delta'}\)}
\le
\left\|\psi-\tilde{\psi}_n\right\|_{L^{\infty}\(\tau_0,\tau_1,\Sigma^{\delta'}\)}
<\frac{r}{A\tau_1^{\kappa}},$$
where $A$ is a large universal constant to be specified later. 
And thus, using that $\chi\psi$ is supported in $[\tau_0,\tau_1]$, 
$$ \forall \tau\ge\tau_{0},\quad
\left\|\chi\psi(\tau)-\psi_n(\tau)\right\|_{\Sigma^{\delta'}}<
\frac{r}{A\tau^{\kappa}}.$$ 
This implies, if $A$ is large enough,
\begin{multline*}
\tau^{2-\eps'}\|\chi P_M
\psi(\tau)-P_M\psi_n(\tau)\|_{H^{\delta'}}+\tau^{3-3\eps'}\lvert
\<\chi\psi(\tau)-\psi_n(\tau),m_6\>\rvert \\ 
+\tau^{2-2\eps'-\delta'}\|\<y\>^{\delta'} \left(\chi P_M
    \psi(\tau)-P_M\psi_n(\tau)\right)\|_{L^2}<\frac{r}{2}.
\end{multline*}
Furthermore, by \eqref{cond_tau1},
\begin{multline*}
\tau^{2-\eps'}\|(1-\chi)P_M
\psi(\tau)\|_{H^{\delta'}}+\tau^{3-3\eps'}\lvert
(1-\chi)\<\psi(\tau),m_6\>\rvert 
\\+\tau^{2-2\eps'-\delta'}\|(1-\chi)\<y\>^{\delta'} P_M
    \psi(\tau)\|_{L^2}<\frac{r}{2}.
\end{multline*}
Hence \eqref{precompact}. The proof is complete.
\end{proof}

\subsection{End of the proof}
The following proposition will allow us to use Schauder's Theorem in
order to prove Theorem \ref{theo:reduced}. 

\begin{proposition}\label{prop:continuityPhi}
   Let $0<\eps<\eps'<1/4$ and $1<\delta'<\delta<2$, and assume
  $2\eps+\delta<2\eps'+\delta'<2$.
The closed unit ball $K$ of $Y(\delta,\eps,\tau_0)$ is closed in
$Y(\delta',\eps',\tau_0)$. In addition, the map $\Phi:K\to K$ is
continuous for the topology of $Y(\delta',\eps',\tau_0)$. 
\end{proposition}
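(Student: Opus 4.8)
The plan is to establish the two assertions in turn, reducing everything to the quantitative estimates already proved in Sections~\ref{sec:modul}--\ref{sec:nonsecular}. For the closedness of $K$ in $Y(\delta',\eps',\tau_0)$: since $\delta'<\delta$ and $2\eps+\delta<2\eps'+\delta'$, the space $Y(\delta,\eps,\tau_0)$ embeds continuously into $Y(\delta',\eps',\tau_0)$, so $K$ is a bounded subset of the latter. If $w_n\in K$ converges to $w$ in $Y(\delta',\eps',\tau_0)$, then for each fixed $\tau$ the functions $P_Mw_n(\tau)$ are bounded in $H^\delta$ and in $\Sigma^\delta$-weighted norm while the scalars $\<w_n(\tau),m_6\>$ obey the $\tau^{3-3\eps}$ bound; passing to weak limits in these reflexive spaces (along a subsequence, then identifying the limit with $w(\tau)$ by the strong convergence in the weaker topology) shows $w(\tau)$ still satisfies the defining bounds of $K$ with the same constant, by weak lower semicontinuity of the norms. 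Hence $w\in K$.

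For continuity of $\Phi$: I would take $w,\tilde w\in K$, let $p=p(w)$, $\tilde p=p(\tilde w)$ be the associated modulations from Proposition~\ref{P:p}, and estimate $\Phi(w)-\Phi(\tilde w)=(\Phi_1(w)-\Phi_1(\tilde w))+(\Phi_2(w)-\Phi_2(\tilde w))$ in the $Y(\delta',\eps',\tau_0)$-norm. The first step is to control $\|p-\tilde p\|$: going back to the fixed-point equation $\Psi(p)=p$ defining $p$ in \S\ref{sec:proj}, one subtracts the two fixed-point identities and uses the Lipschitz bounds on $D_j$ (the pointwise estimates \eqref{R_pointwise_diff}, \eqref{RNL_pointwise_diff}, \eqref{RL_pointwise_diff} of Lemma~\ref{lem:Rlip}, together with Lemma~\ref{lem:qLip}) to bound $\|p(w)-p(\tilde w)\|_{3-3\eps,\tau_0}$ by (a constant times, for $\tau_0$ large) a modulus of continuity of $\|w-\tilde w\|$ measured in a norm no stronger than that of $Y(\delta',\eps',\tau_0)$ — here one exploits that $R_p$ and $Z_p$ are Lipschitz in $w$ in $L^2$- and $H^{\delta'}$-based norms on the ball $K$, the nonlinear term $R_{NL}$ being at least quadratic. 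Then $\Phi_1(w)-\Phi_1(\tilde w)=\nu_6(w)n_6-\nu_6(\tilde w)n_6$ with $\nu_6=-\int_\tau^\infty D_6$, so its $m_6$-component is estimated by $\int_\tau^\infty|D_6(p)(\sigma)-D_6(\tilde p)(\sigma)|\,d\sigma$, which is handled by the same Lipschitz estimates. For $\Phi_2(w)-\Phi_2(\tilde w)$ one writes the difference as the solution of the $M$-equation \eqref{eq:Mgen} with right-hand side $P_M(R_p(w)-R_{\tilde p}(\tilde w))+P_MZ_p(P_Sw)-P_MZ_{\tilde p}(P_S\tilde w)$ plus the perturbation $P_MZ_p(\phi)-P_MZ_{\tilde p}(\tilde\phi)$ coming from the difference of the operators themselves; since $\delta<2$ one cannot close in $\Sigma^\delta$ with the quantitative constant, but one can afford to lose a derivative, i.e. estimate the difference in $Y(\delta',\eps',\tau_0)$ with $\delta'<\delta$. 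Concretely, one applies Proposition~\ref{prop:cauchyM} (or directly the a priori estimate Lemma~\ref{lem:aprioriref}) with the loss parameter, using that the source term difference decays like $\|w-\tilde w\|$ times a summable power of $\sigma$, and that the extra term $(Z_p-Z_{\tilde p})(\tilde\phi)$ contributes $|p-\tilde p|\,\|\tilde\phi\|_{\Sigma^\delta}\lesssim \tau^{-(3-3\eps)}\|p-\tilde p\|$, again summable.

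The main obstacle is exactly the interplay between the loss of regularity/weight ($\delta\to\delta'$, $\eps\to\eps'$) and the nonlinear dependence of $p$ on $w$: one must check that the Lipschitz estimate for $p(w)$ and the energy estimates for $\Phi_2$ can both be carried out losing only the tolerated amount, so that the composition still gives continuity (not contraction — contraction is false here, which is why Schauder rather than Banach--Picard is used). The delicate point inside this is the $P_MR_p(w)$ term, the same one flagged in the proof of Proposition~\ref{prop:stabilite}: its Lipschitz bound in $w$ must be taken in the $H^{\delta'}$- and $\<y\>^{\delta'}L^2$-norms rather than the $\delta$-ones, and one verifies this is enough because the time decay $\tau^{-3}$ (resp. $\tau^{-(4-2\eps-\delta)}$) of the source leaves room, via the assumption $2\eps+\delta<2\eps'+\delta'<2$, to absorb the gap. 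Once these two Lipschitz-type bounds are in hand, combining them with $\|p-\tilde p\|\lesssim \omega(\|w-\tilde w\|_{\delta',\eps',\tau_0})$ yields $\|\Phi(w)-\Phi(\tilde w)\|_{\delta',\eps',\tau_0}\to 0$ as $\|w-\tilde w\|_{\delta',\eps',\tau_0}\to 0$, which is the claimed continuity, and Lemma~\ref{L:compactness} then makes Schauder's theorem applicable.
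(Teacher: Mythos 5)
Your treatment of the closedness of $K$ and of the modulation difference is sound and essentially the paper's: you subtract the fixed-point identities for $\Psi_w$ and $\Psi_{\tilde w}$, absorb via the contraction estimate \eqref{contractionp}, and use the Lipschitz bounds on the $D_j$'s to get $\lVert p-\tilde p\rVert_{3-3\eps',\tau_0}\lesssim \lVert w-\tilde w\rVert_{\delta',\eps',\tau_0}$, whence the (Lipschitz) continuity of $\Phi_1$. The gap is in your treatment of $\Phi_2$.

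You propose to estimate $\psi=\Phi_2(w)-\Phi_2(\tilde w)$ directly by viewing it as a solution of \eqref{eq:Mgen} and feeding the right-hand side into Proposition~\ref{prop:cauchyM} (or Lemma~\ref{lem:aprioriref}) at the level $\delta'$, claiming in particular that the operator-difference term $P_M\(Z_p-Z_{\tilde p}\)(\tilde\phi)$ ``contributes $|p-\tilde p|\,\|\tilde\phi\|_{\Sigma^\delta}$''. This is where the argument breaks down. Since $iZ_p$ contains $p_5|y|^2$, $p_2\cdot\nabla$ and $p_4\,y\cdot\nabla$, measuring $\(Z_p-Z_{\tilde p}\)(\tilde\phi)$ in the norms required for the a priori estimate at level $\delta'$ demands control of $\|\<y\>^{\delta'+2}\tilde\phi\|_{L^2}$ and $\|\tilde\phi\|_{H^{\delta'+1}}$ (and mixed quantities $\<y\>^{\delta'+1}\nabla\tilde\phi$); but $\tilde\phi=\Phi_2(\tilde w)$ is only known to lie in $\Sigma^{\delta}$ with $\delta<2<\delta'+1\le \delta'+2$, so none of these norms is finite a priori, and no interpolation or gain of time decay in $\tau$ can compensate a loss of derivatives and weights in $y$. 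Note that the reason $Z_p$ is harmless in Lemma~\ref{lem:apriori} is the real-part cancellation when $Z_p(\phi)$ is paired against $L\overline\phi$ or $\<y\>^{2s}\overline\phi$ for the \emph{same} function $\phi$ (e.g.\ $\re i\int p_5|y|^2\phi\Delta\overline\phi$ loses only one power of $y$ after integration by parts); for the cross term $\(Z_p-Z_{\tilde p}\)(\tilde\phi)$ tested against the difference $\psi\neq\tilde\phi$ this structure is unavailable, and the tolerated loss $\delta-\delta'<1$ is far too small. This is precisely why the paper does not attempt any quantitative difference estimate for $\Phi_2$ (nor a contraction for $\Phi$): instead it takes $w_n\to w$ in $Y(\delta',\eps',\tau_0)$, uses the relative compactness of $\Phi(K)$ from Lemma~\ref{L:compactness} to extract a limit $\tilde\phi$ of $\Phi_2(w_n)$, passes to the limit in the equation in the sense of distributions (using $p_n\to p$ from the estimate you did prove), and identifies $\tilde\phi=\Phi_2(w)$ by the uniqueness part of Lemma~\ref{lem:aprioriref}. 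To repair your proof you should replace your direct estimate of $\Phi_2(w)-\Phi_2(\tilde w)$ by this compactness-plus-uniqueness argument (or find another way around the derivative/weight loss in $Z_p-Z_{\tilde p}$, which your present bound does not provide).
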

\begin{proof}
In view of Proposition~\ref{prop:stabilite}, we need only prove the
continuity. 
We start with an estimate of the difference of two parameters
$p,\tilde p$ defined from two different functions $w,\tilde w\in
Y(\delta',\eps',\tau_0)$.    
Recall that the existence of $p$ was proved in Proposition~\ref{P:p}
as a fixed point of the operator $\Psi(p)=\Psi_w(p)$, and that $w\in
Y(\delta',\eps',\tau_0)$ implies $\lVert
p\rVert_{3-3\eps',\tau_0}<\infty$. We have 
 \begin{align*}
   \left\lvert p(\tau)-\tilde p (\tau)\right\rvert &= \left\lvert
     \Psi_w(p)(\tau)-\Psi_{\tilde w}(\tilde p) (\tau)\right\rvert \\
   &\le \left\lvert \Psi_w(p)(\tau)-\Psi_{w}(\tilde
     p)(\tau)\right\rvert +\left\lvert \Psi_w(\tilde
     p)(\tau)-\Psi_{\tilde w}(\tilde p)(\tau)\right\rvert. 
  \end{align*}
By the contraction estimate \eqref{contractionp} on $\Psi_w$, we
get 
 \begin{equation*}
\lVert p-{\tilde p}\rVert_{3-3\eps',\tau_0}\le \frac{1}{1-\kappa}\lVert \Psi_w(\tilde
p)-\Psi_{\tilde w}(\tilde p)\rVert_{3-3\eps',\tau_0}. 
 \end{equation*}
Therefore, in view of the definition of $\Psi_w$, 
 \begin{align*}
\lVert p-{\tilde p}\rVert_{3-3\eps',\tau_0} &\lesssim \sum_{1\le j\le
  5}\left\lVert \( D_j(\tilde p)(w)-D_j(\tilde p)(\tilde
  w)\)\right\rVert_{3-3\eps',\tau_0}\\ &+\left\lVert
  \int_\tau^{+\infty} \( D_6(\tilde p)(w)-D_6(\tilde p)(\tilde
  w)\)\right\rVert_{3-3\eps',\tau_0}. 
 \end{align*}
By the definition \eqref{Dj} of $D_j(\tilde{p})$, one has 
 \begin{multline*}
 |D_j(\tilde p)(w)-D_j(\tilde p)(\tilde w)|\le \left| \< P_S(R_{\tilde
     p} (w)- R_{\tilde p} (\tilde w)),m_j\>\right|+ \left|\<
   P_SZ_{\tilde p}(w-\tilde w),m_j\>\right|\\ 
 \le \left| \< R_{NL,\tilde p} (w)- R_{NL,\tilde p} (\tilde
   w),m_j\>\right|+\left| \< R_{L,\tilde p} (w-\tilde w),m_j\>\right|
 +\left| \< P_SZ_{\tilde p}(w-\tilde w),m_j\>\right|. 
  \end{multline*}
In view of the explicit formulas for $R_{NL}$ and of the pointwise
estimate \eqref{RL_pointwise} on $R_L$, we infer, since $w,\tilde w\in
L^\infty_\tau H^1$: 
 \begin{multline*}
 |(D_j(\tilde p)(w)-D_j(\tilde p)(\tilde w))(\tau)|\lesssim \lVert
 w(\tau)-\tilde w(\tau)\rVert_{H^1} (\lVert w(\tau)\rVert_{H^1}+\lVert
 \tilde w(\tau) \rVert_{H^1})\\+\frac{1}{\tau^3} \lVert w(\tau)-\tilde
 w(\tau)\rVert_{H^1}+|\tilde{p}(\tau)|  \lVert w(\tau)-\tilde
 w(\tau)\rVert_{H^1}\lesssim \frac{1}{\tau^{4-2\eps'}} \lVert w-\tilde
 w\rVert_{\delta',\eps',\tau_0}. 
  \end{multline*}
In conclusion, 
\begin{equation}
\label{ppp}
\lVert p-{\tilde p}\rVert_{3-3\eps',\tau_0} \lesssim
\frac{1}{\tau_0^{\eps'}}\lVert w-\tilde
w\rVert_{\delta',\eps',\tau_0}. 
\end{equation}
Also, since 
\begin{equation*}
\Phi_1(w)(\tau)=-n_6\,\int_\tau^{+\infty} D_6(p)(w),
\end{equation*}
we get
\begin{equation*}
\lVert \Phi_1(w)-\Phi_1(\tilde w)\rVert_{\delta',\eps',\tau_0} \lesssim
\frac{1}{\tau_0^{\eps'}}\lVert w-\tilde
w\rVert_{\delta',\eps',\tau_0}. 
\end{equation*}
Therefore $\Phi_1$ is (Lipschitz-)continuous on
$Y(\delta',\eps',\tau_0)$. It remains to show the continuity of
$\Phi_2$.

Let $w\in K$  and $w_n\in K$ such that $w_n\to w$ in
$Y(\delta',\eps',\tau_0)$. Denote by
$\phi_n=\Phi_2(w_n)=P_M\Phi(w_n)$, and $\phi=\Phi_2(w)$. By
Lemma~\ref{L:compactness}, $\Phi(K)$ is 
relatively 
compact and there exists a 
subsequence of $\phi_n$  which converges in $Y(\delta',\eps',\tau_0)$
to some $\tilde{\phi}\in K$. It remains to show that
$\phi=\tilde{\phi}$. By \eqref{ppp} we have 
$$ \lim_{n\to \infty}\lVert p_n-p\rVert_{3-3\eps',\tau_0}=0. $$
By definition of $\Phi_2$, we have
$$ \partial_\tau \phi_n+iL\phi_n
-P_MZ_{p_n}(\phi_n)=P_MR_{p_n}(w_n)+P_M Z_{p_n}(P_Sw_n).$$ 
Letting $n$ tends to $\infty$, we get that $\tilde{\phi}$ satisfies
the following equation in the sense of distributions 
$$ \partial_\tau \tilde{\phi}+iL\tilde{\phi}
-P_MZ_{p}(\tilde{\phi})=P_MR_{p}(w)+P_M Z_{p}(P_Sw).$$ 
Using that $\phi$ is, by definition, solution to the same equation, we get 
$$ \partial_\tau (\tilde{\phi}-\phi)+iL(\tilde{\phi}-\phi)
-P_MZ_{p}(\tilde{\phi}-\phi)=0,$$ 
which implies, by Lemma \ref{lem:aprioriref}, that
$\phi=\tilde{\phi}$. The proof is complete. 
\end{proof}

\begin{proof}[Proof of Theorem \ref{theo:reduced}]
By Proposition~\ref{prop:continuityPhi}, $\Phi$ is a continuous map
from $K$ into itself. By Lemma \ref{L:compactness}, $\Phi(K)$ is
relatively compact in $Y(\delta',\eps',\tau_0)$. As $K$ is a convex
closed subset of $Y(\delta',\eps',\tau_0)$, we can apply Schauder's
Theorem (see \emph{e.g.} \cite[Corollary B.3]{TaylorIII}) which implies
that $\Phi$ has a fixed point  $w\in K$. By the definition of $K$ and
Proposition~\ref{P:p}, Theorem \ref{theo:reduced} follows.  
\end{proof}

\appendix

\section{A differential inequality}
\label{sec:diffineq}
\begin{lemma}
\label{L:diff}
Let $\mu>0$ and $m\in \N$. Let $(a_j)_{j=1\ldots
  m}$,  $(b_j)_{j=1\ldots m}$, be real constants, $a,b,\eta>0$,  
and $(\alpha_j)_{j=1\ldots m}$,
$(\beta_j)_{j=1\ldots m}$ be constants in $[0,1]$. Assume 
\begin{equation*}
\forall j\in \{1\ldots m\},\quad a_j+(b-a)\alpha_j >0,\quad b_j+(a-b)\beta_j>0.
\end{equation*} 
There exists $\tau_0$ such that for any $M>0$ and any
nonnegative continuous functions $z_1$ 
and $z_2$ on $[\tau_0,+\infty[$ such that
\begin{equation*}
  \sup_{\tau\ge\tau_0}\left|\tau^az_1(\tau)\right|+\sup_{\tau\ge\tau_0}|\tau^b
  z_2(\tau)| <\infty. 
\end{equation*}
and 
satisfying the following differential
inequality on $[\tau_0,+\infty[$: 
\begin{equation}
\label{diffeq}
\left\{
\begin{aligned}
 z_1(\tau) &\le \int_\tau^\infty\(
 \frac{M}{\sigma^{a+1+\eta}}+C\sum_{j=1}^m
\frac{z_1(\sigma)^{1-\alpha_j}z_2(\sigma)^{\alpha_j}}{\sigma^{a_j+1}}\)d\sigma,\\
z_2(\tau) &\le \int_\tau^\infty\(\frac{M}{\sigma^{b+1+\eta}}+
C\sum_{j=1}^m
\frac{z_1(\sigma)^{\beta_j}z_2(\sigma)^{1-\beta_j}}{\sigma^{b_j+1}}\)d\sigma,  
\end{aligned}
\right.
\end{equation}
we have
\begin{equation*}
  \sup_{\tau\ge\tau_0}\left|\tau^az_1(\tau)\right|+\sup_{\tau\ge\tau_0}|\tau^b
  z_2(\tau)| \le \mu M. 
\end{equation*}
\end{lemma}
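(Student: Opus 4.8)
The plan is a straightforward bootstrap argument based on the a priori finiteness already built into the hypotheses. Set
$$N_1=\sup_{\tau\ge\tau_0}\tau^a z_1(\tau),\qquad N_2=\sup_{\tau\ge\tau_0}\tau^b z_2(\tau),\qquad N=N_1+N_2,$$
which are finite by assumption; the goal is to show $N\le\mu M$ once $\tau_0$ is large.

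First I would insert the pointwise bounds $z_1(\sigma)\le N_1\sigma^{-a}$ and $z_2(\sigma)\le N_2\sigma^{-b}$, valid for all $\sigma\ge\tau_0$, into the right-hand sides of \eqref{diffeq}. The generic term in the first inequality is then estimated by
$$\frac{z_1(\sigma)^{1-\alpha_j}z_2(\sigma)^{\alpha_j}}{\sigma^{a_j+1}}\le N_1^{1-\alpha_j}N_2^{\alpha_j}\,\sigma^{-a+(a-b)\alpha_j-a_j-1},$$
and the integral $\int_\tau^\infty$ of the right-hand side converges, since the exponent equals $-1-\bigl(a+a_j+(b-a)\alpha_j\bigr)<-1$ by $a>0$ together with the hypothesis $a_j+(b-a)\alpha_j>0$; it evaluates to $\bigl(a+a_j+(b-a)\alpha_j\bigr)^{-1}\tau^{-a-(a_j+(b-a)\alpha_j)}$. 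Multiplying the resulting bound for $z_1(\tau)$ by $\tau^a$ gives
$$\tau^a z_1(\tau)\le\frac{M}{(a+\eta)\tau^{\eta}}+C\sum_{j=1}^m\frac{N_1^{1-\alpha_j}N_2^{\alpha_j}}{a+a_j+(b-a)\alpha_j}\,\tau^{-(a_j+(b-a)\alpha_j)},$$
and symmetrically $\tau^b z_2(\tau)\le\frac{M}{(b+\eta)\tau^{\eta}}+C\sum_j\frac{N_1^{\beta_j}N_2^{1-\beta_j}}{b+b_j+(a-b)\beta_j}\,\tau^{-(b_j+(a-b)\beta_j)}$. The key point is that every power of $\tau$ occurring on the right has a strictly negative exponent — the first by $\eta>0$, the remaining ones by the two sign hypotheses — so each such power is $\le$ its value at $\tau_0$. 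Taking the supremum over $\tau\ge\tau_0$ therefore yields bounds for $N_1$ and $N_2$ with $\tau$ replaced throughout by $\tau_0$.

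Next I would bound $N_1^{1-\alpha_j}N_2^{\alpha_j}\le\max(N_1,N_2)\le N$ (and likewise $N_1^{\beta_j}N_2^{1-\beta_j}\le N$), set $\rho=\min_j\{a_j+(b-a)\alpha_j,\ b_j+(a-b)\beta_j\}>0$, and add the two estimates to get an inequality of the form
$$N\le C_0\,\frac{M}{\tau_0^{\eta}}+C_1\,\frac{N}{\tau_0^{\rho}},$$
where $C_0,C_1$ depend only on $a,b,\eta,m,C$ and the various exponents, not on $\tau_0$ or $M$. Choosing $\tau_0$ large enough that $C_1\tau_0^{-\rho}\le\tfrac12$ absorbs the last term, leaving $N\le 2C_0M\tau_0^{-\eta}$; enlarging $\tau_0$ once more so that $2C_0\tau_0^{-\eta}\le\mu$ concludes. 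I do not expect a genuine obstacle here: the only delicate point is the logical structure of the bootstrap — one must already know $N_1,N_2<\infty$ to run the estimate, which is why this is part of the hypothesis — and one should note that the degenerate cases $\alpha_j\in\{0,1\}$ (resp.\ $\beta_j\in\{0,1\}$) are covered by exactly the same computation, the corresponding summand then reading simply $z_1/\sigma^{a_j+1}$ or $z_2/\sigma^{a_j+1}$.
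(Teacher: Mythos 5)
Your proof is correct and follows essentially the same route as the paper's: insert the weighted sup bounds $z_1\le N_1\sigma^{-a}$, $z_2\le N_2\sigma^{-b}$ into \eqref{diffeq}, integrate the explicit powers (whose convergence and negativity of the residual exponents are exactly the hypotheses $a_j+(b-a)\alpha_j>0$, $b_j+(a-b)\beta_j>0$), and absorb the self-referential term by taking $\tau_0$ large, using the a priori finiteness of $N$. The only cosmetic difference is that you bound the weighted geometric mean by $\max(N_1,N_2)$ where the paper uses Young's inequality $Z_1^{1-\theta}Z_2^{\theta}\le(1-\theta)Z_1+\theta Z_2$; both are equally valid.
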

\begin{proof}
Denote by
$$ Z_1(\tau)=\tau^a z_1(\tau),\quad Z_2(\tau)=\tau^b z_2(\tau).$$
Let $\widetilde a_j=(1-\alpha_j)a+\alpha_j b +a_j$, 
$\widetilde b_j=\beta_ja+(1-\beta_j) b+b_j$.
Using Young's inequality, $Z_1^{1-\theta}Z_2^{\theta}\le
(1-\theta)Z_1+\theta Z_2$, \eqref{diffeq} and H\"older inequality yield
$$ z_1(\tau)\le M\left\|
  \frac{1}{\sigma^{a+1+\eta}} \right\|_{L^{1}} +C\sum_{j=1}^m  
\Big( \|Z_1\|_{L^{\infty}}+\|Z_2\|_{L^{\infty}}\Big)
\left\|\frac{1}{\sigma^{\widetilde a_j+1}}\right\|_{L^{1}},$$ 
where the Lebesgue norms correspond to integration over
$[\tau,\infty[$. Similarly,
$$ z_2(t)\le M\left\| \frac{1}{\sigma^{b+1+\eta}}
\right\|_{L^{1}} +C\sum_{j=1}^m  
\Big(\|Z_1\|_{L^\infty}+\|Z_2\|_{L^\infty}\Big)
\left\|\frac{1}{\sigma^{\widetilde b_j+1}}\right\|_{L^{1}}.$$ 
For any $c>0$,
$$\left\|\frac{1}{\sigma^{c+1}}\right\|_{L^{1}([\tau,\infty[)}=
\frac{1}{c\tau^{c}},$$ 
and hence (with a constant $C$ depending only on the parameters
$\eta$,  $a$, $b$, $\widetilde a_j$, $\widetilde b_j$) 
\begin{align*}
  Z_1(\tau)&\le
  \frac{C}{\tau^{\eta}}M+
C\sum_{j=1}^m \frac{\|Z_1\|_{L^\infty([\tau_0,+\infty[)}+
\|Z_2\|_{L^\infty([\tau_0,+\infty[)}}{\tau^{\widetilde a_j-a}},\\
Z_2(\tau)&\le
\frac{C}{\tau^{\eta}}M
+C\sum_{j=1}^m \frac{\|Z_1\|_{L^\infty([\tau_0,+\infty[)}+
\|Z_2\|_{L^\infty([\tau_0,+\infty[)}}{\tau^{\widetilde b_j-b}}.
\end{align*}
By assumption, $\widetilde a_j-a$ and $\widetilde b_j-b$ are
positive. Taking the sup norm of the preceding inequalities and
using the triangle inequality, we get 
\begin{equation*}
 \begin{aligned}
\|Z_1\|_{L^\infty([\tau_0,+\infty[)}&\le
\frac{C}{\tau_0^{\eta}}M+
C\sum_{j=1}^m \frac{\|Z_1\|_{L^\infty([\tau_0,+\infty[)}+
\|Z_2\|_{L^\infty([\tau_0,+\infty[)}}{\tau_0^{\widetilde a_j-a}}\\
\|Z_2\|_{L^\infty([\tau_0,+\infty[)}&\le \frac{C}{\tau_0^{\eta}}
M+
C\sum_{j=1}^m \frac{\|Z_1\|_{L^\infty([\tau_0,+\infty[)}+
\|Z_2\|_{L^\infty([\tau_0,+\infty[)}}{\tau_0^{\widetilde b_j-b}}.
\end{aligned}
\end{equation*}
Taking  $\tau_0$ large, we obtain 
\begin{equation*}
 \begin{aligned}
\|Z_1\|_{L^\infty([\tau_0,+\infty[)}&\le
\frac{\mu}{4}M
+\frac{1}{4}\|Z_1\|_{L^\infty([\tau_0,+\infty[)}
+\frac{1}{4}\|Z_2\|_{L^\infty([\tau_0,+\infty[)},\\
\|Z_2\|_{L^\infty([\tau_0,+\infty[)}&\le
\frac{\mu}{4}M
+\frac{1}{4}\|Z_1\|_{L^\infty([\tau_0,+\infty[)}
+\frac{1}{4}\|Z_2\|_{L^\infty([\tau_0,+\infty[)}.
\end{aligned}
\end{equation*}
Summing up, we get the announced result.
\end{proof}

\section{Some interpolation inequalities}
\label{sec:interpol}

\begin{lemma}
  Let $d\ge 1$. There exists $C>0$ such that for all $f\in
  \Sch(\R^d)$, 
  \begin{align}
   \label{eq:inter1}
   \left\lVert \<x\> f\right\rVert_{L^2} &\le \left\lVert \<x\>^3
     f\right\rVert_{L^2}^{1/3}\left\lVert 
     f\right\rVert_{L^2}^{2/3},\\
\label{eq:inter2}
\left\lVert \<x\>^2 f\right\rVert_{L^2} &\le \left\lVert \<x\>^3
     f\right\rVert_{L^2}^{2/3}\left\lVert 
     f\right\rVert_{L^2}^{1/3},\\
\label{eq:inter3}
\left\lVert  f\right\rVert_{H^1} &\le \left\lVert 
     f\right\rVert_{H^3}^{1/3}\left\lVert 
     f\right\rVert_{L^2}^{2/3},\\
\label{eq:inter4}
\left\lVert  f\right\rVert_{H^2} &\le \left\lVert 
     f\right\rVert_{H^3}^{2/3}\left\lVert 
     f\right\rVert_{L^2}^{1/3},\\
  \label{eq:inter5}
   \left\lVert \<x\> \nabla f\right\rVert_{L^2} &\le C\left\lVert \<x\>^3
     f\right\rVert_{L^2}^{1/3}\left\lVert 
     f\right\rVert_{L^2}^{1/3}\left\lVert 
     f\right\rVert_{H^1}^{1/3},\\
  \label{eq:inter6}
   \left\lVert \<x\>^2 \nabla f\right\rVert_{L^2} &\le C\left\lVert \<x\>^3
     f\right\rVert_{L^2}^{2/3}\left\lVert 
     f\right\rVert_{H^1}^{1/3},\\
 \label{eq:inter7}
   \left\lVert \<x\> \nabla^2 f\right\rVert_{L^2} &\le C\left\lVert \<x\>^3
     f\right\rVert_{L^2}^{1/3}\left\lVert 
     f\right\rVert_{H^1}^{2/3},\\
  \label{eq:inter8}
   \left\lVert \<x\> \nabla^4 f\right\rVert_{L^2} &\le C\left\lVert \<x\>^5
     f\right\rVert_{L^2}^{1/5}\left\lVert 
     f\right\rVert_{H^5}^{4/5},\\
 \label{eq:inter9}
   \left\lVert \<x\>^4 \nabla f\right\rVert_{L^2} &\le C\left\lVert \<x\>^5
     f\right\rVert_{L^2}^{4/5}\left\lVert 
     f\right\rVert_{H^5}^{1/5}.
 \end{align}
\end{lemma}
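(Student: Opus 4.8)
The plan is to split the nine estimates into three groups. The first group, \eqref{eq:inter1}--\eqref{eq:inter2} together with \eqref{eq:inter3}--\eqref{eq:inter4}, is ``one variable'' and follows from H\"older's inequality alone. For \eqref{eq:inter1} I would write $\int\<x\>^2|f|^2=\int(\<x\>^6|f|^2)^{1/3}(|f|^2)^{2/3}$ and apply H\"older with exponents $(3,3/2)$, and similarly for \eqref{eq:inter2} with exponents $(3/2,3)$; for \eqref{eq:inter3}--\eqref{eq:inter4} I would use Plancherel to rewrite the Sobolev norms as $\int\<\xi\>^{2s}|\widehat f|^2$ and repeat the same splitting in the frequency variable. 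These are routine and dispatched in a couple of lines.

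The second group is \eqref{eq:inter5}--\eqref{eq:inter9}, and I would reduce all of them to the single ``master'' inequality
\[
\|\<x\>^{a}\nabla^{b}f\|_{L^2}\lesssim\|\<x\>^{a+b}f\|_{L^2}^{\frac{a}{a+b}}\,\|f\|_{H^{a+b}}^{\frac{b}{a+b}},\qquad a,b\in\N ,
\]
used with $(a,b)=(1,1),(2,1),(1,2),(1,4),(4,1)$, i.e.\ $a+b\in\{2,3,5\}$. Estimates \eqref{eq:inter6}, \eqref{eq:inter7} are the $(2,1)$ and $(1,2)$ instances (with $a+b=3$), and \eqref{eq:inter8}, \eqref{eq:inter9} are the $(1,4)$ and $(4,1)$ instances; the only one needing an extra step is \eqref{eq:inter5}, which I would get from the $(1,1)$ instance $\|\<x\>\nabla f\|_{L^2}\lesssim\|\<x\>^{2}f\|_{L^2}^{1/2}\|f\|_{H^{2}}^{1/2}$ by bounding $\|\<x\>^{2}f\|^{1/2}\le\|\<x\>^{3}f\|^{1/3}\|f\|_{L^2}^{1/6}$ via \eqref{eq:inter2} and $\|f\|_{H^{2}}^{1/2}\le\|f\|_{H^{3}}^{1/3}\|f\|_{L^2}^{1/6}$ via \eqref{eq:inter4}.

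To prove the master inequality I would invoke Stein's interpolation theorem for analytic families of operators. Fix $m=a+b$ and, for $z$ in the closed strip $\{0\le\re z\le1\}$, set $T_{z}f:=\<x\>^{mz}\<D\>^{m(1-z)}f$, where $\<D\>^{s}$ is the Fourier multiplier by $\<\xi\>^{s}$; on the Schwartz class this is an analytic family of admissible (in fact polynomial-in-$\im z$) growth. On $\re z=0$ one has $\|T_{it}f\|_{L^2}=\|f\|_{H^{m}}$ for every real $t$, since $\<\xi\>^{-imt}$ is unimodular; on $\re z=1$ the weight $\<x\>^{m+imt}$ contributes only a unimodular factor, so $\|T_{1+it}f\|_{L^2}=\|\<x\>^{m}\<D\>^{-imt}f\|_{L^2}\lesssim\<t\>^{m}\|\<x\>^{m}f\|_{L^2}$, the loss $\<t\>^{m}$ coming from the $W^{m,\infty}$-norm of the symbol $\<\xi\>^{-imt}$ and being harmless for Stein's theorem. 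Evaluating the conclusion at $z=\theta:=a/m$ (so $m\theta=a$ and $m(1-\theta)=b$) gives $\|\<x\>^{a}\<D\>^{b}f\|_{L^2}\lesssim\|f\|_{H^{m}}^{1-\theta}\|\<x\>^{m}f\|_{L^2}^{\theta}$. It then remains to pass from $\<D\>^{b}$ to $\nabla^{b}$: one commutes $\<x\>^{a}$ through the $0$-order Fourier multiplier $\nabla^{b}\<D\>^{-b}$, the commutator producing only terms of the form $\<x\>^{a'}\nabla^{b'}$ with $a'+b'\le a+b-1$, which are absorbed by a finite induction on $a+b$ together with the first group. (A seemingly more elementary route is to compare $\<x\>^{a}\nabla^{b}$ with powers of the harmonic oscillator $H=-\Delta+|x|^{2}$ and interpolate by the spectral theorem, but since $H$ cannot separate the position and momentum variables this only yields the weaker bound $\|\<x\>^{a}\nabla^{b}f\|_{L^2}\lesssim\|f\|_{H^{a+b}}+\|\<x\>^{a+b}f\|_{L^2}$, so a complex-interpolation step cannot be avoided if one wants the sharp product form.)

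The main obstacle is precisely this last step. There is no purely integration-by-parts proof of the master inequality once $b\ge2$: expanding $\|\<x\>^{a}\nabla^{b}f\|_{L^2}^{2}$ and moving all derivatives onto one factor doubles the order, leading to quantities such as $\|\<x\>^{a+2}\nabla^{b+2}f\|_{L^2}$ which are \emph{not} interpolable between $\|\<x\>^{a+b}f\|_{L^2}$ and $\|f\|_{H^{a+b}}$, while partial integrations by parts only shift the difficulty to neighbouring, still higher-order, instances. Hence the crux of the argument is to set up the analytic family $T_{z}$ with the correct endpoints, to verify both endpoint bounds and the admissible-growth hypothesis of Stein's theorem, and to carry out the commutator bookkeeping that converts $\<D\>^{b}$ back into the genuine derivative $\nabla^{b}$.
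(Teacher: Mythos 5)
Your handling of \eqref{eq:inter1}--\eqref{eq:inter4} coincides with the paper's (H\"older in $x$, then Plancherel in $\xi$). For the weighted derivative estimates you take a genuinely different and much heavier route. The paper stays entirely elementary: it expands $\|\<x\>\nabla f\|_{L^2}^2=-\int \overline f\,\nabla\cdot\big(\<x\>^2\nabla f\big)$, bounds the result by Cauchy--Schwarz and \eqref{eq:inter1}--\eqref{eq:inter4}, and for the order-three case derives two \emph{coupled} integration-by-parts inequalities --- one for $\|\<x\>^2\nabla f\|_{L^2}$, one for $\|\<x\>\nabla^2 f\|_{L^2}$, each controlled in terms of the other --- which it closes with Young's inequality and an $\eps$-absorption; \eqref{eq:inter8}--\eqref{eq:inter9} are stated to follow in the same way. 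This directly refutes your claim that no integration-by-parts proof exists once $b\ge 2$: partial integrations by parts do only move you to neighbouring instances, but since the total order $a+b$ is conserved these instances form a finite closed system which the absorption argument resolves, with no analytic families and no symbol calculus. Your Stein-interpolation proof of the master inequality is sound in outline: the endpoint bounds for $T_z=\<x\>^{mz}\<D\>^{m(1-z)}$ (unimodularity at $\re z=0$; a harmless $\<t\>^{m}$ loss at $\re z=1$, obtained by differentiating $\<\xi\>^{-imt}$ up to $m$ times on the Fourier side) are correct for the integer values $m\in\{2,3,5\}$ needed here, and polynomial growth is admissible. The weak point is the return from $\<D\>^{b}$ to $\nabla^{b}$: since $\<x\>^{a}$ is not a polynomial for odd $a$, the assertion that $[\<x\>^{a},\nabla^{b}\<D\>^{-b}]$ ``produces only terms $\<x\>^{a'}\nabla^{b'}$ with $a'+b'\le a+b-1$'' is not a Leibniz identity but an operator estimate requiring a kernel or weighted pseudodifferential argument; the induction does close (the lower-order products are absorbed using $\|f\|_{L^2}\le\|\<x\>^{m}f\|_{L^2}$ and $\|f\|_{L^2}\le\|f\|_{H^m}$), but you leave the hardest bookkeeping at the level of a claim, whereas the paper's two pages of elementary estimates need none of this machinery.

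A separate remark on exponents: your method delivers \eqref{eq:inter5}--\eqref{eq:inter7} with $\|f\|_{H^3}$ on the right-hand side (and \eqref{eq:inter8}--\eqref{eq:inter9} with $\|f\|_{H^5}$), not the $\|f\|_{H^1}$ appearing in the printed statement. This is not a defect of your argument: the $H^1$ versions fail a concentration test ($f(x/\lambda)$, $\lambda\to 0$), and the paper's own chain of estimates, read with \eqref{eq:inter3}--\eqref{eq:inter4} inserted, likewise produces $H^3$; the printed $H^1$ is evidently a typo, and only \eqref{eq:inter8}--\eqref{eq:inter9} are used elsewhere in the paper. So your conclusions agree with what the paper actually proves and uses.
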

\begin{proof}
  To prove \eqref{eq:inter1}, use H\"older's inequality:
  \begin{align*}
   \left\lVert \<x\> f\right\rVert_{L^2}^2 &=
   \int_{\R^d}\(\<x\>^6|f(x)|^2\)^{1/3} \(\lvert
   f(x)\rvert^2\)^{2/3}dx \\
&\le \left\lVert
     \(\<x\>^6|f(x)|^2\)^{1/3}\right\rVert_{L^3}\left\lVert \(\lvert
   f(x)\rvert^2\)^{2/3}\right\rVert_{L^{3/2}}.
  \end{align*}
Inequality \eqref{eq:inter2} follows the same way:
 \begin{align*}
   \left\lVert \<x\>^2 f\right\rVert_{L^2}^2 &=
   \int_{\R^d}\(\<x\>^6|f(x)|^2\)^{2/3} \(\lvert
   f(x)\rvert^2\)^{1/3}dx \\
&\le \left\lVert
     \(\<x\>^6|f(x)|^2\)^{2/3}\right\rVert_{L^{3/2}}\left\lVert \(\lvert
   f(x)\rvert^2\)^{1/3}\right\rVert_{L^{3}}.
  \end{align*}
Inequalities \eqref{eq:inter3} and \eqref{eq:inter4} then follow from
\eqref{eq:inter1} and \eqref{eq:inter2}, respectively, and
Plancherel formula. 

Integrating by parts, we have
\begin{align*}
 \left\lVert \<x\>\nabla f\right\rVert_{L^2}^2 &=
 -\int_{\R^d}\overline f(x) \nabla\cdot \(\<x\>^2 \nabla f(x)\)dx\\ 
&\lesssim \int_{\R^d} \lvert f(x) \rvert \<x\> \lvert \nabla
f(x)\rvert dx + \int_{\R^d} \lvert f(x) \rvert \<x\>^2 \lvert \Delta 
f(x)\rvert dx\\
&\lesssim \left\lVert \<x\>
     f\right\rVert_{L^2} \left\lVert  f\right\rVert_{H^1} +\left\lVert \<x\>^2
     f\right\rVert_{L^2} \left\lVert  f\right\rVert_{H^2} \\
&\lesssim \left\lVert \<x\>^3
     f\right\rVert_{L^2}^{1/3}\left\lVert 
     f\right\rVert_{L^2}^{4/3}\left\lVert 
     f\right\rVert_{H^1}^{1/3}  +\left\lVert \<x\>^3
     f\right\rVert_{L^2}^{2/3}\left\lVert 
     f\right\rVert_{L^2}^{2/3}\left\lVert 
     f\right\rVert_{H^1}^{2/3},
\end{align*}
where we have used \eqref{eq:inter1}--\eqref{eq:inter4}. Inequality
\eqref{eq:inter5} follows. 

Integration by parts also yields
\begin{align*}
  \left\lVert \<x\>^2 \nabla f\right\rVert_{L^2}^2 &= -\int_{\R^d} \overline
  f(x) \nabla\cdot \(\<x\>^4\nabla f(x)\)dx \\
&\lesssim  \left\lVert \<x\>^3
     f\right\rVert_{L^2}\left\lVert 
     f\right\rVert_{H^1} + \left\lVert \<x\>^3
     f\right\rVert_{L^2} \left\lVert \<x\> \nabla^2
     f\right\rVert_{L^2} \\
&\lesssim  \left\lVert \<x\>^3
     f\right\rVert_{L^2}\( \left\lVert 
     f\right\rVert_{H^3}^{1/3}\left\lVert 
     f\right\rVert_{L^2}^{2/3}+\left\lVert \<x\> \nabla^2
     f\right\rVert_{L^2}\). 
\end{align*}
On the other hand,
\begin{align}
 \left\lVert \<x\> \nabla^2
     f\right\rVert_{L^2}^2&= -\int_{\R^d} \nabla \overline f(x) \nabla
   \( \<x\>^2 \nabla^2 f(x)\)dx \notag \\
&\lesssim \left\lVert \<x\>\nabla f\right\rVert_{L^2} \left\lVert
   f\right\rVert_{H^2} + \left\lVert \<x\>^2 \nabla
   f\right\rVert_{L^2} \left\lVert 
   f\right\rVert_{H^3}\notag\\
&\lesssim \left\lVert 
   f\right\rVert_{H^3}\( \left\lVert \<x\>^3
     f\right\rVert_{L^2}^{1/3} \left\lVert  f\right\rVert_{L^2}^{2/3}
   + \left\lVert \<x\>^2 \nabla
   f\right\rVert_{L^2}\).\label{eq:dub}
\end{align}
We infer, for instance,
\begin{align*}
 \left\lVert \<x\>^2 \nabla f\right\rVert_{L^2}^2 &\lesssim
 \left\lVert \<x\>^3 f\right\rVert_{L^2}^{7/6}  \left\lVert 
     f\right\rVert_{L^2}^{1/3}\left\lVert 
   f\right\rVert_{H^3}^{1/2} + \left\lVert \<x\>^2 \nabla
   f\right\rVert_{L^2}^{1/2} 
\left\lVert \<x\>^3 f\right\rVert_{L^2}  \left\lVert 
   f\right\rVert_{H^3}^{1/2} \\
&\le C \left\lVert \<x\>^3 f\right\rVert_{L^2}^{7/6}  \left\lVert 
     f\right\rVert_{L^2}^{1/3}\left\lVert 
   f\right\rVert_{H^3}^{1/2} + \eps \left\lVert \<x\>^2 \nabla
   f\right\rVert_{L^2}^2  \\
&\quad+ C_\eps\(\left\lVert \<x\>^3
   f\right\rVert_{L^2}  \left\lVert  
   f\right\rVert_{H^3}^{1/2} \)^{4/3}, 
\end{align*}
where we have used Young's inequality, with $(4,4')=(4,4/3)$. Taking
$\eps<1$ yields \eqref{eq:inter6}, and \eqref{eq:inter7} then follows
from \eqref{eq:dub}.  

The proof of \eqref{eq:inter8} and \eqref{eq:inter9} is similar, and
we omit it. 
\end{proof}

\subsection*{Acknowledgments}

This work was supported by the French ANR projects:
  \emph{R.A.S}. (ANR-08-JCJC-0124-01); \emph{\'Etude qualitative des EDP}
(ANR-05-JCJC-0036); 
\emph{\'Equations de Gross-Pitaevski, d'Euler, et ph\'enom\`enes de
  concentration} 
(ANR-05-JCJC-51279); \emph{ONDNONLIN}; \emph{ControlFlux}. The authors
are grateful to the referees for their constructive comments.

\bibliographystyle{amsplain}
\bibliography{blowup}

\end{document}